\def\confversion{0}
\def\ifconf{\ifnum\confversion=1}
\def\ifnotconf{\ifnum\confversion=0}
\def\showauthornotes{0}
\def\showkeys{0}
\def\showdraftbox{1}
\definecolor{darkred}{rgb}{0.5,0,0}
\definecolor{darkgreen}{rgb}{0,0.35,0}
\definecolor{darkblue}{rgb}{0,0,0.55}
\newcommand{\nnote}[1]{{\sf\small\color{orange}{[Nikhil: #1]}}}
\newcommand{\nnote}[1]{}
\newcommand{\Authornote}[2]{}
\newcommand{\Authorcomment}[2]{}
\newcommand{\Authorfnote}[2]{}
\newtheorem{theorem}{Theorem}[section]
\newtheorem{observation}[theorem]{Observation}
\newtheorem{definition}[theorem]{Definition}
\newtheorem{lemma}[theorem]{Lemma}
\newtheorem{remark}[theorem]{Remark}
\newtheorem{proposition}[theorem]{Proposition}
\newtheorem{corollary}[theorem]{Corollary}
\newtheorem{claim}[theorem]{Claim}
\newtheorem{fact}[theorem]{Fact}
\theoremstyle{remark}
\newtheorem{algo}[theorem]{Algorithm}
\def\FullBox{\hbox{\vrule width 6pt height 6pt depth 0pt}}
\def\qedsketch{\ifmmode\Box\else{\unskip\nobreak\hfil
\penalty50\hskip1em\null\nobreak\hfil$\Box$
\parfillskip=0pt\finalhyphendemerits=0\endgraf}\fi}
\def\eps{\varepsilon}
\def\epsilon{\varepsilon}
\def\e{\epsilon}
\def\eps{\epsilon}
\def\cal{\mathcal}
\def\implies{\Rightarrow}
\newcommand{\cout}{\mathcal{C}_{\textnormal{out}}}
\newcommand{\cin}{\mathcal{C}_{\textnormal{in}}}
\newcommand{\cael}{\mathcal{C}_{\textnormal{AEL}}}
\newcommand{\dout}{\delta_{\textnormal{out}}}
\newcommand{\din}{\delta_{\textnormal{in}}}
\newcommand{\rout}{R_{\textnormal{out}}}
\newcommand{\rin}{R_{\textnormal{in}}}
\newcommand{\rael}{R_{\textnormal{AEL}}}
\newcommand{\sigout}{{\Sigma_{\textnormal{out}}}}
\newcommand{\sigin}{\Sigma_{\textnormal{in}}}
\newcommand{\Fq}{\ensuremath{\mathbb{F}_q}}
\newcommand{\Fqz}{\ensuremath{\mathbb{F}_{q_0}}}
\newcommand{\inabs}[1]{\left|#1\right|}
\newcommand{\inset}[1]{\left\{#1\right\}}
\newcommand{\inabset}[1]{\left|\{#1\}\right|}
\newcommand{\inbrak}[1]{\left(#1\right)}
\newcommand{\D}{\Delta}
\newcommand{\Dp}{\Delta}
\newcommand{\G}{\Gamma}
\newcommand{\type}{\mathsf{type}}
\newcommand{\ty}{\emph{type} }
\newcommand{\RIM}{\mathsf{RIM}}
\newcommand{\wt}{\mathsf{wt}}
\newcommand{\mM}{\mathbf{M}}
\newcommand{\mG}{\mathbf{G}}
\newcommand{\pdeg}{\ensuremath{\mathsf{\Phi}}}
\newcommand{\dist}{\ensuremath{\mathsf{Dist}}}
\newcommand{\imp}{\ensuremath{\mathsf{imp}}}
\newcommand{\rspn}{\ensuremath{\mathsf{row}\text{-}\mathsf{span}}}
\newcommand{\proj}{\ensuremath{\mathsf{proj}}}
\newcommand{\fl}{\ensuremath{\mathsf{fl}}}
\newcommand{\0}{\ensuremath{\mathbf{0}}}
\newcommand{\dd}{}
\newcommand{\db}{\ensuremath{D_\beta}}
\DeclareMathOperator{\Rob}{Rob}
\newcommand{\N}{{\mathbb{N}}}
\newcommand{\F}{{\mathbb F}}
\newcommand{\cC}{\mathcal{C}}
\newcommand{\cH}{\mathcal{H}}
\newcommand{\cE}{\mathcal{E}}
\newcommand{\cL}{\mathcal{L}}
\newcommand{\cM}{\mathcal{M}}
\newcommand{\cG}{\mathcal{G}}
\newcommand{\cP}{\mathcal{P}}
\newcommand{\cV}{\mathcal{V}}
\newcommand{\delzero}{\delta_{\textnormal{in}}}
\newcommand{\delone}{\delta_{\textnormal{out}}}
\def\ProbabilityRender#1#2{
  \@ifnextchar\bgroup%
  {\renderwithdist{#1}{#2}}
   {\singlervrender{#1}{#2}}
}
\def\singlervrender#1#2{%
   \ensuremath{\mathchoice
       {{#1}\left[ #2 \right]}
       {{#1}[ #2 ]}
       {{#1}[ #2 ]}
       {{#1}[ #2 ]}
   }
}
\def\renderwithdist#1#2#3{%
   \@ifnextchar\bgroup
   {\superfancyrender{#1}{#2}{#3}}
   {\ensuremath{\mathchoice
      {\underset{#2}{#1}\left[ #3 \right]}
      {{#1}_{#2}[ #3 ]}
      {{#1}_{#2}[ #3 ]}
      {{#1}_{#2}[ #3 ]}
     }
   }
}
\def\superfancyrender#1#2#3#4#5{
   \ensuremath{\mathchoice
      {\underset{#1}{{#1}}\left#4 #3 \right#5}
      {{#1}_{#2}#4 #3 #5}
      {{#1}_{#2}#4 #3 #5}
      {{#1}_{#2}#4 #3 #5}
   }
}
\newfont{\inhead}{eufm10 scaled\magstep1}
\newcommand{\poly}{{\mathrm{poly}}}
\DeclareMathOperator{\rank}{\operatorname {rank}}
\DeclareMathOperator*{\argmax}{\arg\!\max}
\DeclareSymbolFont{extraup}{U}{zavm}{m}{n}
\DeclareMathSymbol{\varheart}{\mathalpha}{extraup}{86}
\DeclareMathSymbol{\vardiamond}{\mathalpha}{extraup}{87}
\algrenewcommand{\algorithmicrequire}{\textbf{Input:}}
\algrenewcommand{\algorithmicensure}{\textbf{Output:}}
\begin{document}
\sloppy

\title{Probabilistic Guarantees to Explicit Constructions:\\ Local Properties of Linear Codes}

\author{
Fernando Granha Jeronimo\thanks{{\tt University of Illinois, Urbana-Champaign}. {\tt granha@illinois.edu}. }
\and
Nikhil Shagrithaya\thanks{{\tt University of Michigan, Ann Arbor}. {\tt nshagri@umich.edu}. }
}



\maketitle
\thispagestyle{empty}

\begin{abstract}
  We present a general framework for derandomizing random linear codes with respect to a broad class of properties, known as \emph{local properties}, which encompass several standard notions such as distance, list-decoding, list-recovery, and perfect hashing.  
  Our approach extends the classical Alon--Edmonds--Luby (AEL) construction through a modified formalism of local coordinate-wise linear (LCL) properties, introduced by Levi, Mosheiff, and Shagrithaya (2025). 
  The main theorem demonstrates that if random linear codes satisfy the complement of an LCL property $\cP$ with high probability, then one can construct explicit codes satisfying the complement of $\cP$ as well, with an enlarged yet constant alphabet size.
  This gives the first explicit constructions for list recovery, as well as special cases (e.g., list recovery with erasures, zero-error list recovery, perfect hash matrices), with parameters matching those of random linear codes.
  More broadly, our constructions realize the full range of parameters associated with these properties at the same level of optimality as in the random setting, thereby offering a systematic pathway from probabilistic guarantees to explicit codes that attain them. Furthermore, our derandomization of random linear codes also admits efficient (list) decoding via recently developed expander-based decoders.
\end{abstract}

\newpage

\ifnotconf
\pagenumbering{roman}
\tableofcontents
\clearpage
\fi

\pagenumbering{arabic}
\setcounter{page}{1}

\section{Introduction}
Error-correcting codes play an important role in numerous areas~\cite{GRS23}. In addition to their more immediate applications to protect data against errors in transmission and storage, they have found various uses in diverse fields such as complexity theory~\cite{NW94}, pseudorandomness~\cite{Vadhan12}, and cryptography~\cite{GoldreichL89,YZ24}. The quest for codes approaching optimal parameter trade-offs for a given property has been a central theme of coding theory. This is typically a two-fold quest. First, one needs to understand what are the optimal parameter trade-offs. This step is often established via an existential proof, commonly using randomness and the probabilistic method. This can be far from trivial in some cases and is established via innovative randomized techniques. Secondly, one proceeds to search for an explicit construction approaching the ideal parameter trade-offs. There are many reasons why an explicit construction is desirable or needed in an application over, say, a randomized one. For instance, certifying the minimum distance and decoding can be computationally hard, making it unfit for use. One then faces the following natural question.

\begin{center}
  How to explicitly construct codes having proved an existential result?
\end{center}

A common challenge is that an existential proof may shed little to no light on how to explicitly construct such codes, and it may take decades, new ideas, and great ingenuity for the discovery of a corresponding explicit construction. The gulf separating existential and explicit code construction results has been prevalent throughout the history of coding theory. 
Shannon’s seminal work~\cite{S48} established the existence of capacity-achieving codes through random constructions, but it was only many decades later that Arıkan introduced his breakthrough explicit construction of polar codes~\cite{Arikan08}. In the case of list decoding, the capacity theorem of Zyablov and Pinsker~\cite{ZP82} was followed, decades later, by the explicit construction of Guruswami and Rudra~\cite{GuruswamiR06} (which was inspired by~\cite{PV05}). Recently, the seemingly stronger notion of list-decoding capacity was shown to indeed imply\footnote{Under mild assumptions.} Shannon's capacity on symmetric channels~\cite{PSW25}. Likewise, from the classical existential Gilbert–Varshamov bound~\cite{G52,V57}, it took many decades until Ta-Shma~\cite{Ta-Shma17} obtained an explicit construction of binary (balanced) codes with near-optimal parameters. Despite the gulf between several existential and explicit results, we can try to remain hopeful and ask the ambitious question,

\begin{center}
   Is there a general procedure to convert existential code constructions into explicit ones?
\end{center}

Here, we show that this is indeed possible for a vast range of properties of random linear codes known as local properties, which include list decoding, list recovery, perfect hashing, and average pair-wise distance, among many others.
Random linear codes are widely used as a powerful yardstick for understanding parameter trade-offs of codes, often achieving the best possible trade-offs for many tasks.
Our results provide a framework to convert existential guarantees on local properties into explicit ones, while preserving all the parameters attained by random linear codes, at the cost of increased alphabet size.

\paragraph{Local-to-Global Phenomena.} A popular paradigm seen in several pseudorandom constructions is that of a local-to-global transfer of properties.
Generally, this involves the coupling of two entities.
The first is a constant-sized object possessing the property we desire, whose existence is guaranteed by probabilistic arguments and obtained via a brute-force search.
The second is an infinite family of objects (typically expander graphs) whose construction is known through previous results.
The novelty of such constructions often lies in identifying the appropriate manner of integrating the two objects so that the resulting construction inherits the desired property from the first object.
One of the earliest constructions employing such techniques is the work of Tanner~\cite{Tan81}, which details a construction of a family of error-correcting codes that involves integrating several copies of a single constant-sized code (having good rate-distance tradeoffs) with a bipartite graph having large girth.
Sipser and Spielman~\cite{SS96} modified this construction by substituting the large girth graph for an expander graph, and using its expansion properties to prove a lower bound on the distance of the final code.

In recent years, there has been a resurgence of constructions utilizing the local-to-global phenomenon.
Examples include quantum LDPC codes~\cite{PK22}, locally testable codes~\cite{DELL22, PK22}, unique neighbor expanders~\cite{AD24, Che25, HMMP24}, lossless vertex expanders~\cite{Gol24, HLMO25, HLMR25}, and erasure code ensembles~\cite{CCS25}.
Our framework builds upon the Alon–Edmonds–Luby (AEL) construction, a classical instance of the local-to-global paradigm. We provide a few details about this construction.

\paragraph{Alon-Edmonds-Luby (AEL) Construction.} The Alon-Edmonds-Luby (AEL) construction was first introduced by Alon, Edmonds, and Luby in \cite{AEL95} to construct codes with constant alphabet size that approached the Singleton bound.
The construction has three components: a constant-sized inner code having good minimum distance, found by a brute-force search, an explicit outer code having a sub-optimal rate-distance tradeoff, and a bipartite spectral expander graph.
The construction can be described in two steps: the outer code is first concatenated with the inner one, upon which the symbols of codewords from the concatenated code are permuted, in a manner prescribed by the expander, to produce codewords in the final code.
The expansion properties of the underlying graph are used to ``lift'' the minimum distance property of the inner code onto the final code.

Over the years, the AEL construction has been adapted and applied in numerous subsequent works. 
A recurring paradigm in these constructions is to employ a constant-sized inner code with strong parameters, 
combined with an outer code that may have weaker parameters, but is fully explicit.
We use the term \emph{AEL procedure} to refer to such constructions henceforth.
In \cite{GI02}, Guruswami and Indyk gave explicit, linear time encodable and decodable codes for unique decoding that approached the Singleton bound, by utilizing the AEL procedure.
In \cite{KMRS17}, Kopparty, Meir, Ron-Zewi, and Saraf, utilized it for constructions of locally testable codes and locally correctable codes.
It was also leveraged in the work of Kopparty, Ron-Zewi, Saraf, and Wootters~\cite{KRZSW23} to provide list-recoverable and list-decodable codes that matched the parameters achieved by Folded Reed-Solomon codes, while having constant alphabet size.
Very recently, it was utilized by Jeronimo, Mittal, Srivastava, and Tulsiani in \cite{JMST25} to give constructions of list-decodable codes that approached the Generalized Singleton bound over constant size alphabets.

\subsection{Our Results}

\paragraph{Local Properties.}
A local property, in the context of codes, is a property for which the existence of a constant number of codewords suffices as a ``witness'' to the code satisfying that property.
For example, the complement of $(\rho, L)$-list-decodability is a local property, as a set of $L+1$ codewords within a Hamming ball of relative radius $\rho$ serves as a witness for any code possessing the property.
For a locality parameter $L$ independent of the block length, a local property $\cP$ can be informally defined by a collection of (pairwise distinct) vector sets of size $L$.
A code is said to satisfy $\cP$ if it contains all vectors in a vector set from the collection corresponding to $\cP$.
Typically, our objective is to understand codes that satisfy the complement of local properties, that is, codes that avoid containing any vector set from the collection defined by $\cP$.
For instance, a $(\rho, L)$-list-decodable code must avoid containing all pairwise distinct vector sets of size $L+1$ that lie entirely within a Hamming ball of relative radius $\rho$.

The concept of local properties for codes first originated in the work of Mosheiff, Resch, Ron-Zewi, Silas, and Wootters  \cite{MRRSW20}, where they introduced the framework with the purpose of proving the existence of LDPC codes achieving list-decoding capacity.
The existence follows from a more general result; the first step consists of proving a threshold result for local properties achieved by random linear codes, followed by the establishment of a transfer type result, which states that random LDPC codes achieve the same parameters for all local properties as random linear codes.
The threshold result states that every local property has a threshold rate, above which random linear codes satisfy the property with exponentially high probability, and below which they do not.
The framework was employed by Guruswami, Li, Mosheiff, Resch, Silas, and Wootters in \cite{GLMRSW22} to provide lower bounds for list sizes for list-decoding and list-recovery, and also by Guruswami and Mosheiff in \cite{GM22} to prove that punctured low-bias codes achieve the same parameters as random linear codes, with respect to local properties.
Later, Guruswami, Mosheiff, Resch, Silas, and Wootters~\cite{GMRSW22} gave a local properties framework for random codes as well.

While providing a powerful framework to investigate list-decoding and list-recovery in the low alphabet regime, the precise formulation of local properties in \cite{MRRSW20} did not allow one to study local properties in the large alphabet regime.
There are two important random (linear) code families in this regime: random linear codes whose alphabet size is a large constant that is independent of the block length (but may depend on other parameters, such as gap to capacity), and random Reed-Solomon codes (whose alphabet size is at least the block length).
As a consequence, these code families could not be analyzed in the context of local properties.
Addressing this limitation required a new formulation tailored to the large alphabet regime, which was developed by Levi, Mosheiff, and Shagrithaya in \cite{LMS25}.
In this work, the authors establish a threshold result for local coordinate-wise linear (LCL) properties of large alphabet random linear codes, and leverage it to establish an equivalence between random Reed–Solomon codes and random linear codes, with respect to LCL properties.

As noted previously, we seek to understand codes that satisfy the complement of local properties.  
One approach to do so is through explicit constructions.  
Our main result demonstrates that it is possible to explicitly construct codes satisfying the complement of LCL properties, as long as the properties meet certain requirements.

\begin{theorem}[Informal, see \cref{cor:the-big-one}]\label{thm:inf-the-big-one}
    For any  LCL property $\cP$, there exists a suitable (linear) inner code, a bipartite expander, and an outer code such that the AEL procedure, when instantiated with these components, yields an explicit linear code $\cael$ that does not satisfy $\cP$, and whose rate is arbitrarily close to the threshold rate.
\end{theorem}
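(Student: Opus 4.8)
The plan is to match, arbitrarily closely, the threshold rate of $\cP$ by a carefully tuned instance of the AEL procedure, and then argue by contradiction: a brute-forced inner code will avoid a witness for $\cP$, and the distance of the outer code together with the sampling property of the expander will propagate this avoidance, essentially losslessly, to the whole construction.

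\textbf{Components.} Let $R^*$ be the threshold rate of $\cP$ for random linear codes over $\FF_q=\sigin$ given by the threshold theorem for LCL properties of~\cite{LMS25}, fix a target slack $\epsilon>0$, and pick a small ``wildcard budget'' $\eta>0$. The three ingredients are: (i) an \emph{inner} code $\cin$, a linear code over $\FF_q$ of large constant block length $d$ and rate $R^*-\epsilon/3$; since this rate lies below the threshold of the robust relaxation $\cP^{+}$ of $\cP$ obtained by declaring an $\eta$-fraction of coordinates to be wildcards (a valid choice once $\eta$ is small, as discussed below), a random such code fails $\cP^{+}$ with high probability, so a suitable $\cin$ is found by a constant-time brute-force search; (ii) an \emph{outer} code $\cout$, any fully explicit $\FF_q$-linear code over the constant alphabet $\FF_q^{\,\rin d}$ (for instance an algebraic-geometry code over a large enough constant field) of rate $\rout\ge 1-\epsilon/3$ and constant relative distance $\dout\ge\delta_0>0$; and (iii) an explicit bipartite near-Ramanujan graph $G$ of degree $d$ with normalized second eigenvalue $\lambda/d$ as small as desired. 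Running the AEL procedure on $(\cout,\cin,G)$ outputs an explicit $\FF_q$-linear code $\cael$ over the enlarged constant alphabet $\sigin^{\,d}$ of rate $\rael=\rout\cdot\rin\ge R^*-\epsilon$.

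\textbf{Transfer.} Assume toward a contradiction that $\cael$ satisfies $\cP$, witnessed by codewords arising from outer codewords $c_1,\dots,c_L\in\cout$. Each coordinate of $\cael$ is the bundle of $d$ inner-code symbols on the edges at a right vertex of $G$, and the witness condition, unfolded, says: at every right vertex, the $d$ incident edge-profiles are jointly ``forbidden over $\sigin^{\,d}$'', which one reads off as ``all but an $\eta_0$-fraction of these $d$ edge-profiles are forbidden patterns over $\sigin$''. Now invoke the expander as a sampler (exactly as in the AEL distance analysis, run per witness codeword): the bad edges — those whose profile is not forbidden — touch few right vertices, hence, up to a loss $O(\lambda/(d\,\delta_0))$ coming from the expander mixing lemma, touch few edges at all but a tiny fraction of left vertices. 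Choosing $\lambda/d$ small enough that $\eta_0+O(\lambda/(d\,\delta_0))\le\eta$, this produces, at a typical outer coordinate $i$ where $c_1[i],\dots,c_L[i]$ are pairwise distinct, $L$ distinct codewords of $\cin$ whose empirical profile lies in the forbidden set of $\cP^{+}$ — contradicting the choice of $\cin$. Once $\cael$ is shown to avoid $\cP$, it only remains to note: the threshold of $\cP^{+}$ tends to $R^*$ as $\eta\to0$ (a continuity property within the LCL formalism), which legitimizes the choice of $\eta$ and makes $\epsilon$ genuinely arbitrary; $\cael$ is explicit since $\cout$ and $G$ are and $\cin$ is found in constant time; and taking $\cout$ to also be list-recoverable makes the expander-based AEL decoders applicable, yielding the efficient-decoding addendum.

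\textbf{Main obstacle.} The delicate point is the step ``at a typical outer coordinate $i$ where the $c_\ell[i]$ are pairwise distinct'': because $\dout\le 1-\rout$ is forced to be small when $\rout$ is close to $1$, it may happen that \emph{every} outer coordinate carries a collision between two of the witness codewords, so one cannot simply pick a collision-free coordinate. The remedy should be to bucket the outer coordinates by the partition of $[L]$ recording which $c_\ell[i]$ coincide, pass to the densest bucket, and replace $\cP^{+}$ by the ``collapsed'' LCL property obtained by identifying witness codewords within each part of that partition — but then one needs the collapsed properties to still have threshold rate at least $R^*-\epsilon/3$, which is a structural closure feature of the LCL class. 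More broadly, the real work is to show that the class of LCL properties (and the attendant threshold rates) is well behaved under exactly the three operations used above — robust relaxation, bundling into a larger alphabet, and collapsing along a partition of the witness set — and it is here that the machinery of~\cite{LMS25} is presumably leaned on most heavily; the rest is a careful but essentially routine instantiation of the AEL template.
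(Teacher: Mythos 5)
Your high-level architecture matches the paper's: brute-forced small inner code, weak outer code of rate $\approx 1-\eps$, the AEL expander used as a sampler, and a contradiction argument that propagates the witness for $\cP$ down to the inner code. You also correctly identify the collision problem --- that the inner projections $\phi(c_1[i]),\dots,\phi(c_L[i])$ need not be pairwise distinct at any outer coordinate when $\dout$ is small --- as the real difficulty. Two of your three postulated ``closure operations'' (robust relaxation via a wildcard budget, and bundling into the extension alphabet $\F_Q=\F_{q^d}$) are indeed exactly what the paper does via $\Rob_\D(\cdot)$ and via the device of defining local profiles by $\F_q$-matrices whose kernels make sense simultaneously over $\F_q$ and $\F_Q$.

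However, your proposed resolution of the collision problem --- bucket outer coordinates by the partition of $[L]$ that records which $c_j[i]$ coincide, pass to the densest bucket, and ``collapse'' the property along that partition, invoking an unproved ``structural closure feature'' --- is not what the paper does, and it has a real gap. The collision partition is coordinate-dependent; there are up to $B_L$ (Bell number) many, and a priori no single one is dense enough. Worse, even if you could fix one partition, the ``collapsed'' property's threshold rate need not stay close to $R_\cP$: that is exactly the non-trivial claim you would need, and nothing in the LMS framework hands it to you for free. The paper instead introduces a single \emph{coordinate-independent} collapse: the canonical non-distinguishing subspace $W_\cV\in\cL(\F_q^L)\setminus\cL_\dist(\F_q^L)$ maximizing the potential (\cref{clm:canonical-w}) and the quotient map $\psi_\cV$ with $\ker\psi_\cV=W_\cV$, giving the \emph{implied local profile description} $\cV^\imp$. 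Crucially this also shifts what the inner code must avoid: not ``matrices with pairwise distinct columns satisfying $\cV$'' but ``\emph{non-zero} matrices satisfying some $\cV^\imp_\D\in\Rob_\D(\cV^\imp)$.'' Pairwise distinctness of the columns of the outer-level witness $A$, together with $W_\cV\notin\cL_\dist$, forces at least one row of $A^\proj_\psi$ to be non-zero; outer distance then gives $\ge\dout N$ non-zero local blocks, and the sampler gives $>(1-\dout)N$ blocks satisfying a robust implied profile, so one block does both. The quantitative backbone --- $\pdeg(\cV^\imp_\D,U',R)\le -\eps+\D L$ for every non-zero $U'$ when $R\le R_\cV-\eps$ (\cref{lem:pdeg-rob-implied-ub}) --- is the piece that legitimizes the brute-force search for $\cin$, and it is exactly the ``closure'' statement you flagged as the real work but left open. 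Absent this implied-profile mechanism, your argument does not close.
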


A few remarks about the result are in order.
\begin{remark}[Optimality, LDPC property]
    We observe that our explicit codes achieve parity with random linear codes in terms of all parameters associated with the LCL property.
    Moreover, our explicit constructions are LDPC codes: meaning that each row of the parity check matrix only has constant number of non-zero entries.
    This property has previously been useful for designing very fast encoding and decoding algorithms.
\end{remark}

\begin{remark}[Combining LCL properties]\label{rem:comine-lcl-prop}
    The nature of LCL properties enables one to ``combine'' a list of LCL properties $\cP_1, \cP_2,\ldots$ to create a single LCL property $\cP$.
    Upon applying \cref{thm:inf-the-big-one} to $\cP$, we obtain an explicit construction that does not satisfy $\cP_1, \cP_2,\ldots$ simultaneously.
    We refer the reader to \cref{cor:combine-lcl-prop} for more details.
\end{remark}

\begin{remark}[Tradeoffs]
    Two important tradeoffs arise in our constructions: alphabet size and the underlying field of linearity.
    \begin{itemize}
        \item Increased alphabet size: Our constructions require an increased alphabet size, with the increase being exponential in ${q_0}^{\poly(L/\eps)}$, where $q_0$ is the size of the field whose elements are used to specify the constraints of the property, $L$ is the locality, and $\eps$ is the gap between the rate of the explicit construction and the threshold rate.
        This phenomenon is characteristic of several constructions based on the AEL procedure in the literature, and our setting is no exception.
        \item Linearity over a subfield: While our codes are defined over a larger field, they remain linear only with respect to a subfield---namely, the field over which the inner code is defined.
    This limitation, however, does not pose difficulties for most applications.
    \end{itemize}

\end{remark}

\begin{remark}[Local Properties of Random Reed-Solomon Codes]
    \cite{LMS25} also proved that random Reed-Solomon codes and random linear codes are equivalent with respect to LCL properties: that is, they have the same threshold rates for all LCL properties.
    This implies that our constructions also match the parameters attained by random Reed-Solomon codes for all LCL properties, with the additional property of having constant alphabet size. 
\end{remark}

\paragraph{List Decoding, List Recovery.}
We turn to discuss two important local properties studied in the literature: list-decoding and list-recovery.
A code is $(\rho, L)$-list-decodable if for every vector $y$, the number of codewords that have (relative) Hamming distance less than $\rho$ from $y$ is at most $L$.
A code $\cC \subseteq \Sigma^n$ is $(\rho, \ell, L)$-list-recoverable if for input lists $S_1,\ldots,S_n$ satisfying $\inabs{S_i} \le \ell$ for all $i \in [n]$, we have that the output list size $L$ is at most
\[
    \inabset{c \in \cC \mid \inabset{i \in [n] \mid c[i] \in S_i} \ge (1-\rho)n} \le L.
\]
Clearly, $(\rho, 1, L)$-list recoverability is equivalent to $(\rho, L)$-list-decodability. One can think of these notions as generalizations of the notion of minimum distance, which requires every pair of distinct codewords to be far from one another.

List-decodable and list-recoverable codes have found uses in numerous areas of theoretical computer science, including pseudorandomness \cite{Tre99, GUV09, LP20}, compressed sensing \cite{NPR12}, and algorithms \cite{LNNT19, DW22}.
Their broad applicability has motivated the development of several explicit constructions spanning a wide range of parameter regimes.
For example, list-recoverable codes have been used in constructions of list-decodable codes \cite{GI03, GR06, KRZSW23}, and locally decodable codes \cite{HRW20}.
For a comprehensive overview of applications of list-recoverable codes, we refer the reader to the recent survey by Resch and Venkitesh~\cite{RV25}.

On the other side of the coin, there has been a significant line of work investigating existential properties of linear codes through the probabilistic method.
The linear structure of such codes means that the codewords of a random linear code are not mutually independent.
This dependence introduced substantial obstacles in analyzing their list sizes for list decoding.
For instance, the probabilistic argument of Zyablov and Pinsker~\cite{ZP82} provided list size upper bounds of $2^{O(1/\eps)}$ and $O(1/\eps)$ for random linear codes and random codes respectively, where $\eps$ is the gap to capacity.
The exponential gap in list sizes was closed  by Guruswami, Håstad, Sudan, and Zuckerman~\cite{GHSZ02}, where they showed that random linear codes indeed achieve a list size of $O(1/\eps)$, by means of a clever potential method argument.
Their result only held in expectation, however, and not with very high probability; this was subsequently resolved by Guruswami, Håstad, and Kopparty~\cite{GHK11}.

In the large alphabet regime, the works \cite{Sud97, GS98} proved that full length Reed-Solomon codes are decodable upto the Johnson bound.
However, it is known that the Johnson bound is not optimal, and an exciting line of work \cite{ST20, GLS24, BGM23, GZ23, AGL24} showed that randomly punctured Reed-Solomon codes approached the Generalized Singleton Bound \cite{ST20}, which is a tight bound on the radius $\rho$ of list-decodable codes having rate $R$ and list size $L$.
The bound proves that
\[
    \rho \le \frac{L}{L+1}(1-R).
\]
In \cite{AGL24}, Alrabiah, Guruswami, and Li also proved that large alphabet random linear codes (with alphabet size $2^{O(1/\eps^2)}$, where $\eps$ is the gap to capacity) approached the Generalized Singleton bound.

In the case of list-recovery, the list-recovery capacity theorem (see \cite{Res20}, Proposition 2.4.14 for a proof) establishes the existence of codes that are $(\rho, \ell, L)$-list-recoverable, with
\[
    \rho \ge 1-R- \eps,
\]
and $L \le O(\ell/\eps)$, as long as $\inabs{\Sigma} \ge \exp(\Omega(\log \ell / \eps))$.
However, this result is for random codes, and it was unclear whether random linear codes could achieve the same  output list size.
\cite{LMS25} proved that this is not the case: the output list size is lower bounded by $L \ge \ell^{\Omega(R/\eps)}$, and this bound was subsequently shown to hold for all linear codes by Li and Shagrithaya in \cite{LS25}.

A long line of works (e.g.,~\cite{RW18}, \cite{LP20}, \cite{GLS24}, \cite{LS25}) have studied list-recovery in the large alphabet regime.
\cite{GLS24} showed that random Reed-Solomon codes are $(1-R-\eps, \ell, O(\ell/\eps))$-list recoverable codes with rate $\Omega(\eps/(\sqrt{\ell}\log (1/\eps)))$.
In \cite{LS25}, the authors showed for any rate $R$, random linear codes are $(1-R-\eps, \ell, L)$-list recoverable, where $L \le \inbrak{\frac{\ell}{\eps}}^{O\inbrak{\frac{\ell}{\eps}}}$.
Very recently, the work of Brakensiek, Chen, Dhar, and Zhang~\cite{BCDZ25b} improved the upper bound on the output list size to $\inbrak{\frac{\ell}{R+\eps}}^{O(R/\eps)}$.
In their concurrent work~\cite{BCDZ25a}, they were able to transfer this result to explicit constructions of Folded Reed-Solomon codes and univariate Multiplicity codes as well.

\paragraph{Explicit constructions of list-recoverable, list-decodable codes.} We discuss results exhibiting explicit constructions of list-decoding and list-recoverable codes in the large alphabet regime.
Parvaresh and Vardy \cite{PV05} introduced the first family of error-correcting codes that was provably list-decodable beyond the Johnson bound.
This was improved upon by Guruswami and Rudra in \cite{GR06}, where they showed that Folded Reed-Solomon codes achieved list-decoding capacity, with polynomial list size.
Further improvements in the analysis of the list size in a fruitful line of works \cite{KRZSW23, Tam24, Sri25, CZ25} proved that the list size matches the one implied by the Generalized Singleton bound.
For list-recovery, the works of \cite{KRZSW23} and \cite{Tam24} showed that Folded Reed-Solomon codes are $(1-R-\eps, \ell, L)$-list-recoverable with output list sizes upper bounded by $(\ell/\eps)^{O(\ell/\eps)}$ and $(\ell/\eps)^{(\log \ell / \eps)}$, respectively.

In the constant-alphabet regime, existing works on capacity-achieving list-decodable and list-recoverable codes fall into two main categories.
Both approaches employ the AEL procedure, where the inner code is a constant-alphabet list-decodable or list-recoverable code obtained via brute force.
The distinction lies in the choice of the outer code.
The first category, exemplified by \cite{KRZSW23}, employs Folded Reed–Solomon codes, which are known to possess strong list-decoding and list-recovery guarantees.
In contrast, the second category relies on outer codes with significantly weaker parameters, requiring only rate $1-\varepsilon$ and distance $\varepsilon^3$.
As a result, the analysis in the latter case is more involved, but it yields improved bounds on the list sizes.
Examples of works belonging to the second category are \cite{JMST25} and \cite{ST25}.

\begin{table}[h!]
\centering
\renewcommand{\arraystretch}{1.5}
\begin{tabular}{|>{\centering\arraybackslash}m{4.3cm}|
                    >{\centering\arraybackslash}m{1.9cm}| m{3.6cm} | m{3.6cm} |}
 \hline
 Work & Radius & Output List Size & Alphabet Size \\ 
 \hline\hline
 \cite{KRZSW23}~(Theorem 6.7), \cite{Tam24}~(Theorem 4.5) & $(1-R-\eps)$ & $\left(\ell/\eps\right)^{\left(\frac{\ell}{\eps}\right)^2 \cdot \log \left(\frac{\ell}{\eps}\right)}$ & $\ell^{O(1/\eps^4)}$ \\ 
 \hline
  \cite{Tam24} & $(1-R-\eps)$ & $(\ell/\eps)^{O(\log \ell/\eps)}$ & $(n\ell/\eps^2)^{O(\ell/\eps^2)}$ \\ 
 \hline
 \cite{ST25} & $(1-R-\eps)$ & $\exp(\exp((\ell/\eps) \log(\ell/\eps)))$ & $\exp(\exp((\ell/\eps) \log(\ell/\eps)))$ \\ 
 \hline
 \cite{BCDZ25a} & $(1-R-\eps)$ & $(\ell/(R+\eps))^{O(R/\eps)}$ & $(n\ell/\eps^2)^{O(\ell/\eps^2)}$ \\
 \hline
 Our Work & $(1-R-\eps)$ & $(\ell/(R+\eps))^{O(R/\eps)}$ & $\exp\bigl((\ell/\eps)^{(\ell/\eps)^{(\ell/\eps)}}\bigr)$ \\ 
 \hline
\end{tabular}
\caption{Parameters for recent explicit constructions of list-recoverable codes at capacity. The second and fourth entries hold for both Folded Reed-Solomon codes and Univariate Multiplicity codes.}
\label{table:1}
\end{table}

\cref{table:1} above highlights the parameters achieved by recent explicit constructions for list-recoverable codes at capacity.
We also refer to Table 1 in \cite{BCDZ25b} for parameters achieved by recent randomized constructions.
Comparing our result to two works: first, to the construction of \cite{KRZSW23} (Theorem 6.7) combined with Tamo's analysis of the output list size of Folded Reed-Solomon codes~\cite{Tam24} (Theorem 4.5), and second, to \cite{ST25}, we note that our result attains smaller output list sizes at the cost of increased alphabet size.
Compared with~\cite{BCDZ25a}, both results achieve the same output list size bounds, but their alphabet size grows polynomially with the block length, whereas ours is constant.

All results mentioned below follow from \cref{thm:inf-the-big-one} by specializing to the appropriate list recovery variant.
\begin{theorem}[Informal, see \cref{cor:cap-ach-list-rec}, \cref{cor:list-rec-params}]\label{thm:inf-cap-ach-list-rec}
There exist explicit constructions of linear codes of rate $R-2\eps$ that are $(1-R-\eps, \ell, L=L_{R, \eps, \ell})$-list recoverable, where $L_{R, \eps, \ell}$ denotes the smallest output list size attained by random linear codes of rate $R-\eps$ that are list recoverable with radius $(1-R-\eps)$ and input list size $\ell$.
The codes have an alphabet size that is at most $\exp((L/\eps)^{O (L)})$.
\end{theorem}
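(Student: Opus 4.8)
The plan is to derive this statement (and \cref{cor:cap-ach-list-rec}, \cref{cor:list-rec-params}) from the general transfer theorem \cref{cor:the-big-one} by exhibiting (the complement of) list recovery as a reasonable LCL property. Fix $\ell$, $L$, and the radius $\rho = 1-R-\eps$, and let $\cP = \cP_{\ell,L,\rho}$ be the property whose witnesses are $(L+1)$-tuples of pairwise distinct vectors $c_0,\dots,c_L$ for which there exist lists $S_1,\dots,S_n$ with $\inabs{S_i}\le\ell$ and $\inabset{i\in[n] : c_j[i]\in S_i}\ge(1-\rho)n$ for all $j$; a code satisfies $\cP$ exactly when it fails to be $(\rho,\ell,L)$-list recoverable. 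For a \emph{fixed} configuration the optimal lists just collect the $\ell$ most frequent symbols in each column, so membership in $\cP$ depends only on the multiset of column equality patterns; hence $\cP$ is permutation-invariant, has locality $L+1$, and — translating one codeword to zero using linearity — is coordinate-wise linear. The first real task is to check that $\cP$ (and the analogous properties for the variants) satisfies the precise ``reasonableness'' hypotheses of \cite{LMS25} under which \cref{cor:the-big-one} applies.

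Next I would locate the threshold. By the threshold theorem of \cite{LMS25}, $\cP_{\ell,L,\rho}$ has a threshold rate $R^\ast(\ell,L,\rho)$ below which random linear codes are $(\rho,\ell,L)$-list recoverable with high probability and above which they are not. By definition $L_{R,\eps,\ell}$ is the least $L$ for which rate-$(R-\eps)$ random linear codes are $(1-R-\eps,\ell,L)$-list recoverable with high probability; this is finite because $L_{R,\eps,\ell}\le(\ell/\eps)^{O(\ell/\eps)}$ by \cite{LS25}. With $L=L_{R,\eps,\ell}$ this gives $R-\eps\le R^\ast(\ell,L,1-R-\eps)$, and then \cref{cor:the-big-one} applied to $\cP_{\ell,L_{R,\eps,\ell},1-R-\eps}$ produces an explicit linear code $\cael$ of rate arbitrarily close to $R^\ast$ that does not satisfy $\cP$, i.e.\ is $(1-R-\eps,\ell,L_{R,\eps,\ell})$-list recoverable; taking the rate within $\eps$ of $R^\ast\ge R-\eps$ yields rate at least $R-2\eps$.

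To pin down the alphabet, I would unwind the AEL instantiation that \cref{cor:the-big-one} produces for this $\cP$: the inner code is a constant-block-length list-recoverable linear code whose length $d$ is chosen large enough that the spectral loss of the lifting step is absorbed by the robustness margin demanded by the general theorem, and whose alphabet is controlled by the list-recovery capacity theorem (\cite{Res20}); since the final alphabet has size $\inabs{\Sigma_{\mathrm{in}}}^{d}$, carrying out the estimate with $L=L_{R,\eps,\ell}$ gives $\inabs{\Sigma}\le\exp\inbrak{(L/\eps)^{O(L)}}$. The variants — list recovery with erasures, zero-error list recovery, and perfect hash matrices — follow identically, replacing $\cP$ by the corresponding LCL property.

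The main obstacle is not the reduction but the two verification/bookkeeping points above. First, one must confirm that list recovery and its variants genuinely meet the technical requirements of \cite{LMS25}/\cref{cor:the-big-one}: in particular the brute-force inner code must satisfy the complement of $\cP$ with a \emph{quantitative} slack (a distance- or robustness-type margin), not merely barely, and a margin-respecting inner code of the right rate must still exist by the probabilistic method for this specific $\cP$. Second, the place where the exponent depends on the \emph{locality} $L$ rather than on $\ell$ directly is exactly in converting the generic alphabet bound from \cref{cor:the-big-one} into the clean $\exp((L/\eps)^{O(L)})$ form — through both the inner block length $d$ and the robustness margin — and pinning down those dependencies is the most delicate part of the argument.
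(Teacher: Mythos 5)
Your proposal follows the same route as the paper: recognize (the complement of) list recovery as a reasonable LCL property, apply the transfer theorem \cref{cor:the-big-one}, instantiate at the threshold rate of the property for random linear codes, and unwind the alphabet bound from the degree $d$ and inner field size $q$. This is exactly what the paper does in \cref{clm:list-rec-reas}, \cref{cor:list-rec-params}, and \cref{cor:cap-ach-list-rec}.

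One correction worth flagging. You write the witness condition as ``$\inabset{i\in[n]: c_j[i]\in S_i}\ge(1-\rho)n$ for all $j$'' (the worst-case, per-codeword version), but then immediately reason about choosing the $\ell$ most frequent symbols per column, which only certifies the \emph{average-radius} condition $\frac{1}{L+1}\sum_j \inabset{i: c_j[i]\in S_i}\ge (1-\rho)n$. The paper formulates the LCL property via the average-radius version (\cref{def:list-rec-eras}), because that is precisely what a single local profile description encodes cleanly: the constraint \cref{eq:list-rec-sat-3} is a single linear inequality on $\sum_{K,P}|K|\,f_{K,P}$, which is the average agreement. The paper then notes that average-radius list recoverability implies the worst-case version, so no strength is lost. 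You should adopt the average-radius definition throughout; otherwise the local profile description for the worst-case property requires $L+1$ separate linear inequalities on the fractions, which is not wrong but is not what the paper proves, and the resulting threshold rate differs.

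The two ``bookkeeping'' concerns you anticipate at the end are exactly the content of the paper's proofs. The quantitative-slack concern is resolved by the robust local profile descriptions $\Rob_\D(\cV^\imp)$ and \cref{fct:good-in-code}/\cref{lem:good-in-code-brute-force} (the inner code is chosen via brute force to avoid all $\D$-robust implied profiles with $\D=\eps/2L$). The alphabet-dependence concern is resolved in \cref{clm:list-rec-reas}, which computes $T_\cP\le(\ell+1)^{L+1}$ and $\kappa_q(\cP)=\log_q(\ell+1)^{L+1}$, feeding into $q\le(\ell+1)^{8(L+1)/\eps}$ and $d=O(L^2(\ell+1)^{3(L+1)}/\eps^5)$ in \cref{cor:list-rec-params}; since $\ell\le L$, this gives $Q=q^d\le\exp((L/\eps)^{O(L)})$ as claimed.
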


The nature of our result ensures that the output list sizes of our construction exactly match those attained by random linear codes. Consequently, any improvement establishing a tighter upper bound on the list sizes of random linear codes immediately carries over to our construction. In contrast to \cite{KRZSW23}, which relies on an outer code with strong list-size guarantees---a potential bottleneck for future constructions which use this method---our approach requires no such assumption. Indeed, we only assume that the outer code has rate $1-\eps$ and distance $\eps^3$.

Instantiating the codes in \cref{thm:inf-cap-ach-list-rec} with the upper bound on output list sizes for random linear codes, from the recent result of \cite{BCDZ25b}, we get the following corollary.
\begin{corollary}[Informal, follows from \cref{thm:inf-cap-ach-list-rec}]
    There exist explicit constructions of $(1-R-\eps, \ell, (\ell/(R+\eps))^{O(R/\eps)})$-list recoverable codes with rate $R-2\eps$, and alphabet size at most $\exp\bigl((\ell/\eps)^{(\ell/\eps)^{(\ell/\eps)}}\bigr)$.
\end{corollary}

The generality of our main result also yields explicit constructions for related notions such as zero-error list recovery and erasure list recovery. Zero-error list recovery is a special case of list recovery in which the decoding radius is zero. Such codes have found applications in the design of data structures for the heavy hitters problem~\cite{DW22}. In the case of erasure list recovery, some of the input lists may contain only the blank symbol, and the objective is to minimize the number of codewords that remain consistent with the non-blank input lists.

\begin{theorem}[Informal, see \cref{cor:exp-zero-err-list-rec}]\label{thm:inf-exp-zero-err-list-rec}
    There exist explicit constructions of linear codes of rate $R-2\eps$ that are $(\ell, L=L_{R, \eps, \ell})$-zero error list-recoverable, where $L_{R, \eps, \ell}$ denotes the smallest output list size attained by random linear codes of rate $R-\eps$ that are zero error list-recoverable with input list size $\ell$.
\end{theorem}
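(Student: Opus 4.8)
The plan is to derive the statement as a direct instantiation of \cref{thm:inf-the-big-one} (whose formal version is \cref{cor:the-big-one}), so the only real work is to exhibit zero-error list recovery as the complement of a ``reasonable'' LCL property and then to read off the parameters. Fix $R$, $\eps$, $\ell$ and set $L := L_{R, \eps, \ell}$. Consider the property $\cP$ of a code $\cC \subseteq \Sigma^n$ being \emph{not} $(\ell, L)$-zero-error list recoverable: a witness for $\cP$ is a choice of $L+1$ pairwise distinct codewords $c_0, \dots, c_L$ such that, writing $M$ for the $(L+1) \times n$ matrix with these as its rows, every column $M^{(i)} \in \Sigma^{L+1}$ takes at most $\ell$ distinct values (the associated input lists being $S_i := \{ M^{(i)}_0, \dots, M^{(i)}_L \}$). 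This is visibly a local property of locality $L+1$: it is specified by the collection of ``allowed column types'' $T = \{ v \in \Sigma^{L+1} : |\{v_0, \dots, v_L\}| \le \ell \}$, and $\cC$ satisfies $\cP$ precisely when it contains $L+1$ distinct codewords all of whose columns lie in $T$. Moreover $T$ is invariant under the coordinate-wise action of $\F_q^*$ --- indeed under any permutation of the symbol alphabet --- since scaling a column by a nonzero constant does not change its number of distinct entries; hence $\cP$ is an LCL property in the sense of \cite{LMS25}. It is also invariant under permutations of the $n$ coordinates and is non-degenerate whenever $1 \le \ell < |\Sigma|$, so it is ``reasonable'' in the sense required by \cref{cor:the-big-one}. (Among the list-recovery variants, the $\rho = 0$ case is the simplest to verify: there is no fractional agreement threshold, so every coordinate of a witness must be of an allowed type.)

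With $\cP$ identified, the threshold theorem of \cite{LMS25} assigns it a threshold rate $R^*(\cP)$. By the very definition of $L = L_{R, \eps, \ell}$ as the smallest output list size for which random linear codes of rate $R - \eps$ are $(\ell, L)$-zero-error list recoverable with high probability, random linear codes of rate $R - \eps$ satisfy $\overline{\cP}$ with high probability; since, below the threshold, random linear codes satisfy $\overline{\cP}$ and above it they do not, this forces $R^*(\cP) \ge R - \eps$. Now apply \cref{thm:inf-the-big-one}: there is a constant-size $\F_q$-linear inner code (found by brute force, using that random linear codes of the appropriate large-constant block length satisfy $\overline{\cP}$), a bipartite spectral expander, and an explicit outer code of rate $1 - \eps$ and distance $\eps^3$, such that instantiating the AEL procedure with these components yields an explicit code $\cael$ that does not satisfy $\cP$ and whose rate is arbitrarily close to $R^*(\cP)$; taking the slack to be $\eps$ gives rate at least $R^*(\cP) - \eps \ge R - 2\eps$. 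Since ``$\cael$ does not satisfy $\cP$'' means exactly that $\cael$ is $(\ell, L)$-zero-error list recoverable with $L = L_{R, \eps, \ell}$, this is the desired construction; the alphabet size is the one output by the AEL procedure (an exponential blow-up over the inner alphabet), and $\cael$ is linear over the inner field $\F_q$.

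I expect the only delicate point to be guaranteeing that the output list size of $\cael$ is exactly $L_{R, \eps, \ell}$ and not something larger. A priori two steps could cause a loss: (i) the brute-force inner code lives at a \emph{constant} block length, so one needs the LCL threshold statement to behave well at finite (large-constant) length, without any blow-up in the admissible list size beyond lower-order terms; and (ii) the expander-based ``lifting'' step must transfer the zero-error list-recovery guarantee from the inner code to $\cael$ without degrading the locality parameter. Both of these are precisely what \cref{cor:the-big-one} is built to provide --- the AEL analysis for LCL properties is arranged so that the final code inherits the complement of $\cP$ with the same locality --- so for this corollary they are invoked as a black box and no argument beyond the verification above is required. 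The companion statements for erasure list recovery and for $\rho > 0$ list recovery follow in the same way, replacing $T$ by the allowed-type collection appropriate to that variant.
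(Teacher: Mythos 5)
Your high-level strategy matches the paper's exactly: identify the complement of $(\ell,L)$-zero-error list recovery as a reasonable LCL property $\cP$, use the threshold theorem to relate $L_{R,\eps,\ell}$ to the threshold rate $R_\cP$, and then invoke \cref{cor:the-big-one} with the appropriate slack. The paper's \cref{cor:exp-zero-err-list-rec} is derived as the $\rho=\sigma=0$ specialization of \cref{cor:list-rec-params}, which in turn rests on \cref{clm:list-rec-reas}. So the route is the same.

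There is, however, a genuine gap in the step you call ``the only real work,'' namely verifying that $\cP$ is a reasonable LCL property. Your justification that $\cP$ is LCL is that the allowed column-type set $T = \{v \in \Sigma^{L+1} : |\{v_0,\dots,v_L\}| \le \ell\}$ is invariant under the coordinate-wise action of $\F_q^*$; but this is neither the definition of LCL nor does it imply it. (Scalar invariance does not even imply that $T$ is a union of linear subspaces: e.g.\ the complement $\{v : \text{all } v_i \text{ equal}\}$ is itself a subspace, and its complement is $\F_q^*$-invariant but is not a union of subspaces because it does not contain $\0$.) What LCL actually requires --- and what \cref{clm:list-rec-reas} supplies --- is an explicit decomposition of $T$ as a union of kernels of matrices $\mM(K,P)\in\F_q^{(L+1)\times(L+1)}$, one for each choice of a subset $K\subseteq[L+1]$ and a partition $P$ of $K$ into at most $\ell$ parts (for $\rho=\sigma=0$ the constraint~\eqref{eq:list-rec-sat-3} effectively forces $K=[L+1]$, so only partitions of $[L+1]$ into $\le\ell$ parts matter), together with a specification of the admissible frequency vectors $(f_{K,P})$ and $f_\0$ subject to \cref{eq:list-rec-sat-1,eq:list-rec-sat-2,eq:list-rec-sat-3,eq:list-rec-sat-4}. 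Similarly, your claim that ``invariant under coordinate permutations and non-degenerate $\Rightarrow$ reasonable'' does not match the paper's definition of \emph{reasonable}, which is a quantitative bound $\sum_{\cV\in\cP}|\cV_n|\le q^{\kappa_q(\cP)n}$ with $\kappa_q(\cP)\to 0$ as $q\to\infty$; \cref{clm:list-rec-reas} computes $\kappa_q(\cP)=\log_q(\ell+1)^{L+1}$ and $T_\cP\le(\ell+1)^{L+1}$, which is essential both for establishing reasonableness and for reading off the alphabet size $q$ and degree $d$ in the final construction. Without this explicit construction and count, the invocation of \cref{cor:the-big-one} is unsupported, and the alphabet-size bound you cite as ``the one output by the AEL procedure'' cannot actually be computed. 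The part you flagged as ``delicate'' (preserving the list size through AEL) is in fact handled automatically by \cref{thm:the-big-one}; the real gap is the verification you treated as ``visible.''
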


\begin{theorem}[Informal, see \cref{cor:exp-eras-list-rec}]\label{thm:inf-exp-eras-list-rec}
    There exist explicit constructions of linear codes of rate $R-2\eps$ that are $(\sigma, \ell, L=L_{R, \sigma, \eps, \ell})$-erasure list-recoverable, where $L_{R, \sigma, \eps, \ell}$ denotes the smallest output list size attained by random linear codes of rate $R-\eps$ that are erasure list-recoverable with erasure fraction $\sigma$, and input list size $\ell$.
\end{theorem}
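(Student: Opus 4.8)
The plan is to realize the complement of $(\sigma,\ell,L)$-erasure list-recoverability as a reasonable LCL property and then invoke \cref{thm:inf-the-big-one}. First I would set up the property. Fix $\sigma$, $\ell$, and a target output list size $L$, and let $\cP=\cP_{\sigma,\ell,L}$ be the property whose witnesses are collections of $L+1$ pairwise distinct codewords $c_0,\dots,c_L$ for which there is an erasure set $E\subseteq[n]$ of size at most $\sigma n$ and lists $(S_i)_{i\notin E}$ with $|S_i|\le\ell$ such that $c_j[i]\in S_i$ for all $j\in\{0,\dots,L\}$ and all $i\notin E$. A code fails to be $(\sigma,\ell,L)$-erasure list-recoverable precisely when it contains such a witness, i.e.\ exactly when it satisfies $\cP$; equivalently, a code that avoids $\cP$ is $(\sigma,\ell,L)$-erasure list-recoverable. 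This $\cP$ is permutation-invariant (permuting coordinates, and correspondingly the lists $S_i$ and the set $E$, preserves membership), has locality $L+1$ independent of $n$, and is coordinate-wise linear in the sense of \cite{LMS25}: at each non-erased coordinate the local constraint ``the $L+1$ symbols lie in a common set of size $\le\ell$'' is exactly of the type accommodated by the modified LCL formalism, while erased coordinates impose the trivial constraint. Hence $\cP_{\sigma,\ell,L}$ is a reasonable LCL property.

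Next I would appeal to the threshold theorem of \cite{LMS25} for LCL properties of large-alphabet random linear codes. It provides a threshold rate $R_{\cP}$ such that, for every $\delta>0$, a random linear code of rate $R_{\cP}-\delta$ avoids $\cP$ with probability $1-\exp(-\Omega(n))$. Applying this to the family $\{\cP_{\sigma,\ell,L}\}_L$ shows that, among random linear codes of rate $R-\eps$, there is a well-defined smallest output list size achievable for $(\sigma,\ell,\cdot)$-erasure list-recovery; this is exactly the quantity $L_{R,\sigma,\eps,\ell}$ in the statement. In particular, a random linear code of rate $R-\eps$ avoids $\cP_{\sigma,\ell,L_{R,\sigma,\eps,\ell}}$ with high probability, so the threshold rate of this property satisfies $R_{\cP_{\sigma,\ell,L_{R,\sigma,\eps,\ell}}}\ge R-\eps$.

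Finally I would apply \cref{thm:inf-the-big-one} to the property $\cP=\cP_{\sigma,\ell,L_{R,\sigma,\eps,\ell}}$. Since its threshold rate is at least $R-\eps$, the AEL procedure—instantiated with an inner linear code of appropriate block length avoiding $\cP$ (found by brute-force search), an explicit bipartite spectral expander, and an explicit outer code of rate $1-\eps$ and relative distance $\eps^3$—yields an explicit linear code $\cael$ over a constant-size alphabet (enlarged by an exponential factor, as in the Remark on tradeoffs) with rate arbitrarily close to the threshold rate, hence at least $R-2\eps$, that does not satisfy $\cP$. By the first paragraph, ``$\cael$ does not satisfy $\cP_{\sigma,\ell,L_{R,\sigma,\eps,\ell}}$'' means exactly that $\cael$ is $(\sigma,\ell,L_{R,\sigma,\eps,\ell})$-erasure list-recoverable, which is the claim.

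I expect the main obstacle to be the verification in the first paragraph that the erasure variant genuinely fits the LCL formalism: one must handle the blank symbol so that erased coordinates contribute the trivial constraint while the property stays coordinate-wise linear and local, and one must check that the number of distinct local constraints—after accounting for the choice of erasure pattern $E$ and the list-tuples $(S_i)_{i\notin E}$—is controlled well enough for the \cite{LMS25} threshold machinery to apply. A secondary point is ensuring the AEL lifting in \cref{thm:inf-the-big-one} correctly transports this property, since an erasure in a single coordinate of $\cael$ corresponds to a whole block of erased inner symbols that the expander spreads across many outer coordinates; but this is already subsumed by the statement of \cref{thm:inf-the-big-one}, so once the encoding of the property is in place the result is immediate.
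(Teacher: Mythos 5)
Your proposal follows exactly the paper's route: encode the complement of $(\sigma,\ell,L)$-erasure list-recoverability as a reasonable LCL property, invoke the LCL threshold theorem, and apply the general AEL derandomization theorem. The "main obstacle" you flag — verifying that the constraint is genuinely coordinate-wise linear and that the number of local constraint types is controlled — is precisely what the paper resolves in \cref{clm:list-rec-reas}, by encoding "at most $\ell$ distinct symbols among a subset $K$ of the $L+1$ positions" via matrices $\mM(K,P)$ indexed by partitions $P$ of $K$ into at most $\ell$ parts (with erased coordinates handled by the all-zeroes matrix $\0$), yielding $T_\cP\le(\ell+1)^{L+1}$ and $\kappa_q(\cP)=\log_q(\ell+1)^{L+1}$.
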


\paragraph{Perfect Hash Matrices.}
An $(n,m,t)$-perfect hash matrix is defined as an $n \times m$ matrix with the following property: for every set of $t$ columns, there exists at least one row in which the entries of those $t$ columns are all distinct.
Perfect hash matrices were first introduced by \cite{Mel84} in the context of database management, and have since found applications in circuit complexity \cite{NW95} and networking \cite{LPB06}.
Consequently, there has been significant work on the explicit construction of such matrices, including \cite{FKS82, AN96, BW98, Bla00}.
In particular, Blackburn and Wild \cite{BW98} presented constructions of optimal linear perfect hash matrices, which are perfect hash matrices where the columns belong to a vector space.
It is straightforward to observe that the set of codewords of a $(0,t-1,t-1)$-list-recoverable code of block length $n$ and size $m$ is equivalent to the set of columns of an $(n,m,t)$-perfect hash matrix.
Thus, constructing linear perfect hash matrices is equivalent to constructing linear $(0,t-1,t-1)$-list-recoverable codes.

We recover the result of \cite{BW98} by providing an alternate construction, which are close to optimal.
\begin{theorem}[Informal, see \cref{cor:exp-per-hash-mat}]\label{thm:inf-exp-per-hash-mat}
  For an integer $t \ge 2$, there exist explicit constructions of linear $(n, Q^{\inbrak{\frac{1}{t-1}-\eps}n}, t)$-perfect hash matrices where the entries belong to $\F_Q$, where $Q=\exp(t^t)$.
\end{theorem}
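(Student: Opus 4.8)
The idea is to realize perfect hash matrices as a special case of zero-error list recovery and then apply the explicit construction of \cref{thm:inf-exp-zero-err-list-rec} (equivalently, to instantiate \cref{thm:inf-the-big-one} with the right LCL property). As observed above, a family of $m$ vectors in $\Sigma^n$ is the column set of an $(n,m,t)$-perfect hash matrix exactly when, viewed as a code $\cC\subseteq\Sigma^n$ of size $m$, it is $(0,t-1,t-1)$-zero-error list recoverable: a set of $t$ columns with no separating row is precisely a set of $t$ codewords consistent with the list system $S_i:=\{\,c[i]:c\text{ in the set}\,\}$, whose lists all have size at most $t-1$; moreover ``linear'' on the matrix side is exactly $\cC$ being linear over the field carrying its symbols (or a subfield). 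So it suffices to build an explicit linear $(0,t-1,t-1)$-zero-error list recoverable code over $\F_Q$, $Q=\exp(t^t)$, of size $Q^{(\frac1{t-1}-\eps)n}$.

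Let $\cP$ be the property ``\emph{not} $(0,t-1,t-1)$-zero-error list recoverable.'' Its witnesses are sets of $t$ distinct codewords $c_1,\dots,c_t$ with $\inabs{\{c_1[i],\dots,c_t[i]\}}\le t-1$ at every coordinate $i$, so $\cP$ has constant locality $t$; at each coordinate its forbidden set $B=\{x\in\Sigma^t:\inabs{\{x_1,\dots,x_t\}}\le t-1\}$ is the union of the diagonals $\{x_a=x_b\}$, and is invariant both under permuting the $t$ entries and under the simultaneous affine maps $x\mapsto(ax_1+b,\dots,ax_t+b)$, $a\ne 0$ (which preserve injectivity). Hence $\cP$ is a reasonable LCL property in the sense of \cite{LMS25}, and the key quantitative point is to compute its threshold rate $R^*_t=R^*_t(\inabs{\Sigma})$ through the first-moment method of \cite{LMS25}: group the witnesses by the linear-dependency type of $(c_1,\dots,c_t)$, bound the number of admissible ``columns'' per coordinate in each type (roughly $\inabs{\Sigma}^t$ minus the number of injective $t$-tuples for the top-dimensional type, smaller by a factor $\approx\inabs{\Sigma}$ once one codeword is translated to $0$), and locate the dominant type. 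This should give $R^*_t\to\frac1{t-1}$ as $\inabs{\Sigma}\to\infty$, matching the exponent in the statement, along with the fact that below $R^*_t$ random linear codes satisfy the complement of $\cP$ with exponentially high probability (so a brute-force search over constant-sized codes produces a suitable inner code).

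With the threshold in hand, fix the inner alphabet $q_0$ large enough that $R^*_t(q_0)>\frac1{t-1}-\frac\eps2$, take the inner code to be a brute-force linear $(0,t-1,t-1)$-zero-error list recoverable code over $\F_{q_0}$ of rate within $\eps/2$ of $R^*_t(q_0)$, and feed it---together with an explicit outer code of rate $1-\eps'$ and relative distance $(\eps')^3$ and an explicit bipartite spectral expander---into the AEL procedure of \cref{thm:inf-the-big-one}. The output is an explicit code $\cael$, linear over $\F_{q_0}$, over an alphabet of size $Q=\exp(t^t)$ (the blow-up being $q_0^{d_0}$ for an appropriately chosen inner block length $d_0$), of rate at least $\frac1{t-1}-\eps$, that does not satisfy $\cP$, i.e.\ is $(0,t-1,t-1)$-zero-error list recoverable. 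Its $Q^{(\frac1{t-1}-\eps)n}$ codewords lie in the $\F_{q_0}$-vector space $\F_Q^n\cong\F_{q_0}^{(\log_{q_0}Q)\,n}$, so they are the columns of a linear $(n,Q^{(\frac1{t-1}-\eps)n},t)$-perfect hash matrix, recovering the parameters of \cite{BW98} up to the alphabet blow-up inherent to AEL-type constructions.

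The single real obstacle is the threshold computation of the middle paragraph: pinning down $R^*_t$ and reconciling it with the $\frac1{t-1}$ exponent needs a careful accounting of the low-dimensional (degenerate) witness types in the first moment, to make sure none of them lowers the threshold below $\frac1{t-1}$. Everything else---the reduction to zero-error list recovery, the check that $\cP$ is a reasonable LCL property, and the invocation of the AEL machinery---is a direct specialization of results already established.
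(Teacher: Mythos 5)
Your proposal takes essentially the same route as the paper: reduce perfect hash matrices to $(0,t-1,t-1)$-zero-error list recoverability, observe that the complement is a reasonable LCL property, and invoke the AEL-based main theorem (via \cref{cor:list-rec-params} / \cref{clm:list-rec-reas}) to get an explicit linear code of rate within $\eps$ of the threshold rate over a blown-up constant alphabet. The one place you go beyond the paper is in flagging the threshold-rate computation as the ``single real obstacle'': the paper's formal \cref{cor:exp-per-hash-mat} only states the rate as $R_{\cP(0,0,t-1,t-1)}-2\eps$ and does not carry out the first-moment calculation showing $R_{\cP(0,0,t-1,t-1)}\to \frac{1}{t-1}$ as $q\to\infty$; the $\frac{1}{t-1}$ appears only in the informal statement and is implicitly justified by the comparison with \cite{BW98}. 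Your honesty about this missing step is warranted, but it is not a flaw in your argument relative to the paper's own: the paper glosses over the same point.
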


We also get a modest improvement in the alphabet size of explicit list-decodable codes approaching the Generalized Singleton Bound.
Previously, \cite{JMST25} constructed explicit codes with the same parameters, except with alphabet size equal to $2^{\poly (L^L/\eps)}$.

\begin{theorem}[Informal, see \cref{cor:list-dec-ael}]\label{thm:inf-list-dec-ael}
    There exist explicit constructions of $\bigl(\frac{L}{L+1}(1-R-\eps), L\bigr)$-list-decodable codes, having rate $R-\eps$ and alphabet size at most $2^{\poly (2^L/\eps)}$.
\end{theorem}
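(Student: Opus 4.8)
The plan is to derive this statement as a direct specialization of the main theorem, \cref{thm:inf-the-big-one}, to the list‑decoding local property. Concretely, I would carry out three steps: (a) exhibit the complement of $(\rho,L)$-list‑decodability as a reasonable LCL property in the sense of~\cite{LMS25}; (b) supply the threshold‑rate input required by \cref{thm:inf-the-big-one} using the Generalized‑Singleton result of Alrabiah–Guruswami–Li~\cite{AGL24} together with the threshold theorem of~\cite{LMS25}; and (c) instantiate the AEL procedure and bound the alphabet size of the resulting explicit code.

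For step (a): fix the list size $L$ and note that the complement of $(\rho,L)$-list‑decodability is the property defined by the collection of all $(L{+}1)$-element vector sets that lie inside a common Hamming ball of relative radius $\rho$. This collection is permutation‑invariant and monotone (enlarging the coordinate set of a tuple contained in a radius‑$\rho$ ball keeps it in such a ball), and a routine check confirms it satisfies the regularity hypotheses of \cref{thm:inf-the-big-one}; in particular it is an LCL property. For step (b): by~\cite{AGL24}, for every $\eps>0$ there is $q=2^{O(1/\eps^{2})}$ such that a random linear code over $\F_{q}$ of rate $\approx R-\eps$ is $\bigl(\frac{L}{L+1}(1-R-\eps),L\bigr)$-list‑decodable with probability $1-o(1)$ (taking the gap parameter there to be $\Theta(\eps)$). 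Feeding this into the threshold theorem of~\cite{LMS25} shows that the LCL property ``not $\bigl(\frac{L}{L+1}(1-R-\eps),L\bigr)$-list‑decodable'' over $\F_{q}$ has threshold rate at least $R-\eps$, and—crucially—that a random linear code over $\F_{q}$ of rate just below this threshold avoids the property with exponentially high probability once its block length exceeds $n_{\mathrm{in}}=\poly(2^{L}/\eps)$; hence an exhaustive search over generator matrices of that block length provably produces a suitable inner code.

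For step (c): instantiate the AEL procedure of \cref{thm:inf-the-big-one} with this brute‑force inner code over $\F_{q}$ of block length $d\defeq n_{\mathrm{in}}=\poly(2^{L}/\eps)$, an explicit outer code of rate $1-\eps$ and distance $\eps^{3}$ over the alphabet $\F_{q}^{k_{\mathrm{in}}}$ (e.g.\ an algebraic‑geometry code or a suitable concatenated code), and an explicit bipartite spectral expander of degree $d$ whose second eigenvalue is $\poly(\eps)$. \cref{thm:inf-the-big-one} then outputs an explicit code $\cael$, linear over the subfield $\F_{q}$, which does not satisfy the property—that is, it is $\bigl(\frac{L}{L+1}(1-R-\eps),L\bigr)$-list‑decodable—and whose rate is arbitrarily close to the threshold, hence $R-\eps$ after the standard rescaling of $\eps$ by a constant. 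Since each symbol of $\cael$ packs $d$ symbols of the inner alphabet, $|\Sigma_{\cael}|=q^{d}=\bigl(2^{O(1/\eps^{2})}\bigr)^{\poly(2^{L}/\eps)}=2^{\poly(2^{L}/\eps)}$, as claimed.

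The bulk of this is bookkeeping on top of \cref{thm:inf-the-big-one}; the one genuinely quantitative point—and the source of the improvement over the $2^{\poly(L^{L}/\eps)}$ bound of~\cite{JMST25}—is the bound $n_{\mathrm{in}}=\poly(2^{L}/\eps)$ on the inner block length. Establishing it requires the quantitative form of the~\cite{LMS25} threshold theorem specialized to list‑decoding (to control the block length at which the brute‑force search is guaranteed to succeed), combined with the fact that the AEL lifting inside \cref{thm:inf-the-big-one} only demands $\poly(\eps)$-spectral expansion, so the expander degree is not the bottleneck. The fact that the inner alphabet $q=2^{O(1/\eps^{2})}$ from~\cite{AGL24} carries \emph{no} dependence on $L$ is precisely what keeps the final exponent at $\poly(2^{L}/\eps)$ rather than $\poly(L^{L}/\eps)$.
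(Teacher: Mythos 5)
Your high-level plan is sound: \cref{thm:inf-list-dec-ael} is indeed implied by the general machinery of \cref{thm:inf-the-big-one}/\cref{cor:the-big-one}, and the paper explicitly acknowledges this, noting that the standalone proof in \cref{sec:warmup} (via reduced intersection matrices and weakly-partition-connected hypergraphs, culminating in \cref{cor:list-dec-ael}) is included only because it ``more transparently illustrates the ideas.'' Your decomposition into (a) verifying the LCL property is reasonable, (b) bounding the threshold rate, (c) instantiating AEL and computing the alphabet, mirrors what \cref{sec:conseq} does. However, there is one genuine gap and one quantitative misstep.

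The gap is in step (b). You propose to brute-force an inner code of length $n_{\mathrm{in}}$ that merely ``avoids the property''---that is, is itself $(\rho,L)$-list-decodable. But \cref{thm:the-big-one} requires the inner code to satisfy the strictly stronger condition of \cref{fct:good-in-code}: it must contain \emph{no non-zero matrix} satisfying any local profile description from $\bigcup_{\cV\in\cP}\Rob_\D(\cV^\imp)$, i.e.\ the $\D$-robust \emph{implied} local profile descriptions. This strengthening is not optional bookkeeping: the naive inner-code requirement fails because the local projections of the $L{+}1$ distinct AEL codewords onto a left vertex need not remain pairwise distinct, so an inner code that is merely list-decodable gives no contradiction. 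This is exactly the ``flaw'' called out in \cref{sec:tech-view}, and the entire apparatus of implied local profiles (or, in the direct proof, RIMs and weakly-partition-connected hypergraphs via \cref{lem:good-rim-subs} and \cref{cor:good-rim-subs}) exists to repair it. If you simply invoke \cref{cor:the-big-one} as a black box you inherit the correct inner code automatically, but as written, your step (b) replaces it with the wrong object.

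The quantitative misstep: $q=2^{O(1/\eps^2)}$ with no $L$-dependence misreads \cite{AGL24}, where the list size is coupled to the gap as $L=O(1/\eps)$. Here $L$ and $\eps$ are independent, and the inner alphabet must satisfy $q\geq 2^{10L/\eps}$ (cf.\ \cref{lem:good-rim-subs}), or $q=(\ell+1)^{(L+1)\cdot 8/\eps}=2^{O(L/\eps)}$ via \cref{cor:list-rec-params} with $\ell=1$. The final bound $Q=q^d=2^{\poly(2^L/\eps)}$ still goes through since $L/\eps\cdot d = L/\eps\cdot O(2^{3L}/\eps^5)$ is $\poly(2^L/\eps)$, but your stated reason for beating the $2^{\poly(L^L/\eps)}$ of \cite{JMST25}---that $q$ carries no $L$-dependence---is therefore not the right explanation; the savings come from the exponential (rather than $L^L$-type) dependence of the degree $d$ on $L$.
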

The proof in \cite{JMST25} is based on induction, and leverages the expander property of the graph used in the construction.
In contrast, ours is a proof by contradiction, and relies on the sampling property of the underlying graph, thereby providing an alternative proof of essentially the same result.
Although implied by \cref{thm:inf-the-big-one}, we include a proof of \cref{thm:inf-list-dec-ael} as it more transparently illustrates the ideas behind the proof of the former theorem.

\paragraph{Efficient Decoding.}  Our derandomization of random linear codes from~\cref{thm:inf-the-big-one} has the added benefit of admitting efficient decoding algorithms. More precisely, the recently developed efficient (list) decoding algorithms for AEL using the Sum-of-Squares hierarchy~\cite{JMST25} and regularity lemmas ~\cite{ST25,JS25} can also be used to decode our AEL based  constructions. Those efficient decoding algorithms are possible thanks to the use of expander graphs in AEL rendering the decoding task tractable.
In contrast, a considerable amount of evidence~\cite{DMS03, FM04, BLVW19} seems to point towards the problem of decoding for random linear codes being computationally inefficient.

\paragraph{Concurrent Work.} Around the time of release of an earlier version of this paper, Brakensiek, Chen, Dhar, and Zhang released two papers~\cite{BCDZ25b, BCDZ25a}, independently of our work.
We provide a brief discussion about the connections between our work and theirs.
\begin{itemize}
    \item \cite{BCDZ25b} shows an improved upper bound on the output list sizes achieved by random linear codes, in the context of list-recovery.
    Specifically, they show that for integers $\ell, L$ and $R, \eps \in [0, 1]$, a random linear code with alphabet size $\exp((L\log \ell) /\eps)$ is $(1-R-\eps, \ell, L)$-list-recoverable with high probability, where $L$ satisfies
    \[
        L \le \inbrak{\frac{\ell}{R+\eps}}^{O(R/\eps)}.
    \]
    In the current version of our paper, we have utilized this upper bound to show that our explicit constructions attain the same parameters, with the exception of an increased (yet constant) alphabet size.
    \item Their result was established by using a novel two-way reduction between subspace designable codes and random linear codes, shown in their second paper~\cite{BCDZ25a}.
    The reduction proves that subspace designable codes (which include explicit Folded Reed-Solomon codes and Univariate Multiplicity codes) achieve all LCL properties simultaneously, at the same level of optimality as random linear codes.
    Our constructions also achieve the same result (see \cref{rem:comine-lcl-prop}).
    
    We now compare these two results.
    The instantiation of Folded Reed-Solomon codes and Univariate Multiplicity codes in \cite{BCDZ25a} requires an alphabet size that is at least polynomial in the block length of the code, while our constructions require constant alphabet size.
    Additionally, our codes are LDPC codes, a property useful in designing encoding and decoding algorithms.
    However, our constructions have weaker parameters than~\cite{BCDZ25a} in the regime of very large alphabet size, which is a regime of interest in their paper.
    The constructions of~\cite{BCDZ25a} require alphabet sizes that can be significantly larger than the block length, in order to obtain results on tensor codes, as detailed in the second part of their paper.
\end{itemize}

\subsection{Organization}
In \cref{sec:tech-view}, we provide an overview of the proofs for the list-decoding case (\cref{thm:inf-list-dec-ael}) and the general result (\cref{thm:inf-the-big-one}).
The detailed proofs appear in \cref{sec:warmup} and \cref{sec:ael-loc-prop}, respectively.
\cref{sec:ael-loc-prop} is independent of \cref{sec:warmup}; therefore, readers interested solely in the general theorem's proof may proceed directly to it.
Finally, the results pertaining to list-recovery variants and perfect hash matrices (\cref{thm:inf-cap-ach-list-rec}, \cref{thm:inf-exp-zero-err-list-rec}, \cref{thm:inf-exp-eras-list-rec}, \cref{thm:inf-exp-per-hash-mat}) are presented in \cref{sec:conseq}.

\section{Technical Overview}\label{sec:tech-view}
We begin with an overview of the proof of \cref{thm:inf-list-dec-ael}, which addresses the special case of list decoding. 
We instantiate the AEL procedure with the following three components: an inner code of constant block length, found through brute force, an explicit outer code having high rate and distance $\dout$, where $\dout$ is a constant, and an explicit bipartite expander graph $G$.
We note that $G$ can equivalently be interpreted as a sampler, a viewpoint that has been leveraged in prior works (cf.~\cite{KMRS17, KRZSW23}) to analyze AEL-based constructions.  
This sampling perspective naturally motivates a strategy for the explicit construction of codes with strong list-decoding parameters: perform a brute-force search to identify a constant block-length code with good list-decoding parameters, and then use this code as the inner code in the AEL procedure to obtain $\cael$.
If $\cael$ has pairwise distinct codewords $c_1,\ldots,c_{L+1}$ close to some received word $y$, then the codewords agree with $y$ at a large number of coordinates.
The nature of these agreements can be encoded by an \emph{agreement hypergraph} on $L+1$ vertices that contains $n$ hyperedges, one for each coordinate.
The hyperedge for a coordinate is simply the set of indices of codewords agreeing with $y$ on that coordinate.
It follows that if $c_{1},\ldots,c_{L+1}$ agree with $y$ on many coordinates, then the sum of the sizes of the hyperedges is large.
Let these agreements be described by an agreement hypergraph denoted by $\cH$.
Upon invoking the sampling property of the graph $G$,
it is observed that the precise pattern of these agreements is ``ported over'' to a significant number of vertices on the left. As a result, the projections of codewords $c_1,\ldots,c_{L+1}$ and $y$ onto several left vertices have agreements that closely resemble the agreement pattern described by $\cH$.
Consequently, the codeword projections agree with the projection of $y$ at a large number of coordinates within the inner code.
But as the codeword projections are also codewords of the inner code, and because the inner code has good list-decoding parameters, its codewords do not have a lot of agreements with the local projection of $y$.
Therefore, we have seemingly arrived at a contradiction.

However, there is a flaw with this argument: the guarantee of the inner code applies only when the (inner) codewords are pairwise distinct.
Since it cannot be guaranteed that the codeword projections are pairwise distinct, this argument fails.
In order to overcome this obstacle, we use a concept known as a \emph{weakly-partition-connected hypergraph}.
This object was investigated in the context of list-decoding in \cite{AGL24}, where the authors established that (i) any agreement hypergraph with a sufficiently large hyperedge size sum contains a weakly-partition-connected hypergraph, and (ii) there exist linear codes that contain \emph{no non-trivial} codeword sets satisfying weakly-partition-connected hypergraphs.
By \emph{non-trivial} codeword sets, we mean sets of codewords that are not all equal.

We now search for a linear code of constant block length satisfying the above property, and use it as an inner code.
The analysis now proceeds in a manner analogous to the first approach.
If $L+1$ codewords from $\cael$ have a large number of agreements with some received word $y$, then the agreement hypergraph contains a weakly-partition-connected hypergraph $\cH$.
Consequently, a subset of those codewords satisfy the constraints as set forth by $\cH$.
We then use the sampling property of the graph $G$ to port over information about the nature of the agreements to the left side, with the result that the inner codewords at several vertices on the left satisfy the constraints described by $\cH$.
From the parameters that we eventually set, it is seen that the number of such parts exceeds $n(1-\dout)$ (where $\dout$ is the minimum distance of the outer code).
Consequently, at least one set of inner codewords from such a part must be non-trivial, thereby contradicting our assumption on the inner code.

\begin{figure}
\centering
\begin{tikzpicture}[
    vertex/.style={
        circle, 
        draw=blue!50, 
        fill=blue!20, 
        fill opacity=0.2, 
        minimum size=3mm
    }
]

\definecolor{clr3}{HTML}{00798c};
\definecolor{clr2}{HTML}{d1495b};
\definecolor{clr1}{HTML}{d3942e};

\def\vspacing{1.1}
\def\vspacingrect{1.1}
\def\vertoffsetleftrect{0.40}
\def\vspacingleftrect{1.12}
\def\graphoffset{0.7}
\def\xoffset{3} 
\def\rectoffset{0.7} 

\begin{scope}[rect/.style={
        rectangle, 
        draw=gray, 
        minimum width=0.6cm, 
        minimum height=1.6cm 
    }
]

\end{scope}

\begin{scope}[rect/.style={
        rectangle, 
        draw=gray, 
        minimum width=0.6cm, 
        minimum height=0.08cm 
    }
]

    \node[rect, anchor=north east, draw=clr1, fill=clr1!20] at (0, -\vertoffsetleftrect-1*0.2) {};
    \node[rect, anchor=north east, draw=clr2, fill=clr2!20] at (0, -\vertoffsetleftrect-2*0.2) {};
    \node[rect, anchor=north east, draw=clr3, fill=clr3!20] at (0, -\vertoffsetleftrect-3*0.2) {};
    \node[rect, anchor=north east, draw=clr3, fill=clr3!20] at (0, -\vertoffsetleftrect-4*0.2) {};

    \node[rect, anchor=north east, draw=clr3, fill=clr3!20] at (0, -\vertoffsetleftrect-1*0.2-\vspacingleftrect) {};
    \node[rect, anchor=north east, draw=clr2, fill=clr2!20] at (0, -\vertoffsetleftrect-2*0.2-\vspacingleftrect) {};
    \node[rect, anchor=north east, draw=clr3, fill=clr3!20] at (0, -\vertoffsetleftrect-3*0.2-\vspacingleftrect) {};
    \node[rect, anchor=north east, draw=clr1, fill=clr1!20] at (0, -\vertoffsetleftrect-4*0.2-\vspacingleftrect) {};

    \node[rect, anchor=north east, draw=clr2, fill=clr2!20] at (0, -\vertoffsetleftrect-1*0.2-2*\vspacingleftrect) {};
    \node[rect, anchor=north east, draw=clr2, fill=clr2!20] at (0, -\vertoffsetleftrect-2*0.2-2*\vspacingleftrect) {};
    \node[rect, anchor=north east, draw=clr1, fill=clr1!20] at (0, -\vertoffsetleftrect-3*0.2-2*\vspacingleftrect) {};
    \node[rect, anchor=north east, draw=clr2, fill=clr2!20] at (0, -\vertoffsetleftrect-4*0.2-2*\vspacingleftrect) {};

    \node[rect, anchor=north east, draw=clr2, fill=clr2!20] at (0, -\vertoffsetleftrect-1*0.2-3*\vspacingleftrect) {};
    \node[rect, anchor=north east, draw=clr3, fill=clr3!20] at (0, -\vertoffsetleftrect-2*0.2-3*\vspacingleftrect) {};
    \node[rect, anchor=north east, draw=clr1, fill=clr1!20] at (0, -\vertoffsetleftrect-3*0.2-3*\vspacingleftrect) {};
    \node[rect, anchor=north east, draw=clr3, fill=clr3!20] at (0, -\vertoffsetleftrect-4*0.2-3*\vspacingleftrect) {};

    \node[rect, anchor=north east, draw=clr3, fill=clr3!20] at (0, -\vertoffsetleftrect-1*0.2-4*\vspacingleftrect) {};
    \node[rect, anchor=north east, draw=clr3, fill=clr3!20] at (0, -\vertoffsetleftrect-2*0.2-4*\vspacingleftrect) {};
    \node[rect, anchor=north east, draw=clr1, fill=clr1!20] at (0, -\vertoffsetleftrect-3*0.2-4*\vspacingleftrect) {};
    \node[rect, anchor=north east, draw=clr2, fill=clr2!20] at (0, -\vertoffsetleftrect-4*0.2-4*\vspacingleftrect) {};

    \node[rect, anchor=north east, draw=clr2, fill=clr2!20] at (0, -\vertoffsetleftrect-1*0.2-5*\vspacingleftrect) {};
    \node[rect, anchor=north east, draw=clr3, fill=clr3!20] at (0, -\vertoffsetleftrect-2*0.2-5*\vspacingleftrect) {};
    \node[rect, anchor=north east, draw=clr1, fill=clr1!20] at (0, -\vertoffsetleftrect-3*0.2-5*\vspacingleftrect) {};
    \node[rect, anchor=north east, draw=clr3, fill=clr3!20] at (0, -\vertoffsetleftrect-4*0.2-5*\vspacingleftrect) {};

    \node[rect, anchor=north east, draw=clr3, fill=clr3!20] at (0, -\vertoffsetleftrect-1*0.2-6*\vspacingleftrect) {};
    \node[rect, anchor=north east, draw=clr3, fill=clr3!20] at (0, -\vertoffsetleftrect-2*0.2-6*\vspacingleftrect) {};
    \node[rect, anchor=north east, draw=clr1, fill=clr1!20] at (0, -\vertoffsetleftrect-3*0.2-6*\vspacingleftrect) {};
    \node[rect, anchor=north east, draw=clr3, fill=clr3!20] at (0, -\vertoffsetleftrect-4*0.2-6*\vspacingleftrect) {};

    \node[rect, anchor=north east, draw=clr3, fill=clr3!20] at (0, -\vertoffsetleftrect-1*0.2-7*\vspacingleftrect) {};
    \node[rect, anchor=north east, draw=clr2, fill=clr2!20] at (0, -\vertoffsetleftrect-2*0.2-7*\vspacingleftrect) {};
    \node[rect, anchor=north east, draw=clr1, fill=clr1!20] at (0, -\vertoffsetleftrect-3*0.2-7*\vspacingleftrect) {};
    \node[rect, anchor=north east, draw=clr3, fill=clr3!20] at (0, -\vertoffsetleftrect-4*0.2-7*\vspacingleftrect) {};
\end{scope}

\foreach \i in {1,...,8}
    \node[vertex] (L\i) at (\graphoffset, -\i*\vspacing) {};

\foreach \i in {1,...,8}
    \node[vertex] (R\i) at (\graphoffset+\xoffset, -\i*\vspacing) {};

\draw (L1) -- (R2);
\draw (L1) -- (R5);

\draw (L2) -- (R1);
\draw (L2) -- (R4);
\draw (L2) -- (R6);

\draw (L3) -- (R1);
\draw (L3) -- (R6);

\draw (L4) -- (R2);
\draw (L4) -- (R7);
\draw (L4) -- (R5);

\draw (L5) -- (R6);
\draw (L5) -- (R8);

\draw (L6) -- (R3);
\draw (L6) -- (R7);

\draw (L7) -- (R2);
\draw (L7) -- (R8);

\draw (L8) -- (R7);
\draw (L8) -- (R6);

\begin{scope}[rect/.style={
        rectangle, 
        draw=black, 
        minimum width=0.7cm, 
        minimum height=1.1cm 
    }
]

    \node[rect, anchor=north west, draw=clr3, fill=clr3!20] at (\graphoffset+\xoffset+\rectoffset, 0.55-1*\vspacingrect) {};
    \node[rect, anchor=north west, draw=clr2, fill=clr2!20] at (\graphoffset+\xoffset+\rectoffset, 0.55-2*\vspacingrect) {};
    \node[rect, anchor=north west, draw=clr1, fill=clr1!20] at (\graphoffset+\xoffset+\rectoffset, 0.55-3*\vspacingrect) {};
    \node[rect, anchor=north west, draw=clr3, fill=clr3!20] at (\graphoffset+\xoffset+\rectoffset, 0.55-4*\vspacingrect) {};
    \node[rect, anchor=north west, draw=clr3, fill=clr3!20] at (\graphoffset+\xoffset+\rectoffset, 0.55-5*\vspacingrect) {};
    \node[rect, anchor=north west, draw=clr2, fill=clr2!20] at (\graphoffset+\xoffset+\rectoffset, 0.55-6*\vspacingrect) {};
    \node[rect, anchor=north west, draw=clr3, fill=clr3!20] at (\graphoffset+\xoffset+\rectoffset, 0.55-7*\vspacingrect) {};
    \node[rect, anchor=north west, draw=clr1, fill=clr1!20] at (\graphoffset+\xoffset+\rectoffset, 0.55-8*\vspacingrect) {};
\end{scope}
\end{tikzpicture}
\caption{Several copies of the inner code witness vector sets satisfying the constraints in roughly the same proportion as the vector set on the right. Each type of constraint has a different color.} \label{fig:sampling}
\end{figure}
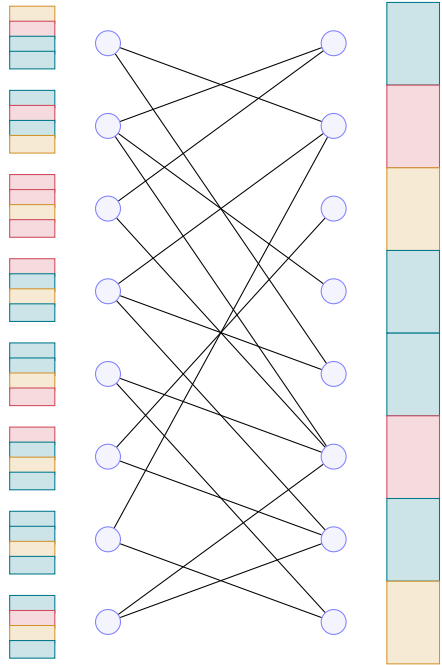

Before giving a proof sketch for the main result (\cref{thm:inf-the-big-one}), we present a high-level overview of LCL properties.
The central idea is that the collection of vector sets corresponding to an LCL property $\cP$ can be defined by specifying a family of linear constraint sets, known as local profiles.
For a block length $n$ and a locality parameter $L$, a local profile specifies a set of linear constraints, on vectors of length $L$, for each coordinate.
The collection of vector sets associated with $\cP$ is then defined as the column set of all matrices of dimension $n \times L$ whose rows satisfy the linear constraints described by at least one local profile associated with $\cP$.
A code is said to satisfy $\cP$ if it contains a vector set from the collection associated with $\cP$.

Every linear constraint set can be written down as a matrix of dimension $L \times L$.
Two linear constraint sets $S$ and $T$ are said to be of the same type if there exists a full rank linear transformation that maps the matrix associated with $S$ to that associated with $T$.
Even though the number of types can be exponential in $L^2$, it is independent of the block length of $\cael$.
This fact ensures that if there are $L$ pairwise distinct codewords in $\cael$ that satisfy the linear constraints described by some local profile $M$ associated with $\cP$, then upon arranging them in an $n \times L$ matrix, a significant fraction of the coordinates can be partitioned according to the constraint type satisfied by the row whose index is equal to the coordinate.
Moreover, the fractional size of each of these sets will be a constant.
Thus, the sampling property of the graph $G$ can be utilized to port over the proportion of these types onto several projections on the left.
The local projections, which are also codewords belonging to the inner code, consequently have codewords that satisfy (a close approximation of) the local profile $M$.
If the AEL procedure were instantiated with an inner code that avoids vector sets satisfying close approximations of the local profiles associated with $\cP$, one might hope to derive a contradiction.  
However, we encounter the same obstacle as before: the guarantee for the inner code applies only when the inner codewords are pairwise distinct, and we lack a mechanism to ensure that this condition is met.

We circumvent this obstacle by employing the concept of \emph{implied types}, first introduced in \cite{MosheiffRRSW19} in the context of local properties.
Implied types were used in their work for the purpose of pinning down the exact threshold rates of local properties of random linear codes in the low alphabet regime.
They also appear implicitly in the work of \cite{LMS25}.
Informally speaking, implied vector sets can be thought of as  ``compressed'' representations of the vector sets corresponding to a local property.
For LCL properties, the corresponding notion is that of implied local profiles.
Consider the implied local profile $I(M)$ corresponding to a local profile $M$.
Denote by $V_{I(M)}, V_M$ the set of matrices associated with $I(M), M$ respectively.
That is, $V_{I(M)}$ (respectively, $V_M$) is the set of matrices whose rows satisfy the constraints specified by $I(M)$ (respectively, $M$).
Then, $V_{I(M)}$ can be obtained by applying an appropriate linear map on the rows of every matrix in $V_M$.
Consequently, we see that if a linear code $\cC$ contains a vector set that satisfies $M$, then by linearity, $\cC$ also contains a vector set that satisfies $I(M)$.

Recall that we are interested in codes that satisfy the complement of an LCL property $\cP$.
We shall prove in \cref{sec:ael-loc-prop} that if random linear codes of rate $R$ satisfy the complement of $\cP$, then random linear codes of rate $R$ also satisfy the following property: with high probability, they do not contain any \emph{non-zero matrices} satisfying any implied local profile $I(M)$, where $M$ is any local profile associated with $\cP$.
Upon instantiating the AEL procedure with an inner code satisfying the aforementioned property, the analysis can be carried out in a manner analogous to that of the naive approach detailed above.
If $L$ pairwise distinct codewords in $\cael$ satisfy a local profile $M$ associated with $\cP$, then upon arranging these codewords in a $n \times L$ matrix and applying the appropriate linear map to each of its rows, the matrix obtained satisfies the constraints set forth by the implied local profile $I(M)$.
Upon porting the constraint types to the left as before, we see that the projected matrices on to several left vertices are codewords belonging to the inner code, and moreover, they satisfy $I(M)$\footnote{In reality, the projected matrices satisfy a close approximation of $I(M)$, instead of $I(M)$. Nevertheless, the inner codes we employ are chosen to be sufficiently robust to accommodate this technical subtlety.}.
The parameters are instantiated in a way so as to ensure that this occurs on more than $(1-\dout)n$ left vertices, and therefore there is one projected matrix that is non-zero.
But because the columns of the projected matrices are codewords of the inner code, and because the inner code avoids containing all non-zero matrices satisfying $I(M)$ for any $M$ associated with the LCL property $\cP$, we successfully arrive at a contradiction.

\section{Preliminaries}\label{sec:prelim}
Let $[n]$ denote the set $\inset{1,\ldots,n}$.
For a vector $x \in (\Sigma^d)^n$ and $i \in [n]$, we denote $x(i) \in \Sigma^d$ as the $i$th entry of $x$.
Furthermore, for $j \in [d]$, we denote $x(i)[j] \in \Sigma$ as the $j$th entry of $x(i)$.
For two vectors $x,y \in \Sigma^n$, the \emph{Hamming distance} is defined as $d(x, y):=\inabset{i \in [n]\colon x(i) \neq y(i)}$, that is, the number of indices at which $x$ and $y$ differ.
For a set $X$, we denote $2^X$ to be the power set of $X$.
For a matrix $G$, we use the notation $G[i][j]$ to index the entries of $G$, and use $G[i][]$ and $G[][j]$ to refer to the $i$th row and $j$th column of $G$, respectively.
For a prime power $q$, let $\F_q$ be the finite field of order $q$.
The notation $x \sim X$ means that the random variable $x$ is being sampled uniformly from the set $X$.

For a vector space $V$, let $\cL(V)$ denote the set of all subspaces of $V$. Furthermore, if $V$ is an $L$-dimensional space, then we define $\cL_{\dist(V)}$ as
\[
    \cL_{\dist}(V) := \inset{ U \in \cL(V) \mid \forall 1 \le i < j \le L, \exists u \in U \text{ such that }u[i] \neq u[j]}.
\]
That is, $\cL_{\dist}(V)$ is the set of all subspaces that for each pair of distinct coordinates, contains at least one vector whose entries differ at those coordinates.
We use $\ker \psi, \ker M$ to denote the kernel of a linear map $\psi$ and the kernel of the linear map given by matrix $M$ in the standard basis, respectively.
The notation $\0$ is used to denote the all zeroes matrix.

\subsection{Error-Correcting Codes}
An error-correcting code $\cC$ over alphabet $\Sigma$ is a subset of $\Sigma^n$ where every pair of distinct vectors in $\cC$ has large Hamming distance.
We denote by $n$ the block length of the code.
We say that a code has \emph{(relative) distance} $\delta$ if:
\[
    \min_{\substack{x, y \in \cC \\x \neq y}} \frac{d(x, y)}{n} \ge \delta.
\]
The \emph{rate} of the code $\cC$, usually denoted by $R$, is defined as:
\[
    R := \frac{\log_{\Sigma}\inabs{\cC}}{n}.
\]
In this paper, we concern ourselves with linear codes.
For a finite field $\F_q$, a linear code $\cC$ is a code that is a linear subspace of $\F_q^n$.
For linear codes, the dimension of the corresponding subspace and the rate are related in the following manner:
\[
    R = \frac{\dim \cC}{n}.
\]

We say that a code $\cC$ is a $[N, \delta, R]_\Sigma$ code if it has block length $N$, distance $\delta$, rate $R$, and is over the alphabet $\Sigma$.
For constants $\rho \in [0, 1]$, $L \in \N$, a code $\cC \subseteq \Sigma^n$ is $(\rho, L)$-list decodable if for every vector $y \in \Sigma^n$, we have
\[
    \inabset{c \in \cC \mid d(c, y) \le \rho n} \le L.
\]

A \emph{random linear code} (RLC) $\cC \subseteq \F_q^n$ of rate $R$ is the kernel of a uniformly random matrix in $\F_q^{(1-R)n \times n}$.

\paragraph{Concatenated Codes.} For integers $N, d$, $d < N$, let $\cout$ be a $[N, \rout, \dout]_\sigout$ code and let $\cin$ be a $[d, \rin, \din]_{\sigin}$ code, satisfying
\begin{equation}\label{eq:concat-eq}
    \inabs{\sigout}=\inabs{\cin} = \inabs{\sigin}^{\rin\cdot d}.
\end{equation}
Note that \cref{eq:concat-eq} allows us to construct an encoding function for $\cin$ that is a bijection from $\sigout$ to $\cin$, denoted by $\phi \colon \sigout \rightarrow \cin$.
Then the concatenated code $\cout \circ \cin$ is a $[Nd, \rout\cdot \rin, \dout\cdot\din]_{\sigin}$ code defined as
\[
    \cout \circ \cin = \inset{v \in \sigin^{([N]\times [d])} \mid \exists~c \in \cout, \bigl( \forall i \in [N], \phi(c[i]) \in \cin \bigr) \land (\phi(c[i]))_{i \in [N]} = v}.
\]
That is, for a codeword $c \in \cout$, we encode $c[i]$ for every $i \in [N]$ with the encoding map $\phi$, thus producing $(\phi(c[i]))_{i \in [N]}$.
This vector is the one obtained by concatenating the codewords of $\cin$ corresponding to each entry of $c$.
We perform this procedure for every codeword of $\cout$, and collect them in the set $\cout \circ \cin$.

It is easy to see that the rate and distance of $\cout \circ \cin$ is equal to $\rout \cdot \rin$ and $\dout \cdot \din$, respectively.

\subsection{Alon-Edmonds-Luby (AEL) Construction}
Let $G=(V_L \cup V_R, E)$ be a bipartite graph satisfying $\inabs{V_L}=\inabs{V_R}=N$, with both vertex sets having degree equal to $d$.
For a vertex $v \in V_L \cup V_R$, denote $\Gamma(v)$ to be the neighborhood of $v$.
For every vertex $v$, we will have an arbitrary, but fixed ordering on the edges incident on $v$.
This allows us to define the $i$th neighbor of $v$: for every $i \in [d]$, $\G_i(v)=w$ if $e=(v, w)$ is the $i$th edge of $v$.
We will also fix an ordering, according to $[N]$, on $V_L$ and $V_R$.
With a slight abuse of notation, we will use $\ell$ (respectively, $r$) to refer to a vertex in $V_L$ (respectively, $V_R$), and also an index in $[N]$.

Observe that the structure of $G$ implies a bijection $\varphi_G \colon (V_L \times [d]) \rightarrow (V_R \times [d])$.
Namely, $\varphi_G(\ell, i)=(r, j)$ for $i, j \in [d], \ell \in V_L$ and $r \in V_R$ if $\Gamma_i(\ell)=r$ and $\Gamma_j(r)=\ell$ both hold.
For our applications, we will be dealing with matrices whose rows are indexed by $(r, j)$ where $r \in [N]$ and $j \in [d]$.
We shall now develop some notation that allows us to ``project'' the rows of these matrices from the right to the left.

\begin{definition}[Projection Operation]\label{def:proj}
    For an integer $L$ and a matrix $A \in \sigin^{([N] \times [d]) \times L}$, define the matrix $A^\proj \in \sigin^{([N] \times [d]) \times L}$ as follows: $\forall \ell \in [N\dd], \forall i \in [d], \forall r \in [N\dd], \forall j \in [d]$, 
    \[
        A^\proj[(\ell, i)][]=A[(r, j)][] \iff \varphi_G(\ell, i) = (r, j).
    \]
\end{definition}
We think of $A$ as a matrix that creates a label (from its rows) for each outgoing edge from $V_R$, and $A^\proj$ as ``collecting'' the labels on the incoming edges into $V_L$. The graph $G$ plays the role of ``shuffling'' these edge labelings.
Thus, $A^\proj$ is created from $A$ by permuting its rows, according to $G$.

\begin{definition}[Projection on Vertices]\label{def:proj-on-vertices}
For $\ell \in [N]$, define the matrix $A^\proj(\ell) \in \sigin^{[d] \times L}$ to be the submatrix of $A^\proj$ satisfying
\[
    \forall j \in [d], (A^\proj(\ell))[j][]=A^\proj[(\ell, j)][].
\]
\end{definition}

\begin{definition}[Flattening Operation]\label{def:flatten}
    For an integer $n$ and a vector $h \in (\sigin^d)^n$, we define $h_\fl \in \sigin^{([n] \times [d])}$ to be the \emph{flattened} vector corresponding to $h$: that is, $h_\fl[(i,j)]=h[i](j)$ for all $i \in [n], j \in [d]$.
\end{definition}

Note that $h_\fl$ can also be viewed as a matrix having just a single column, and therefore \cref{def:proj}
applies to $h_\fl$ as well.
For codes $\cout, \cin$ where $\cout$ is a $[N, \rout, \dout]_\sigout$ code and $\cin$ is a $[d, \rin, \din]_{\sigin}$ code, the code $\cael(\cout, \cin , G) \subseteq \inbrak{\sigin^d}^{N\dd}$ is defined as follows
\[
    \cael(\cout, \cin , G) := \inset{ h \in \inbrak{\sigin^d}^{N\dd} : h^\proj_\fl \in \cout \circ \cin}.
\]

In all our constructions, the underlying graph $G$ is obtained from the following result.
\begin{claim}[Lemma 2.7, \cite{KMRS17}. Also see Claim E.1, \cite{KRZSW23}.]\label{clm:kmrs}
    Let $\beta, \eta, \zeta \in [0, 1]$.
    For infinitely many integers $N$, there is a $d=O(1/\zeta\eta^2)$, so that the following holds.
    There exists a bipartite $\lambda$-spectral expander graph $G=(V_L, V_R, E)$ that can be constructed in time $\poly(N\dd)$, with $N\dd$ vertices on each side, degree $d$ on both sides, and with the following property: for any set $Y \subseteq V_R$ of right-hand vertices with $|Y| = \beta N\dd$, we have 
    \[
        |\{ v \in V_L : |\Gamma(v) \cap Y| < (\beta-\eta)d\}| \leq \zeta N.
    \]
\end{claim}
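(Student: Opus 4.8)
Since \cref{clm:kmrs} asserts the existence of an explicit bipartite \emph{spectral} expander together with a standard sampling bound, the plan is to build $G$ from a strongly explicit near-Ramanujan graph and then apply the expander mixing lemma. First I would fix notation: a $d$-regular bipartite graph $G = (V_L, V_R, E)$ with $|V_L| = |V_R| = N$ has a biadjacency matrix $M \in \{0,1\}^{V_L \times V_R}$ whose largest singular value is $d$ (attained at the all-ones vectors on each side), and I write $\lambda$ for its second largest singular value. The analytic input is the bipartite expander mixing lemma: for all $S \subseteq V_L$ and $T \subseteq V_R$,
\[
  \left| e(S, T) - \frac{d}{N}\,|S|\,|T| \right| \;\le\; \lambda \sqrt{|S|\,|T|},
\]
which one proves by writing $e(S,T) = \mathbf{1}_S^\top M\, \mathbf{1}_T$, splitting $\mathbf{1}_S$ and $\mathbf{1}_T$ into their all-ones components plus orthogonal remainders, and bounding the remainder cross-term by $\lambda\,\|\mathbf{1}_S^{\perp}\|\,\|\mathbf{1}_T^{\perp}\| \le \lambda\sqrt{|S|\,|T|}$.

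The second step is to obtain such a $G$ with an arbitrarily good spectral gap. For any target $\gamma > 0$, there is a strongly explicit family of $d$-regular bipartite graphs with $d = O(1/\gamma^2)$ and normalized second singular value $\lambda/d \le \gamma$, with $N$ vertices on each side for infinitely many $N$, constructible in time $\poly(N)$ --- for instance, take the bipartite double cover of an explicit near-Ramanujan graph (its biadjacency matrix is exactly the adjacency matrix of the base graph, so its non-trivial singular values are at most $(2+o(1))\sqrt{d-1}$), or, if one is content with degree $\poly(1/\gamma)$, take a fixed bounded-gap base expander and raise it to a suitable power to drive $\lambda/d$ below $\gamma$. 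Setting $\gamma := \eta\sqrt{\zeta}$ then gives $d = O(1/(\zeta\eta^2))$ and explains the ``infinitely many $N$'' in the statement, which is simply inherited from the expander family.

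The last step is a two-way edge count. Fix $Y \subseteq V_R$ with $|Y| = \beta N$ and let $B := \{\, v \in V_L : |\Gamma(v) \cap Y| < (\beta - \eta)d \,\}$ be the bad set we must bound. Summing over $B$ gives the upper bound $e(B, Y) = \sum_{v \in B} |\Gamma(v)\cap Y| < (\beta-\eta)\,d\,|B|$, while the mixing lemma gives the lower bound $e(B,Y) \ge \frac{d}{N}\,|B|\,|Y| - \lambda\sqrt{|B|\,|Y|} = \beta d\,|B| - \lambda\sqrt{\beta N\,|B|}$. Chaining these and cancelling $\beta d\,|B|$ yields $\eta d\,|B| < \lambda\sqrt{\beta N\,|B|}$; dividing by $\sqrt{|B|}$ (the claim is trivial when $B = \emptyset$) and squaring gives
\[
  |B| \;<\; \frac{\lambda^2\,\beta\,N}{\eta^2\,d^2} \;\le\; \frac{(\lambda/d)^2}{\eta^2}\,N \;\le\; \frac{\gamma^2}{\eta^2}\,N \;=\; \zeta N,
\]
using $\beta \le 1$ and $\gamma = \eta\sqrt{\zeta}$, which is the desired bound.

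The one genuinely nontrivial ingredient is the second step: producing a strongly explicit bipartite expander whose normalized second singular value is below an arbitrary $\gamma$ with degree as small as $O(1/\gamma^2)$, together with the bookkeeping of which vertex counts are attainable (the source of the ``infinitely many $N$'' caveat). The mixing lemma and the counting argument are entirely routine; and if one accepts the weaker degree bound $\poly(1/\gamma)$, even the second step reduces to the elementary powering construction and the whole proof becomes self-contained.
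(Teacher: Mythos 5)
Your proposal is correct and follows the same route as the cited references: the paper itself does not reprove the claim but invokes \cite{KMRS17} (Lemma~2.7) and \cite{KRZSW23} (Claim~E.1), both of which establish the sampler property exactly as you do --- instantiate an explicit $d$-regular bipartite spectral expander with normalized second singular value $\lambda/d \le \gamma$ for $\gamma := \eta\sqrt{\zeta}$ (hence $d = O(1/(\zeta\eta^2))$ by the Ramanujan/near-Ramanujan bound $\lambda = O(\sqrt{d})$), then bound the bad set via a two-sided count of $e(B,Y)$ using the bipartite expander mixing lemma. Your arithmetic ($\eta d\sqrt{|B|} < \lambda\sqrt{\beta N}$, then square and use $\beta \le 1$) is the standard derivation and gives the claimed $|B| \le \zeta N$; the ``infinitely many $N$'' caveat is, as you note, inherited from the explicit expander family and is consistent with the statement.
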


The expander graph constructed in the preceding claim is a sampler in the sense that, for every sufficiently large subset of right vertices $Y$, a substantial fraction of left vertices have neighborhoods intersecting with $Y$ in a proportion that is roughly equal to the density of $Y$.

\begin{observation}[Explicitness of AEL Procedure, LDPC property]\label{obs:ael-explicit}
    We observe that $\cael(\cout, \cin , G) \subseteq \inbrak{\sigin^d}^{N\dd}$ can be constructed in time $\poly(N\dd)$---that is, in time polynomial in the block length of the code---provided that each of the three components, namely, $\cin$, $\cout$, and $G$, can themselves be constructed in time $\poly(N\dd)$.

    Moreover, $\cael(\cout, \cin , G)$ is an LDPC code if $\cout$ is an LDPC code.
\end{observation}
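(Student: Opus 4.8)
The plan is to unwind the definition of $\cael(\cout,\cin,G)$ and observe that it exhibits this code as the preimage of the concatenated code $\cout\circ\cin$ under a fixed, explicitly describable $\sigin$-linear bijection; constructing $\cael$ then amounts to composing a short chain of polynomial-time linear-algebra steps. First I would invoke the hypothesis to obtain, in time $\poly(Nd)$: a generator matrix of $\cout$ over $\sigin$ together with the encoding map $\phi\colon\sigout\to\cin$ of \cref{eq:concat-eq} (taken to be an $\sigin$-linear isomorphism onto $\cin\subseteq\sigin^d$, as is standard for concatenated codes), and the graph $G$ presented as an edge list carrying, for each vertex, the fixed ordering of its incident edges.

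From these I would build a generator matrix of $\cout\circ\cin\subseteq\sigin^{[N]\times[d]}$ by taking a basis $c^{(1)},\dots,c^{(k)}$ of $\cout$ and outputting $(\phi(c^{(t)}[i]))_{i\in[N]}$ for each $t$; since $\cout$ and $\phi$ are $\sigin$-linear these $k=\rout\rin Nd$ vectors span $\cout\circ\cin$, and the whole computation is $\poly(Nd)$. Next I would read off the bijection $\varphi_G\colon(V_L\times[d])\to(V_R\times[d])$ of \cref{def:proj} in one pass over the edge list: for each edge $e=(\ell,r)$, recording its index $i$ among the edges at $\ell$ and its index $j$ among the edges at $r$ gives $\varphi_G(\ell,i)=(r,j)$. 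The map $\Psi\colon h\mapsto h^{\proj}_{\fl}$ from $(\sigin^d)^{N}$ to $\sigin^{[N]\times[d]}$ is then simply the composition of the flattening relabeling of \cref{def:flatten} with the coordinate permutation of $\sigin^{[N]\times[d]}$ induced by $\varphi_G$ (\ie \cref{def:proj} specialized to $L=1$); in particular $\Psi$ is an $\sigin$-linear bijection whose permutation matrix is written down from $\varphi_G$ in $\poly(Nd)$ time.

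Since $\cael=\{\,h:\Psi(h)\in\cout\circ\cin\,\}=\Psi^{-1}(\cout\circ\cin)$ and $\Psi$ is a linear bijection, a generator matrix of $\cael$ is obtained by applying $\Psi^{-1}$ to each of the $k$ generators of $\cout\circ\cin$ --- a final $\poly(Nd)$ step; the same data also yields a $\poly(Nd)$-time encoder and, via the membership tests for $\cout$ and $\cin$, a $\poly(Nd)$-time membership test. There is no genuine obstacle here: the argument is pure bookkeeping, and the only points needing a little care are (i) the fact that $\Psi$ and $\phi$ are linear over the \emph{subfield} $\sigin$ --- precisely the caveat noted after \cref{thm:inf-the-big-one}, and what makes ``constructing $\cael$'' the same as ``constructing a generator matrix of $\cael$'' --- and (ii) being consistent about which of an edge's two endpoints contributes the index $i$ and which the index $j$ in the definition of $\varphi_G$. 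Finally, since $d=O(1/\zeta\eta^2)$ is a constant in every instantiation coming from \cref{clm:kmrs}, we indeed have $\poly(Nd)=\poly(N)$, matching the claimed ``polynomial in the block length''.
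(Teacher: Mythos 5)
Your argument is correct. The paper states \cref{obs:ael-explicit} without any proof, treating it as self-evident bookkeeping, and your write-up simply fills in the details: the observation that the map $h\mapsto h^\proj_\fl$ is an explicit $\sigin$-linear bijection (flatten, then permute coordinates by the permutation read off from $G$'s edge list), so that $\cael=\Psi^{-1}(\cout\circ\cin)$ and a generator matrix of $\cael$ is obtained from one for $\cout\circ\cin$ in $\poly(N)$ time. This is precisely the reasoning the observation implicitly relies on, and your cautionary remarks --- linearity over the subfield $\sigin$ only, and bookkeeping the two edge-orderings in $\varphi_G$ --- are the right things to be careful about.
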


\section{Warm Up: Construction of List-Decodable Codes}\label{sec:warmup}
In this section, we give a proof for \cref{thm:inf-list-dec-ael}.
We define the required concepts and definitions in \cref{subsec:hypergs-rims}, and state the proof in \cref{subsec:proof-list-dec}.

\subsection{Preliminaries}\label{subsec:hypergs-rims}
A hypergraph $\cH=(V, \cE)$ consists of a vertex set $V$, and a collection $\cE$ of subsets of $V$.
The subsets are known as hyperedges.
For every hyperedge $e$, we define the weight of $e$ as $\wt(e) := \max \inbrak{\inabs{e}-1, 0}$.
Furthermore, the weight of the set of hyperedges is defined as the sum of hyperedge weights.
That is, $\wt(\cE) := \sum_{e \in \cE} wt(e)$.

\begin{definition}[Agreement Hypergraph]
    For an integer $n \in \N$ and a set of vectors $y, c_1,\ldots,c_t \in \Sigma^n$, we define an agreement hypergraph $\cH(y,c_1,\ldots,c_t)=([t], \cE)$ as follows:
    For each $i \in [n]$, construct hyperedge $\e_i \subseteq [t]$ by including the indexes of all vectors that agree with $y$ at the $i$th coordinate.
    That is, $e_i := \inset{j \in [t] \colon c_j[i]=y[i]}$.
\end{definition}

\begin{definition}[Weak Partition Connectivity]
    For an integer $n \in \N$, and $R \in [0, 1]$, we say that a hypergraph $H=(V, \cE)$ is $(R, n)$-weakly-partition-connected if for every partition $\cP$ of the vertex set $V$, the following holds:
    \begin{equation} \label{eq:wpc-cond}
        \sum_{e \in \cE} \max\inset{\inabs{\cP(e)}-1, 0} \ge R n(\inabs{\cP}-1),
    \end{equation}
    where $\inabs{\cP}$ denotes the number of parts in $\cP$ and $\inabs{\cP(e)}$ denotes the number of parts that intersect non-trivially with $e$.
\end{definition}

From Lemma 2.3 in \cite{AGL24}, we see that the agreement hypergraph corresponding to a bad list-decoding configuration contains a weakly-partition-connected sub-hypergraph.
\begin{lemma}[Lemma 2.3, \cite{AGL24}]\label{lem:bad-list-dec-conf}
    Suppose that for vectors $c_1,\ldots,c_{L+1} \in \Sigma^n$, the average Hamming distance of these vectors from a vector $y \in \Sigma^n$ is at most $\frac{L}{L+1}(1-R)n$.
    Then, for some subset $J \subseteq [L+1]$ where $\inabs{J}\ge 2$, the agreement hypergraph corresponding to vectors $y$ and $\inset{c_j \colon j \in J}$ is $(R, n)$-weakly-partition-connected.
\end{lemma}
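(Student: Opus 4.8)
The plan is to prove Lemma~\ref{lem:bad-list-dec-conf} by a minimal-counterexample / uncrossing argument on partitions, which is the standard way weak partition connectivity is extracted. I would first reformulate the hypothesis in terms of hyperedge weights: if the average Hamming distance of $c_1,\dots,c_{L+1}$ from $y$ is at most $\frac{L}{L+1}(1-R)n$, then the total number of (coordinate, index) agreements is at least $\sum_{i} |e_i| \ge (L+1)n - L(1-R)n = (L+1)Rn + (1-R)(L+1)n \cdot \tfrac{1}{L+1}\cdot\ldots$ — more cleanly, $\sum_i |e_i| \ge n + LRn \cdot$ (a careful count), so that $\wt(\cE) = \sum_i \max(|e_i|-1,0) \ge \sum_i (|e_i| - 1) \ge L R n$ on the full vertex set $[L+1]$, matching \eqref{eq:wpc-cond} for the trivial-refinement-to-singletons partition (where $|\cP|-1 = L$). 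So the hypothesis exactly says the connectivity inequality holds for the finest partition; the work is to promote it to \emph{all} partitions, possibly after passing to a vertex subset $J$.

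The core step is the following dichotomy, run iteratively. Given the current vertex set $V$ (initially $[L+1]$), either the agreement hypergraph restricted to $V$ is $(R,n)$-weakly-partition-connected — in which case we are done with $J = V$ — or there is a ``bad'' partition $\cP$ of $V$ violating \eqref{eq:wpc-cond}. From a bad partition I would extract a nonempty proper part $A \subsetneq V$ that is itself ``deficient'': intuitively, contract each part of $\cP$ to a single vertex, obtaining an auxiliary weighted (multi)hypergraph on $|\cP|$ vertices whose total weight is $< Rn(|\cP|-1)$; by a standard averaging/greedy argument (an analogue of the fact that a graph with fewer than $k(n-1)$ edges — or here, weight — has a vertex set that is too sparsely connected to the rest), I can find a union of parts $A$ such that the agreement hypergraph on the vertex set $A$, by itself, has $\wt(\cE|_A) \ge R n(|A'| - 1)$ where $A'$ is... — more precisely, I want to show $A$ (as a subset of the original vertices, at least $2$ of them) still satisfies the finest-partition instance of \eqref{eq:wpc-cond}, i.e. $\sum_i \max(|e_i \cap A| - 1, 0) \ge Rn(|A|-1)$, so that the induction hypothesis can be reapplied to $A$ with the same threshold $R$. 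Since $|V|$ strictly decreases and stays $\ge 2$ as long as we can keep extracting (the case $|A|=1$ would mean the deficiency was vacuous, contradicting strict violation), the process terminates at a weakly-partition-connected subhypergraph on some $J$ with $|J| \ge 2$.

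The two routine ingredients I would invoke without belaboring: (i) the arithmetic converting the distance bound into $\wt(\cE) \ge LRn$ on $[L+1]$ — just $\sum_i(|e_i|-1) = \big(\sum_i |e_i|\big) - n$ and $\sum_i |e_i| = \sum_{j=1}^{L+1}\big(n - d(c_j,y)\big) \ge (L+1)n - (L+1)\cdot\frac{L}{L+1}(1-R)n = (L+1)n - L(1-R)n$, whence $\wt(\cE) \ge (L+1)n - L(1-R)n - n = LRn$; and (ii) the observation that this bound is \emph{preserved under restriction to a union of parts of a bad partition} — this is exactly where the slack from the partition violation $\sum_e \max(|\cP(e)|-1,0) < Rn(|\cP|-1)$ gets traded against the uncrossing. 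The main obstacle, and the step I would spend the most care on, is precisely this extraction: showing that a partition witnessing a \emph{global} violation forces a \emph{sub-vertex-set} $A$ (of size $\ge 2$) satisfying the base-case inequality with the \emph{same} constant $R$. This is the matroid-union / principal-partition style argument underlying Nash-Williams–type theorems; I would either cite the corresponding result from~\cite{AGL24} directly (Lemma~2.3 there is stated in this form) or reconstruct it via a submodularity argument on the function $A \mapsto \sum_i \max(|e_i \cap A| - 1, 0) - Rn(|A| - 1)$, observing it is (after the $-Rn(|A|-1)$ correction) of a form amenable to a ``take a maximal tight set'' argument. Everything else — the termination, the $|J| \ge 2$ guarantee, the bookkeeping — is then immediate.
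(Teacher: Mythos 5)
Your high-level plan---extract an inclusion-minimal vertex subset $J$ satisfying the ``base-case'' inequality $\sum_i \wt(e_i \cap J) \geq (|J|-1)Rn$, and argue that such a $J$ must be weakly partition connected---is the same strategy the paper uses, and your arithmetic converting the distance hypothesis into $\wt(\cE) \geq LRn$ is correct. The paper does this without iteration: it fixes the inclusion-minimal $J$ with $|J|\geq 2$ satisfying the base-case inequality once (it exists since $J=[L+1]$ works) and then verifies \eqref{eq:wpc-cond} for \emph{every} partition of that fixed $J$ in one shot, rather than descending through a sequence of bad partitions.

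However, your proposal has a genuine gap at precisely the step you yourself flag as ``the main obstacle'': you never carry out the extraction of the smaller set $A$ from a bad partition (equivalently, in the paper's one-shot phrasing, the direct verification that the minimal $J$ satisfies \eqref{eq:wpc-cond}). Gesturing at ``submodularity'' and ``take a maximal tight set,'' or offering to cite the result from~\cite{AGL24}, is not a proof of the statement. The missing ingredient is a short, elementary telescoping identity: for every hyperedge $e$ and every partition $\cP = P_1 \sqcup \cdots \sqcup P_p$,
\[
\max\bigl(\inabs{\cP(e)}-1,\,0\bigr) \;=\; \wt(e) \;-\; \sum_{b\in[p]}\wt(e\cap P_b),
\]
which follows from $\sum_b \inabs{e\cap P_b} = |e|$ together with $\sum_b \mathbbm{1}[e\cap P_b\neq\emptyset] = \inabs{\cP(e)}$ (check the cases $|e|\leq 1$ and $|e|\geq 2$ separately). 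Summing over $e$, the left side of \eqref{eq:wpc-cond} equals $\sum_e \wt(e\cap J) - \sum_b \sum_e \wt(e\cap P_b)$. The first term is $\geq (|J|-1)Rn$ by the choice of $J$; each inner sum $\sum_e \wt(e\cap P_b)$ is $\leq (|P_b|-1)Rn$ by inclusion-minimality when $|P_b|\geq 2$ (and both sides are $0$ when $|P_b|=1$), since every $P_b$ of a non-trivial partition is a strict subset of $J$. Subtracting gives exactly $(|\cP|-1)Rn$. Without this identity on the page, your ``averaging/greedy'' extraction has no content and the descent you describe cannot be started; with it, the descent is unnecessary because the minimal $J$ is already seen to be weakly partition connected directly.
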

\begin{proof}
    Let $\cH=([L+1], \cE)$ be the agreement hypergraph corresponding to vectors $y$ and $c_1,\ldots,c_{L+1} \in \Sigma^n$.
    Since
    \begin{equation}\label{eq:hyperedge-wt-large}
        \sum_{i \in [n]} \frac{(L+1)-\inabs{e_i}}{L+1} = \sum_{j \in [L+1]} \frac{ \sum_{i \in [n]} \mathbbm{1}[c_j(i) \neq y(i)]}{L+1} = \sum_{j \in [L+1]} \frac{d(y, c_i)}{L+1} \le \frac{L}{L+1}(1-R)n,
    \end{equation}
    we have
    \[
        \wt(\cE) = \sum_{i \in [n]} \wt(e_i) \ge -n + \sum_{i \in [n]} \inabs{e_i} \ge LRn.
    \]
    The last inequality follows from \cref{eq:hyperedge-wt-large}.

    Let $J \subseteq [L+1]$ be an inclusion minimal subset with $|J| \ge 2$ such that
    \begin{equation}\label{eq:inc-min-subset}
        \sum_{i \in [n]} \wt(e_i \cap J) \ge LRn.
    \end{equation}
    The existence of such a $J$ follows from the fact that $J=[L+1]$ satisfies \cref{eq:inc-min-subset}.
    Let $\cH'=(J, \cE')$ be the hypergraph with vertex set $J$ and edge set $\cE' := \inset{J \cap e \mid e \in \cE}$.

    We now prove that $\cH'$ is $(R, n)$-weakly-partition-connected.
    Observe that \cref{eq:wpc-cond} follows from \cref{eq:inc-min-subset} when $\cP$ is the trivial partition with a single part.
    Now, consider a non-trivial partition $\cP=P_1 \sqcup \ldots \sqcup P_p$.
    We have
    \begin{align*}
        \sum_{e \in \cE'} \max \inset{\inabs{\cP(e)}-1, 0} &= \sum_{\substack{e \in \cE' \\ e \neq \emptyset}} \Bigl( -1 + \sum_{b \in [p]} \mathbbm{1}[\inabs{e \cap P_b} > 0]\Bigr) \\
        &= \sum_{\substack{e \in \cE' \\ e \neq \emptyset}} \Bigl( (|e|-1) - \sum_{b \in [p]} \inbrak{\inabs{e \cap P_b} - \mathbbm{1}[\inabs{e \cap P_b} > 0]} \Bigr) \\
        &= \sum_{\substack{e \in \cE' \\ e \neq \emptyset}} \Bigl( \max (|e|-1, 0) - \sum_{b \in [p]} \max \inbrak{\inabs{e \cap P_b}-1, 0} \Bigr) \\
        &= \sum_{e \in \cE'} \wt(e) - \sum_{b \in [p]} \sum_{e \in \cE'} \wt(e \cap P_b) \\
        &\ge (\inabs{J}-1)Rn-\sum_{b \in [p]} (\inabs{P_b}-1)Rn \\
        &= (p-1)Rn=(\inabs{\cP}-1)Rn,
    \end{align*}
    where the last inequality follows from the inclusion minimality of set $J$, and the fact that every $P_b$ is a strict subset of $J$.
\end{proof}

Lemma 2.14 of \cite{AGL24} proves the robustness of weakly-partition-connected hypergraphs to hyperedge deletions:
\begin{lemma}[Robustness of weakly-partition-connected hypergraphs (Lemma 2.14, \cite{AGL24})]\label{lem:wpc-robust}
    Let $\cH=([t], \cE)$ be a $(R+\eps, n)$-weakly-partition-connected hypergraph. Then for all sets $\cE' \subseteq \cE$, $|\cE'| \le \eps n$, the hypergraph $\cH'=([t], \cE \setminus \cE')$ is $(R, n)$-weakly-partition-connected.
\end{lemma}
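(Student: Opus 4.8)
The plan is to verify the defining inequality of $(R,n)$-weak-partition-connectivity for $\cH' = ([t], \cE \setminus \cE')$ directly, checking it for one partition at a time. Fix an arbitrary partition $\cP$ of the vertex set $[t]$. The functional $\sum_{e} \max\{|\cP(e)| - 1, 0\}$ is additive over the hyperedge set, so I would split
\[
    \sum_{e \in \cE \setminus \cE'} \max\{|\cP(e)| - 1, 0\} = \sum_{e \in \cE} \max\{|\cP(e)| - 1, 0\} - \sum_{e \in \cE'} \max\{|\cP(e)| - 1, 0\}.
\]

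For the first sum on the right I would invoke the hypothesis that $\cH$ is $(R+\eps, n)$-weakly-partition-connected, which lower bounds it by $(R+\eps)\, n\, (|\cP|-1)$. For the second sum, the key (and only) observation is the crude per-hyperedge estimate $\max\{|\cP(e)|-1,0\} \le |\cP| - 1$, valid because a hyperedge can meet at most all $|\cP|$ parts of $\cP$; summing this over the at most $\eps n$ hyperedges of $\cE'$ gives an upper bound of $\eps n\, (|\cP| - 1)$. Subtracting, the left-hand side is at least $(R+\eps)\,n\,(|\cP|-1) - \eps n\,(|\cP|-1) = R\, n\, (|\cP|-1)$, which is exactly the condition demanded of $\cH'$ for the partition $\cP$. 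Since $\cP$ was arbitrary, $\cH'$ is $(R,n)$-weakly-partition-connected, as claimed.

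There is no genuine obstacle here: the proof is a one-line counting estimate whose whole content is that removing a single hyperedge decreases the partition-connectivity functional $\sum_e \max\{|\cP(e)|-1,0\}$ by at most $|\cP|-1$. The one point worth stating carefully is that this per-hyperedge loss bound holds \emph{uniformly} over all partitions, so that the slack $\eps n\,(|\cP|-1)$ built into the hypothesis suffices to absorb the loss from deleting up to $\eps n$ hyperedges simultaneously for every $\cP$.
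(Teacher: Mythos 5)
Your proof is correct and follows exactly the same decomposition as the paper's: split the sum over $\cE \setminus \cE'$ into the sum over $\cE$ minus the sum over $\cE'$, lower bound the first by the $(R+\eps,n)$-weak-partition-connectivity hypothesis, and upper bound the second by $|\cE'|(|\cP|-1) \le \eps n(|\cP|-1)$ using the per-hyperedge bound $\max\{|\cP(e)|-1,0\} \le |\cP|-1$. The only (cosmetic) difference is that you spell out the justification for the per-hyperedge bound, which the paper leaves implicit.
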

\begin{proof}
    Consider any partition $\cP$ of $[t]$. We have
    \begin{align*}
        \sum_{e \in \cE \setminus \cE'} \max \inbrak{\inabs{\cP(e)}-1, 0} &= \sum_{e \in \cE} \inbrak{\inabs{\cP(e)}-1, 0} - \sum_{e \in \cE'}  \inbrak{\inabs{\cP(e)}-1, 0} \\
        &\ge (R+\eps)(\inabs{\cP}-1)-\inabs{\cE'}(\inabs{\cP}-1) \\
        &= Rn(\inabs{\cP}-1).
    \end{align*}
\end{proof}

We now proceed to define another object used in the proofs of \cite{AGL24}: a Reduced Intersection Matrix ($\RIM$).
Although several versions of the $\RIM$ appeared in other works such as \cite{ST20, GLS24, BGM23}, this variant was first introduced in \cite{GZ23}.

In order to define the $\RIM$, we need to set up some notation.
For any integers $k, m$ with $k \le m$ and a finite field $\F_q$, define the symbolic matrix $\cG \in \F_q(X_{1,1},\ldots,X_{k,n})^{k \times n}$ as
\[
    \cG := 
    \begin{bmatrix}
        X_{1,1} & \ldots & X_{1,n}\\
        \vdots  & \ddots & \vdots \\
        X_{1,k} & \ldots & X_{k,n}
    \end{bmatrix}.
\]
The $i$th column of $\cG$ is denoted by $\cG_i=[X_{1,i},\ldots,X_{k, i}]$.
\begin{definition}[Reduced Intersection Matrix]
    The Reduced Intersection Matrix $\RIM_\cH$ associated with a hypergraph $\cH=([t], \cE=(e_1,\ldots,e_n))$ is a $\wt(\cE) \times (t-1)k$ matrix with entries from $\F_q(X_{1,1},\ldots,X_{k,n})^{k \times n}$.
    We construct $\RIM_\cH$ as follows: for every hyperedge $e_i \in \cE$ containing vertices $j_1<j_2<\ldots<j_{|e_i|}$, add $\wt(e_i)=|e_i|-1$ rows for each $\ell=2,\ldots,|e_i|$.
    Each row has $(t-1)$ segments, each of length $k$, of the form $r_{i,\ell}=(r^{(1)}, \ldots, r^{(t-1)})$. The segments are defined as follows:
    \begin{itemize}
        \item If $j=j_1$, then $r^{(j)}=\cG_i^\top=[X_{1,i},\ldots,X_{k,i}]$.
        \item If $j=j_\ell$, and $j_\ell \neq t$, then $r^{(j)}=\cG_i^\top=-[X_{1,i},\ldots,X_{k,i}]$.
        \item Otherwise, $r^{(j)}=0^k$.
    \end{itemize}
\end{definition}

\begin{definition}(Substituted $\RIM$)
    For any matrix $G \in \F_q^{k \times n}$ and a $\RIM_\cH$ associated with the hypergraph $\cH=([t], \cE)$, the \emph{substituted Reduced Intersection Matrix}, denoted by $\RIM_{\cH}(G)$ is defined to be the matrix in $\F_q^{\wt(\cE) \times (t-1)k}$ obtained by substituting every indeterminate symbol in $\RIM_\cH$ with the corresponding entry from $G$.
    That is, we obtain $\RIM_{\cH}(G)$ from $\RIM_\cH$ by replacing, for every $i \in [k], j \in [n]$, $X_{i,j}$ with $G[i][j]$.
\end{definition}

\begin{lemma}[Column Rank of Reduced Intersection Matrices (Lemma 4.2, \cite{AGL24})]\label{lem:col-rank-rim}
    Let $\cH$ be the agreement hypergraph
    For a vector $y \in \F_q^n$, and a matrix $G \in \F_q^{k \times n}$, suppose that the corresponding agreement hypergraph for $y$ and codewords $c_1,\ldots,c_t$ generated by $G$ is equal to $\cH$.
    Moreover, let $c_1,\ldots,c_t$ satisfy the property that they are not all equal.
    Then, the substituted reduced intersection matrix $\RIM_{\cH}(G)$ does not have full column rank.
\end{lemma}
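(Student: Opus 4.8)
The plan is to prove the statement directly, by exhibiting an explicit nonzero vector in the kernel of $\RIM_\cH(G)$, regarded as a linear map $\F_q^{(t-1)k} \to \F_q^{\wt(\cE)}$ acting on column vectors. Producing such a vector shows that the columns of $\RIM_\cH(G)$ are linearly dependent, hence that it does not have full column rank. For $t=1$ the statement is vacuous, since there is then only one codeword, so assume $t \ge 2$. Write $G_i = (G[1][i],\ldots,G[k][i]) \in \F_q^k$ for the $i$th column of $G$. First I would fix, for each $j \in [t]$, a message vector $m_j \in \F_q^k$ with $c_j = m_j G$; such vectors exist because each $c_j$ is generated by $G$, and they satisfy $c_j[i] = \langle m_j, G_i\rangle$ for every $i \in [n]$. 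Then I would set $w = (w_1,\ldots,w_{t-1}) \in \F_q^{(t-1)k}$, split into $t-1$ consecutive blocks of length $k$, with $w_j := m_j - m_t$. The codeword $c_t$ plays the role of an anchor here; this is the conceptual reason the $\RIM$ has exactly $t-1$ blocks of columns and why the vertex $t$ is singled out in its definition.

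That $w$ is nonzero is immediate: if $w = 0$ then $m_1 = \cdots = m_{t-1} = m_t$, hence $c_1 = \cdots = c_t$, contradicting the hypothesis that $c_1,\ldots,c_t$ are not all equal.

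The heart of the argument is the verification that $\RIM_\cH(G)\,w = 0$, carried out row by row. A row of $\RIM_\cH(G)$ is indexed by a pair $(i,\ell)$, where $e_i$ is a hyperedge with vertices $j_1 < j_2 < \cdots < j_{|e_i|}$ — so $|e_i| \ge 2$, and in particular $j_1 < j_{|e_i|} \le t$, meaning $j_1$ is always a legitimate block index at most $t-1$ — and $\ell \in \{2,\ldots,|e_i|\}$. By construction this row carries the length-$k$ vector $G_i$ in block $j_1$, its negative $-G_i$ in block $j_\ell$ when $j_\ell \ne t$, and zeros in all other blocks. Pairing this row with $w$ and using $\langle G_i, m_j - m_t\rangle = c_j[i] - c_t[i]$, its contribution is $(c_{j_1}[i] - c_t[i]) - (c_{j_\ell}[i] - c_t[i]) = c_{j_1}[i] - c_{j_\ell}[i]$ when $j_\ell \ne t$, and is $c_{j_1}[i] - c_t[i] = c_{j_1}[i] - c_{j_\ell}[i]$ when $j_\ell = t$. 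In both cases $j_1, j_\ell \in e_i$, so by the definition of the agreement hypergraph $c_{j_1}[i] = y[i] = c_{j_\ell}[i]$, and the entry equals $0$. Hence every coordinate of $\RIM_\cH(G)\,w$ vanishes; since $w \ne 0$, the matrix $\RIM_\cH(G)$ does not have full column rank.

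I expect the main difficulty to be bookkeeping rather than anything substantive. The one nonobvious step is guessing the correct kernel vector — the tuple of message differences relative to the anchor index $t$, which is exactly what makes the asymmetric treatment of vertex $t$ in the $\RIM$ cancel out — after which the verification is a direct substitution; one must still be slightly careful with the boundary case $j_\ell = t$, where a block is dropped from the row, to see that the agreement relation $c_{j_1}[i] = c_{j_\ell}[i]$ still forces that entry to zero. No property of the field, the code, or the hypergraph beyond the literal definitions is used.
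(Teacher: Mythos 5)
Your proposal is correct and takes essentially the same route as the paper: both exhibit the block vector $(m_1-m_t,\ldots,m_{t-1}-m_t)$ as a nonzero element of the kernel of $\RIM_\cH(G)$. The paper asserts the key identity $\RIM_\cH(G)\cdot(m_1-m_t,\ldots,m_{t-1}-m_t)^\top=0$ without spelling it out; your row-by-row verification, including the $j_\ell = t$ boundary case, is the same calculation made explicit.
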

\begin{proof}
    Let $m_1,\ldots,m_t$ be the message vectors for codewords $c_1,\ldots,c_t$.
    That is, $c_i=m_i\cdot G$ for all $i \in [t]$.
    Then, we see that
    \begin{equation}\label{eq:rim-not-full-rank}
        \RIM_{\cH}(G) \cdot \begin{bmatrix}
                                m_1-m_t \\
                                \vdots  \\
                                m_{t-1}-m_t 
                            \end{bmatrix} = 0.
    \end{equation}
    Because $c_1,\ldots,c_t$ were not all equal, the same applies for $m_1,\ldots,m_t$.
    Therefore the vector multiplied to $\RIM_{\cH}(G)$ in \cref{eq:rim-not-full-rank} is non-zero.
    Because the agreement hypergraph corresponding to $y$ and codewords $c_1,\ldots,c_t$ is the same as $\cH$, every constraint in $\RIM_{\cH}(G)$ is satisfied, and hence, $\RIM_{\cH}(G)$ does not have full column rank.
\end{proof}

The following lemma, which is the crux of the main result of \cite{AGL24}, states that for a uniformly random matrix $G \in \F_q^{k \times n}$, the corresponding substituted reduced intersection matrices associated with all weakly partition-connected agreement hypergraphs have full column rank.

\begin{lemma}[Existence of Linear Codes whose RIMs have Full Column Rank (Theorem 1.3, Lemma 4.6, \cite{AGL24})]\label{lem:good-rim-subs}
    For integers $n, L$, where $L$ is a constant independent of $n$, rate $R \in [0, 1]$ and a sufficiently small $\eps>0$, alphabet size $q\geq 2^{10L/\eps}$, with probability at least $1-2^{-Ln}$, a uniformly random matrix $\mG \in \F_q^{Rn \times n}$ has the following property: for every agreement hypergraph $\cH$ on a vertex set of size $\leq L+1$ that is $(R+\eps/2, n)$-weakly-partition-connected, the substituted reduced intersection matrix $\RIM_{\cH}(\mG)$ has full column rank.
\end{lemma}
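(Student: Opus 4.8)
The plan is to fix one $(R+\eps/2,n)$-weakly-partition-connected hypergraph $\cH=([t],\cE)$ with $t\le L+1$, show that $\RIM_{\cH}(\mathbf G)$ fails to have full column rank with probability at most $C_L\cdot q^{-\eps n/2}$ for a constant $C_L$ depending only on $L$ and $\eps$, and then take a union bound over the at most $2^{(L+2)n}$ such hypergraphs (each being an ordered $n$-tuple of subsets of a set of size $\le L+1$). Since $q\ge 2^{10L/\eps}$ forces $q^{-\eps n/2}\le 2^{-5Ln}$, this comfortably yields probability at least $1-2^{-Ln}$ once $n$ is large compared with $L$ and $1/\eps$.

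The first step is to convert the rank statement into one about a kernel. Unwinding the definition of $\RIM_{\cH}$ exactly as in the computation in the proof of \cref{lem:col-rank-rim}, a vector $v=(v^{(1)},\dots,v^{(t-1)})\in(\F_q^{Rn})^{t-1}$ lies in the kernel of $\RIM_{\cH}(\mathbf G)$ precisely when, upon setting $v^{(t)}:=0$ and letting $\mathbf G_i\in\F_q^{Rn}$ denote the $i$-th column of $\mathbf G$, one has $\langle \mathbf G_i,v^{(a)}\rangle=\langle \mathbf G_i,v^{(a')}\rangle$ for every $i\in[n]$ and every pair $a,a'\in e_i$. Thus $\RIM_{\cH}(\mathbf G)$ fails to have full column rank iff such a \emph{nonzero} $v$ exists, and it suffices to bound the probability of that event.

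The heart of the argument is a union bound over the bad $v$ organized by the linear structure of the tuple $(v^{(1)},\dots,v^{(t)})$, rather than the useless naive bound over all $q^{(t-1)Rn}$ candidates. I would group the bad $v$ by two constant-size parameters: the partition $\mathcal{Q}$ of $[t]$ recording which $v^{(a)}$ coincide (so the part containing $t$ equals $\{a:v^{(a)}=0\}$), and the dimension $d:=\dim\operatorname{span}\{v^{(a)}\}\in\{1,\dots,L\}$, which is at least $1$ because $v\ne0$. For a fixed $\mathcal{Q}$ with nonzero parts $P_1,\dots,P_r$ and a fixed $d$, order the parts and call \emph{innovations} the $d$ parts whose value does not lie in the span of the values of earlier parts; then a bad $v$ is determined by the $d$ linearly independent innovation values $b_1,\dots,b_d\in\F_q^{Rn}$ (at most $q^{d\cdot Rn}$ choices) together with the expression of each remaining part's value as a linear combination of earlier innovations ($q^{O(L^2)}$ further choices). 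Writing $U\in\F_q^{d\times Rn}$ for the matrix with rows $b_1,\dots,b_d$, the matrix $\mathbf C:=U\mathbf G\in\F_q^{d\times n}$ is uniformly random with independent columns; and if $\beta_a\in\F_q^d$ is the coordinate vector of $v^{(a)}$ in the basis $b_1,\dots,b_d$ (with $\beta_a=0$ for $a$ in the zero part), the constraint at coordinate $i$ becomes $\mathbf C_i\in W_i^{\perp}$, where $W_i:=\operatorname{span}\{\beta_a-\beta_{a'}:a,a'\in\mathcal{Q}(e_i)\}\subseteq\F_q^d$. Hence a fixed bad $v$ survives with probability exactly $q^{-\sum_{i\in[n]}\dim W_i}$.

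The decisive estimate — and the only place weak-partition-connectivity enters — is that $\sum_{i}\dim W_i\ge(R+\eps/2)\,n\,d$; I expect this to be the main obstacle, since it has to hold even when the $v^{(a)}$ are highly linearly dependent (the regime in which the crude bound $\dim W_i\ge|\mathcal{Q}(e_i)|-1$ breaks). I would prove it by constructing a partition $\cP^{*}$ of $[t]$ into exactly $d+1$ parts: the zero part $\{a:v^{(a)}=0\}$, and, for each level $l\in[d]$, the $l$-th innovation part together with every non-innovation part whose coordinate vector $\beta$ has its largest nonzero entry at position $l$. A short triangularity argument shows that if $e_i$ meets $s$ of the level parts (plus possibly the zero part), then the corresponding vectors $\beta_a$ are linearly independent, so $\dim W_i\ge|\cP^{*}(e_i)|-1$; summing over $i$ and applying $(R+\eps/2,n)$-weak-partition-connectivity of $\cH$ to $\cP^{*}$ (for which $|\cP^{*}|-1=d$) gives the bound. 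Feeding this back, the probability that some bad $v$ with parameters $(\mathcal{Q},d)$ exists is at most $q^{d\cdot Rn}\cdot q^{O(L^2)}\cdot q^{-(R+\eps/2)nd}=q^{O(L^2)}\cdot q^{-\eps nd/2}$; summing over the constantly many pairs $(\mathcal{Q},d)$ and using $d\ge1$ gives the per-hypergraph bound $C_L\cdot q^{-\eps n/2}$, and the union bound over all $(R+\eps/2,n)$-weakly-partition-connected hypergraphs on at most $L+1$ vertices completes the argument.
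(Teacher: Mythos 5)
The paper does not give its own proof of this lemma; it simply cites the proof of Theorem 1.3 in \cite{AGL24} as the source. Your reconstruction is correct and follows essentially the same route as \cite{AGL24}: organize the kernel vectors $v$ of $\RIM_{\cH}(\mathbf G)$ by the coincidence partition $\cQ$ and the dimension $d$ of $\operatorname{span}\{v^{(a)}\}$; pass to coordinate vectors $\beta_a\in\F_q^d$ so that each coordinate contributes a constraint $\mathbf C_i\in W_i^{\perp}$ with $\mathbf C=U\mathbf G$ uniform and independent across columns; construct the $(d+1)$-part partition $\cP^{*}$ via ``largest nonzero entry of $\beta$'' levels plus the zero part, with the triangularity observation giving $\dim W_i\ge\max(|\cP^{*}(e_i)|-1,0)$; and then invoke $(R+\eps/2,n)$-weak-partition-connectivity on $\cP^{*}$ to obtain $\sum_i\dim W_i\ge(R+\eps/2)nd$, which precisely cancels the $q^{dRn}$ enumeration term and leaves a decaying $q^{-\eps nd/2}$ factor that survives the $2^{O(Ln)}$ union bound over hypergraphs thanks to $q\ge 2^{10L/\eps}$. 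This matches the argument the paper is deferring to, so there is nothing genuinely different to flag.
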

The proof of \cref{lem:good-rim-subs} follows from the proof of Theorem 1.3 in \cite{AGL24}. We record an important corollary of the lemma.

\begin{corollary}\label{cor:good-rim-subs}
    For integers $n, L$ such that $n>1/L$, rate $R \in [0, 1]$ and a sufficiently small $\eps>0$, there exist $\F_q$-linear codes $\cC \subseteq \F_q^n$, where $q=2^{10L/\eps}$, with the following property: for every agreement hypergraph $\cH$ on a vertex set of size $\leq L+1$ that is $(R+\eps/2, n)$-weakly-partition-connected, the only set of codewords in $\cC$ that satisfy every agreement in $\cH$ is the trivial set of codewords that are all equal.
\end{corollary}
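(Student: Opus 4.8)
The corollary is essentially a repackaging of \cref{lem:good-rim-subs} together with the kernel identity underlying \cref{lem:col-rank-rim}. The plan is to extract a single good generator matrix and take $\cC$ to be the code it generates. Since $n > 1/L$ we have $Ln > 1$, so $1 - 2^{-Ln} > 0$; hence \cref{lem:good-rim-subs}, applied with $q = 2^{10L/\eps}$, guarantees the existence of a matrix $\mG \in \F_q^{Rn \times n}$ such that for every agreement hypergraph $\cH$ on a vertex set of size at most $L+1$ that is $(R+\eps/2,n)$-weakly-partition-connected, the substituted reduced intersection matrix $\RIM_\cH(\mG)$ has full column rank. Fix such a $\mG$ and set $\cC := \inset{m\mG \mid m \in \F_q^{Rn}} \subseteq \F_q^n$, which is $\F_q$-linear.

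Now I would argue by contradiction. Suppose $\cH = ([t], \cE)$ with $t \le L+1$ is $(R+\eps/2,n)$-weakly-partition-connected and $c_1,\ldots,c_t \in \cC$ satisfy every agreement of $\cH$ --- meaning $c_j[i] = c_{j'}[i]$ whenever $j,j'$ lie in a common hyperedge $e_i$ --- but are not all equal. Writing $c_i = m_i\mG$, the messages $m_1,\ldots,m_t$ cannot all be equal either, so the stacked vector $v := \bigl((m_1 - m_t),\ldots,(m_{t-1}-m_t)\bigr)^\top$ is a non-zero element of $\F_q^{(t-1)Rn}$. One then checks, exactly as in the proof of \cref{lem:col-rank-rim}, that $\RIM_\cH(\mG)\,v = 0$: for a hyperedge $e_i$ with vertices $j_1 < \cdots < j_{|e_i|}$ and each $\ell \in \inset{2,\ldots,|e_i|}$, pairing the associated row of $\RIM_\cH(\mG)$ with $v$ produces the $i$-th coordinate of $c_{j_1} - c_{j_\ell}$ (when $j_\ell \neq t$) or of $c_{j_1} - c_t$ (when $j_\ell = t$, where $\RIM_\cH$ carries no segment for vertex $t$), and both vanish since $j_1, j_\ell \in e_i$. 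Thus $\RIM_\cH(\mG)$ fails to have full column rank, contradicting the choice of $\mG$. Hence no such non-trivial assignment exists, so the only assignment of codewords of $\cC$ to the vertices of $\cH$ that satisfies every agreement of $\cH$ is the trivial all-equal one.

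\textbf{Main obstacle.} There is no genuinely hard step; the only point needing care is the reading of ``satisfies every agreement in $\cH$''. It should be interpreted as the pairwise equalities encoded by the hyperedges, not as requiring $\cH$ to equal the exact agreement hypergraph of some received word $y$ (such a $y$ need not exist if the codewords happen to agree beyond what the hyperedges prescribe at some coordinate), so the kernel computation should be carried out directly rather than by invoking \cref{lem:col-rank-rim} verbatim, with the $j_\ell = t$ boundary case treated separately to match the asymmetric $\RIM$ construction. Everything else --- existence of $\mG$ and $\F_q$-linearity of $\cC$ --- is immediate.
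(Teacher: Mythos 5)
Your proof is correct and follows essentially the same route as the paper: apply \cref{lem:good-rim-subs} to extract one good generator matrix $\mG$, take $\cC$ to be the code it generates, and conclude via the kernel identity behind \cref{lem:col-rank-rim}. The paper simply invokes the contrapositive of \cref{lem:col-rank-rim}, whereas you redo the kernel computation directly to accommodate the slightly broader reading of ``satisfy every agreement in $\cH$''; this is a harmless refinement, as both yield the same conclusion and the application only ever uses $\cH$ equal to an exact agreement hypergraph.
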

\begin{proof}
    By \cref{lem:good-rim-subs}, we see that with a non-zero probability, a uniformly random matrix $\mG \in \F_q^{Rn \times n}$ has the property of having full column rank on every $\RIM_{\cH}(G)$, for all agreement hypergraphs $\cH$ that are $(R+\eps/2, n)$-weakly-partition-connected.
    Thus, there exists at least one matrix $G \in \F_q^{Rn \times n}$ satisfying the property.
    The contrapositive of \cref{lem:col-rank-rim} then implies that the only set of codewords in $G$ that simultaneously satisfy every agreement in $\cH$ is the set of codewords that are all equal.
    The corollary follows by considering the code $\cC$ generated by the rows of $G$.
\end{proof}

\subsection{Proof for List-Decoding}\label{subsec:proof-list-dec}
Recall that the bipartite graph $G$ from \cref{clm:kmrs} has $N\dd$ vertices on each side.  
For the sake of simplifying the exposition, we omit floor and ceiling notation in the proof.  
In order to utilize the results listed in \cref{subsec:hypergs-rims}, we require the inner code to be defined over a finite field $\F_q$, and so we take $\sigin = \F_q$.
Consequently, we may interpret $\cael \subseteq \inbrak{\F_q^d}^N$ as a code over the extension field $\F_Q$, where $Q = q^d$.

Fix a list size parameter $L$, rate $R \in [0, 1]$, and slack $\eps \in [0, 1]$.
Let $q=2^{10L/\eps}$.
Recall that $\dout$ is defined to be the distance of the outer code $\cout$, and take $\eta=\eps/2^{(L+3)}$ and $\zeta < \dout/2^{(L+1)}$.
Take $d=\max (O(1/\zeta \eta^2), 1/L)$.
We take $\cin \subseteq \F_q^d$ to be a $\F_q$-linear code having rate $\rin = R$.
Let $G$ be the bipartite graph from \cref{clm:kmrs} having $N$ vertices on each side, and every vertex having degree $d \ge O(1/\zeta \eta^2)$.
Let $\cout \subseteq (\F_q^{\rin d})^N$ be the outer code.
Note that by our definition, $\phi$ is now a map of the form $\phi \colon \F_q^{\rin d} \rightarrow \cin$.
We require $\cout$ to be an $\F_q$-linear code and $\phi$ to be a $\F_q$-linear map.

We now prove the following result:
\begin{theorem}\label{thm:list-dec-ael}
    Let $\cin, \cout, G$ be as defined above.
    Furthermore, if $\cin$ is a code satisfying the property in \cref{cor:good-rim-subs}, then $\cael \subseteq \inbrak{\F_q^d}^N$ is an $\F_q$-linear code that is $\inbrak{\frac{L}{L+1}(1-R-\eps), L}$ average-radius list-decodable.
\end{theorem}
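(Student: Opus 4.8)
The plan is to argue by contradiction, combining the sampling property of $G$ from \cref{clm:kmrs} with the property of $\cin$ supplied by \cref{cor:good-rim-subs}, following the strategy of the technical overview. Regard $\cael$ as a code over $\F_Q$ ($Q=q^d$) of block length $N$, and suppose it were not $\bigl(\tfrac{L}{L+1}(1-R-\eps),L\bigr)$ average-radius list-decodable. Then there are pairwise distinct codewords $h_1,\dots,h_{L+1}\in\cael$ and a word $y\in\F_Q^N$ whose average Hamming distance from $h_1,\dots,h_{L+1}$ (over the $N$ coordinates of $\cael$) is at most $\tfrac{L}{L+1}\bigl(1-(R+\eps)\bigr)N$. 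First I would apply \cref{lem:bad-list-dec-conf}, with rate parameter $R+\eps$ and block length $N$, to extract a subset $J\subseteq[L+1]$ with $|J|\ge 2$ such that the agreement hypergraph $\cH=\cH(y,\{h_a:a\in J\})$ on vertex set $J$ — with one hyperedge $e_r=\{a\in J:h_a(r)=y(r)\}$ per coordinate $r\in[N]$ — is $(R+\eps,N)$-weakly-partition-connected.

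The core of the argument is to transfer this weak-partition-connectivity to the left side of $G$ using its sampling property. For a left vertex $\ell$, consider the local agreement hypergraph $\cH^\ell=\cH\bigl(y^\proj(\ell),\{h_a^\proj(\ell):a\in J\}\bigr)$ on $J$; it has $d$ hyperedges, one per edge incident to $\ell$, and each $h_a^\proj(\ell)$ lies in $\cin$ because $h_a\in\cael$. The crucial observation is that the hyperedge of $\cH^\ell$ coming from the edge joining $\ell$ to a right vertex $r$ \emph{contains} the global hyperedge $e_r$, since block agreement $h_a(r)=y(r)$ forces symbol agreement on every edge leaving $r$; and since enlarging a hyperedge can only increase $\max(|\cP(\cdot)|-1,0)$ for every partition $\cP$, it suffices to show that for more than $(1-\dout)N$ left vertices $\ell$ the sampled sub-hypergraph $\{e_r:r\in\Gamma(\ell)\}$ is $(R+\eps/2,d)$-weakly-partition-connected. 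I would obtain this by bucketing the right vertices by the ``type'' of their hyperedge — a subset of $J$, hence one of at most $2^{|J|}\le 2^{L+1}$ possibilities — and applying \cref{clm:kmrs} once to the set $Y_\tau\subseteq V_R$ of right vertices of type $\tau$, with the parameters $\eta=\eps/2^{L+3}$ and $\zeta<\dout/2^{L+1}$ fixed above. A union bound over the $\le 2^{L+1}$ types leaves more than $N-2^{L+1}\zeta N>(1-\dout)N$ left vertices $\ell$ at which \emph{every} type satisfies $|\Gamma(\ell)\cap Y_\tau|\ge(|Y_\tau|/N-\eta)d$ simultaneously. For any such $\ell$ and any partition $\cP$ of $J$, summing over types and invoking the $(R+\eps,N)$-weak-partition-connectivity of $\cH$ gives
\[
\sum_{r\in\Gamma(\ell)}\max\bigl(|\cP(e_r)|-1,0\bigr)\ \ge\ d\sum_\tau\Bigl(\tfrac{|Y_\tau|}{N}-\eta\Bigr)\max\bigl(|\cP(\tau)|-1,0\bigr)\ \ge\ d\bigl(R+\eps-2^{L+1}\eta\bigr)(|\cP|-1)\ \ge\ \bigl(R+\tfrac{\eps}{2}\bigr)d(|\cP|-1),
\]
using $\sum_\tau\max(|\cP(\tau)|-1,0)\le 2^{L+1}(|\cP|-1)$ and $2^{L+1}\eta\le\eps/2$. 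Hence $\cH^\ell$ is $(R+\eps/2,d)$-weakly-partition-connected for all such good $\ell$.

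It then remains to cash this in against the inner code. At any good $\ell$, $\cH^\ell$ is an agreement hypergraph on at most $L+1$ vertices that is $(R+\eps/2,d)$-weakly-partition-connected, and by construction the codewords $\{h_a^\proj(\ell):a\in J\}\subseteq\cin$ satisfy every agreement in $\cH^\ell$; since $\cin$ has the property of \cref{cor:good-rim-subs} (applied with block length $d$, rate $\rin=R$, alphabet $\F_q$ with $q=2^{10L/\eps}$), these codewords must all be equal, i.e.\ $h_a^\proj(\ell)=h_b^\proj(\ell)$ for all $a,b\in J$. Thus, fixing distinct $a,b\in J$ (which exist since $|J|\ge 2$), the codewords $h_a^\proj{}_\fl$ and $h_b^\proj{}_\fl$ of $\cout\circ\cin$ agree on more than $(1-\dout)N$ of their $N$ blocks; writing these as $(\phi(c_a[i]))_{i\in[N]}$ and $(\phi(c_b[i]))_{i\in[N]}$ with $c_a,c_b\in\cout$ and using that $\phi$ is injective, the outer codewords $c_a$ and $c_b$ agree on more than $(1-\dout)N$ coordinates, which forces $c_a=c_b$ since $\cout$ has distance $\dout$. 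Hence $h_a^\proj{}_\fl=h_b^\proj{}_\fl$; as $h\mapsto h^\proj{}_\fl$ is injective (projection only permutes rows and flattening is a bijection), we conclude $h_a=h_b$, contradicting the pairwise distinctness of $h_1,\dots,h_{L+1}$.

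I expect the main obstacle to be the middle step — the local-to-global transfer — because of the granularity mismatch between what \cref{clm:kmrs} controls (it samples \emph{right vertices}, i.e.\ block-level agreements of $\cael$) and what the inner code sees (symbol-level agreements on the edges entering a left vertex). Two devices resolve this: monotonicity of weak-partition-connectivity under enlarging hyperedges, which lets one work with the coarser block-level hyperedges $e_r$; and keeping the number of hyperedge types a constant ($\le 2^{L+1}$), so that a union bound over applications of \cref{clm:kmrs} stays within the error budget $\dout$ — this is precisely what dictates the choices $\eta=\eps/2^{L+3}$, $\zeta<\dout/2^{L+1}$, and $d=\max(O(1/\zeta\eta^2),1/L)$. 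A secondary point to be careful about is that \cref{cor:good-rim-subs} does not forbid bad configurations directly: it only asserts that a tuple of codewords that are not all equal cannot realize a weakly-partition-connected agreement hypergraph, which is why one passes to the sub-index-set $J$ and argues via equality of all the projections $h_a^\proj(\ell)$ rather than reasoning about distances on the left.
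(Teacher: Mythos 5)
Your proof is correct and takes essentially the same approach as the paper: extract a weakly-partition-connected sub-hypergraph via \cref{lem:bad-list-dec-conf}, transfer it to more than $(1-\dout)N$ left vertices via the type decomposition and \cref{clm:kmrs}, and then derive a contradiction from the inner code's guarantee in \cref{cor:good-rim-subs} together with the distance of $\cout$. The only cosmetic difference is the closing step: the paper selects a left vertex where the inner-codeword projections are \emph{not} all equal (using that the disagreement set for two fixed distinct codewords has size at least $\dout N$, so it must intersect $L^*$) and contradicts \cref{cor:good-rim-subs} directly, whereas you argue contrapositively that the inner guarantee forces agreement on more than $(1-\dout)N$ blocks, which then collapses the outer codewords; these are the same counting argument read in opposite directions. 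Your explicit remark about monotonicity of weak-partition-connectivity under hyperedge enlargement (symbol-level hyperedges contain block-level hyperedges) is also used, implicitly, in the paper's inequality chain, so there is no genuine gap on that point.
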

Since both $\cout$ and the map $\phi$ are $\F_q$-linear, it follows that $\cael$ is itself an $\F_q$-linear code.

For the sake of contradiction, assume there is a vector $y \in (\F_q^d)^N$ and pairwise distinct codewords $c_1,\ldots,c_{L+1} \in \cael$ such that their average Hamming distance from $y$ is less than $\frac{L}{L+1}(1-R-\eps)N$. That is
\[
    \sum_{i \in [L+1]}\frac{d(y, c_i)}{L+1} < \frac{L}{L+1}(1-R-\eps)N.
\]
Then, \cref{lem:bad-list-dec-conf} implies that there is a subset $J \subseteq [L+1]$, $\inabs{J}\ge 2$ such that the agreement hypergraph corresponding to vectors $y$ and $\inset{c_j \colon j \in J}$ is $(R+\eps, n)$-weakly-partition-connected.
We state a lemma which proves that the local projections of the vectors $y$ and $\inset{c_j \colon j \in J}$ onto many left vertices also yields a weakly-partition-connected hypergraph.
Let $y^\proj:= y^\proj_\fl$ and $c_j^\proj := (c_j)^\proj_\fl$ for every $j \in J$ denote the flattened projections of $y$ and the codewords $\inset{c_j \colon j \in J}$, respectively.
These vectors are obtained by performing a flattening operation (see \cref{def:flatten}), followed by a projection operation (see \cref{def:proj}).

\begin{lemma}[Local Projections are Weakly-Partition-Connected]\label{lem:loc-proj-wpc}
    If the agreement hypergraph corresponding to a vector $y \in (\F_q^d)^N$ and codewords $c_1,\ldots,c_t \in \cael$ is $(R+\eps, N)$-weakly-partition-connected, there exists a set $L^* \subseteq V_L$ satisfying $\inabs{L^*} > (1-\dout)N$ such that for every $\ell \in L^*$, the agreement hypergraph corresponding to the local projections $y^\proj(\ell) \in \F_q^d$, $c_1^\proj(\ell),\ldots,c_t^\proj(\ell) \in \cin$ is $(R+\frac{\eps}{2}, d)$-weakly-partition-connected.
\end{lemma}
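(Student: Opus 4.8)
The plan is to extract the set $L^*$ from the sampling property of $G$ in \cref{clm:kmrs}, after first replacing the local agreement hypergraphs by more convenient ``pulled back'' ones. Fix $\ell \in V_L$ and an edge $i \in [d]$ of $\ell$, and write $\varphi_G(\ell,i) = (r,j)$, so that $r = \Gamma_i(\ell)$. Unwinding \cref{def:proj} and \cref{def:flatten}, the $i$-th symbol of $c_k^\proj(\ell)$ is exactly the $j$-th symbol of the $r$-th block of $c_k$ (and likewise for $y$); in particular, whenever $c_k$ and $y$ agree on the entire block $r$, their projections $c_k^\proj(\ell)$ and $y^\proj(\ell)$ agree on coordinate $i$. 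Hence the $i$-th hyperedge of the agreement hypergraph of $y^\proj(\ell), c_1^\proj(\ell),\dots,c_t^\proj(\ell)$ contains the global hyperedge $e_{\Gamma_i(\ell)}$ of $\cH$. Enlarging a hyperedge never decreases $\inabs{\cP(e)}$ for any partition $\cP$, so weak partition connectivity is monotone under enlarging hyperedges; it therefore suffices to produce $L^* \subseteq V_L$ with $\inabs{L^*} > (1-\dout)N$ such that, for every $\ell \in L^*$, the multiset hypergraph $\cH_\ell := ([t], (e_{\Gamma_i(\ell)})_{i \in [d]})$ is $(R+\tfrac{\eps}{2},d)$-weakly-partition-connected.

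The next step is to bucket the right vertices by hyperedge type and invoke the sampler once per bucket. For $S \subseteq [t]$ put $Z_S := \inset{r \in [N] : e_r = S}$ and $\gamma_S := \inabs{Z_S}/N$; there are at most $2^{t} \le 2^{L+1}$ nonempty sets $Z_S$, and they partition $[N]$. Apply \cref{clm:kmrs} with our choice $\eta = \eps/2^{L+3}$ to each nonempty $Z_S$, and let $L^*$ be the set of $\ell \in V_L$ for which $\inabs{\Gamma(\ell)\cap Z_S} \ge (\gamma_S - \eta)d$ holds for \emph{every} nonempty $S$. A union bound over the at most $2^{L+1}$ sets $Z_S$, combined with our choice $\zeta < \dout/2^{L+1}$, gives $\inabs{V_L \setminus L^*} \le 2^{L+1}\zeta N < \dout N$, hence $\inabs{L^*} > (1-\dout)N$.

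It then remains to check weak partition connectivity of $\cH_\ell$ for each $\ell \in L^*$. Fix such an $\ell$ and an arbitrary partition $\cP$ of $[t]$, with $p := \inabs{\cP}$. Since each of the $d$ edges of $\ell$ lands in exactly one $Z_S$,
\[
  \sum_{i \in [d]} \max\inset{\inabs{\cP(e_{\Gamma_i(\ell)})} - 1,\ 0}
  \;=\; \sum_{S \subseteq [t]} \max\inset{\inabs{\cP(S)} - 1,\ 0}\cdot \inabs{\Gamma(\ell)\cap Z_S}
  \;\ge\; d\sum_{S} \max\inset{\inabs{\cP(S)} - 1,\ 0}\,(\gamma_S - \eta).
\]
Here $\sum_{S}\max\inset{\inabs{\cP(S)}-1,0}\,\gamma_S = \tfrac1N\sum_{r\in[N]}\max\inset{\inabs{\cP(e_r)}-1,0} \ge (R+\eps)(p-1)$ because $\cH$ is $(R+\eps,N)$-weakly-partition-connected, while $\sum_{S}\max\inset{\inabs{\cP(S)}-1,0} \le 2^{L+1}(p-1)$ since there are at most $2^{L+1}$ summands, each at most $p-1$. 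Plugging these in and using $2^{L+1}\eta = \eps/4$ shows the left-hand side is at least $d(p-1)\inbrak{R+\eps-2^{L+1}\eta} \ge d(p-1)\inbrak{R+\tfrac{\eps}{2}}$. As this holds for every $\cP$, $\cH_\ell$ is $(R+\tfrac{\eps}{2},d)$-weakly-partition-connected, and by the monotonicity from the first step so is the genuine local agreement hypergraph at $\ell$, which is exactly the claim.

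\emph{Main obstacle.} The delicate point is that weak partition connectivity is a statement about \emph{all} partitions of $[t]$ simultaneously, so $L^*$ must be a single set that is good for every $\cP$ at once; a bare union bound over partitions of $[t]$ would cost a superexponential-in-$L$ factor and destroy the bound $\inabs{L^*} > (1-\dout)N$. Bucketing the right vertices by their hyperedge pattern — only $\le 2^{L+1}$ possibilities — and applying the sampler once per bucket sidesteps this, since it pins down all the neighbourhood densities $\inabs{\Gamma(\ell)\cap Z_S}/d$ accurately enough to certify the inequality for every partition uniformly. The remaining ingredients — monotonicity of weak partition connectivity under enlarging hyperedges, and the translation of block-level agreement into coordinate-level agreement through $\varphi_G$ — are routine.
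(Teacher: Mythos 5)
Your proof is correct and follows the same high-level strategy as the paper's (bucket the right vertices by the hyperedge pattern $e_r$, apply the sampler once per bucket, union bound over the $\leq 2^{L+1}$ buckets), but it differs in one pleasant respect: the paper first trims away all ``non-$\beta$-dense'' types and invokes the robustness lemma (\cref{lem:wpc-robust}) to argue the trimmed hypergraph $\cH'$ remains $(R+\tfrac{3\eps}{4},N)$-wpc, paying a separate $\eps/4$ for that step and another $\eps/4$ for the sampler slack $\eta$, ending at $(R+\tfrac{\eps}{2},d)$. You skip the trimming entirely: since $\inabs{\Gamma(\ell)\cap Z_S}\ge 0$ always, the inequality $\inabs{\Gamma(\ell)\cap Z_S}\ge(\gamma_S-\eta)d$ is free for sparse buckets, so a single $\eta$-loss of $2^{L+1}\eta = \eps/4$ suffices and you actually establish the stronger $(R+\tfrac{3\eps}{4},d)$-wpc before rounding down to $\eps/2$. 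You also make explicit something the paper leaves implicit --- that the local hyperedge at coordinate $i$ may strictly \emph{contain} $e_{\Gamma_i(\ell)}$ (symbol-level agreement is weaker than block-level agreement), and that weak partition connectivity is monotone under hyperedge enlargement; the paper absorbs this into a one-directional inequality without comment. Net effect: same theorem, one fewer lemma, slightly tighter constant, and a cleaner accounting of the block-vs-symbol subtlety.
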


We first prove \cref{thm:list-dec-ael} using this lemma.
\begin{proof}[Proof of \cref{thm:list-dec-ael} using \cref{lem:loc-proj-wpc}]
    Fix a pair of codewords $c_i, c_j$, such that $i, j \in J$, and $i, j$ are distinct (such a pair of indices exist since $\inabs{J} \ge 2$).
    Define
    \[
        S:= \inset{\ell \in V_L : c_i^\proj(\ell) \neq c_j^\proj(\ell)}.
    \]
    Because $c_i$ and $c_j$ are distinct codewords belonging to $\cael$, we know that by construction, $|S| \ge \dout N$.
    This is because $c_i^\proj(\ell) \neq c_j^\proj(\ell)$ if and only if $\phi^{-1}(c_i^\proj(\ell)) \neq \phi^{-1}(c_j^\proj(\ell))$, and this holds for exactly those vertices on which the codewords in $\cout$ corresponding to $c_i, c_j$ differ.
    Upon applying \cref{lem:loc-proj-wpc} to the agreement hypergraph corresponding to vectors $y$ and $\inset{c_j \colon j \in J}$, there exists at least one left vertex $\ell \in S \cap L^*$.
    Because $\ell \in S$, $c_i^\proj(\ell) \neq c_j^\proj(\ell)$, and therefore, the local codewords $\inset{c_j^\proj(\ell) : j \in J}$ are not all equal.
    Since $\ell \in L^*$, the agreement hypergraph corresponding to the local projections $y^\proj(\ell) \in \F_q^d$, $c_1^\proj(\ell),\ldots,c_T^\proj(\ell) \in \cin$ is $(R+\frac{\eps}{2}, d)$-weakly-partition-connected.
    This contradicts the property of $\cin$ as described in \cref{cor:good-rim-subs}.
\end{proof}

\begin{proof}[Proof of \cref{lem:loc-proj-wpc}]
    We will associate a subset of $[t]$ with every $r \in V_R$, which we shall refer to as the \ty of $r$.
    We say that $r$ is of \ty $T$ for some subset $T \subseteq [t]$ if the set of indices of all codewords agreeing with $y$ at $r$ is equal to $T$.
    More formally, define $\type \colon V_R \rightarrow 2^{[t]}$, with 
    \[
        \type(r):=\inset{i \in [t] : c_i(r)=y(r)}.
    \]
    For $\beta \in [0, 1]$, we say that a \ty $T \subseteq [t]$ is $\beta$-dense if $\type(r)=T$ for more than $\beta N$ vertices $r \in V_R$.
    We shall only consider types that are $\beta$-dense for $\beta :=\eps/2^{(t+2)}$.
    Define $\db \subseteq 2^{[t]}$ to be the set of all subsets of $[t]$ that are $\beta$-dense.
    Then the number of right vertices whose type is \emph{not} $\beta$-dense is at most 
    \[
        \inabset{r \in V_R : \type(r) \not\in \db} \leq |2^{[t]}\setminus \db|\cdot \beta N \leq \frac{\eps N}{4}.
    \]
    By the robustness of weakly-partitioned-hypergraphs (c.f. \cref{lem:wpc-robust}), we see that the agreement hypergraph created by deleting all hyperedges corresponding to types that are not $\beta$-dense is still $(R+\frac{3\eps}{4}, N)$-weakly-partition-connected.
    Denote this agreement hypergraph by $\cH'$.
    Combining the fact that (i) $\cH'$ is $(R+3\eps/4, N)$-weakly-partition-connected and (ii) all hyperedges in $\cE_{\cH'}$ are associated with vertices whose type belongs to $\db$, we see that for every partition $\cP$ of $[t]$,
    \[
        \sum_{T \in \db} \sum_{\substack{r \in V_R \\ \type(r)=T}} \max\inset{\inabs{\cP(T)}-1, 0} = \sum_{e \in \cE_{\cH'}} \max\inset{\inabs{\cP(e)}-1, 0} \ge \inbrak{R+\frac{3\eps}{4}} N(\inabs{\cP}-1).
    \]
    Denote the set of all vertices of type $T$ by $S_T \subseteq V_R$, and denote its density by $\mu(S_T) := \inabs{S_T}/|V_R|=\inabs{S_T}/N \ge \beta$.
    Then,
    \begin{equation}\label{eq:wpc-global}
        \sum_{T \in \db} \mu(S_T) \cdot \max\inset{\inabs{\cP(T)}-1, 0} \ge \inbrak{R+\frac{3\eps}{4}}(\inabs{\cP}-1).
    \end{equation}
    
    Fix some $\beta$-dense type $T \subseteq [t]$.
    Using the sampling property of the graph $G$, we now show that for a large number of left vertices $\ell \in V_L$, the fraction of edges entering $\ell$ that arise from right vertices of type $T$ is roughly the same as the fraction of type $T$ vertices on the right side.
    Quantitatively, by \cref{clm:kmrs}, we see that
    \[
        \inabset{ \ell \in V_L : |\Gamma(\ell) \cap S_T|/d \leq \mu(S_T)-\eta} \leq \zeta N.
    \]
    By applying a simple union bound argument over all $\beta$-dense types, the following holds
    \[
        \inabset{ \ell \in V_L : \exists T \in \db : |\Gamma(\ell) \cap S_T|/d \leq \mu(S_T)-\eta} \leq \inabs{\db} \zeta N.
    \]
    Thus for at least $(1-\inabs{\db} \zeta)N > (1-\dout)N$ left vertices $\ell \in V_L$,
    \begin{equation}\label{eq:types-prop}
        \forall T \in \db: |\Gamma(\ell) \cap S_T| > (\mu(S_T)-\eta)d.
    \end{equation}
    Denote this set by $L^* \subseteq V_L$.
    For a vertex $\ell \in L^*$ and a type $T \in \db$, observe that for all indices $i \in [d]$ for which $\type(\Gamma_i(\ell))$ belongs to $T$, the local codewords $\inset{c_j^\proj(\ell) : j \in T}$ agree with $y^\proj(\ell)$ at coordinate $i$.
    Therefore, we can speak of types for local coordinates as well, and by a slight abuse of notation, define $\type(i):=\type(\Gamma_i(\ell))$.
    
    Additionally, the proportion of those coordinates is roughly equal to the proportion of right vertices on which the codewords $\inset{c_j : j \in T}$ agree with $y$.
    Thus, we see that the agreements corresponding to all hyperedges that occur on more than $\beta$ fraction of the right vertices are ``ported over'' to the vertices in $L^*$.
    Informally, this says that the agreement hypergraph corresponding to the local projections at every $\ell \in L^*$ is roughly equivalent to $\cH'$.
    We will now prove this in a formal manner, in order to conclude that these local agreement hypergraphs are weakly-partition-connected.

    Turning our attention over to a fixed $\ell \in L^*$ and denoting the set of hyperedges in the local agreement hypergraph of $\ell$ by $\cH_\ell=([t], \cE_\ell)$, we see by \cref{eq:types-prop} that for every partition $\cP$ of $[t]$,
    \begin{align*}
        \sum_{e \in \cE_\ell} \max\inset{\inabs{\cP(e)}-1, 0} &\ge \sum_{T \in \db} \sum_{\substack{i \in [d] \\ \type(i)=T}} \max\inset{\inabs{\cP(T)}-1, 0}\\
        &> \sum_{T \in \db} (\mu(S_T)-\eta)d\cdot \max\inset{\inabs{\cP(T)}-1, 0}.
    \end{align*}
    The last term can be expanded as
    \[
        \sum_{T \in \db} \mu(S_T)d\cdot \max\inset{\inabs{\cP(T)}-1, 0}-\sum_{T \in \db}\eta d\cdot \max\inset{\inabs{\cP(T)}-1, 0}.
    \]
    The first term is at least $(R+3\eps/4)d(|\cP|-1)$ by virtue of \cref{eq:wpc-global}, and the second term is at most $\frac{\eps}{4}d(|\cP|-1)$, as $|\db| \le 2^{L+1}$ and $\eta=\eps/2^{(L+3)}$. Putting everything together, we get
    \[
        \sum_{e \in \cE_\ell} \max\inset{\inabs{\cP(e)}-1, 0} \ge (R+\eps/2)d(|\cP|-1).
    \]
    Thus, $\cH_\ell$ is $(R+\eps/2, d)$-weakly-partition-connected.
\end{proof}

\begin{corollary}\label{cor:list-dec-ael}
    Let $\cin, \cout, G$ be as defined in \cref{thm:list-dec-ael}.
    Furthermore, let $\cout$ be a code with rate $\rout=1-\eps$ and distance $\dout\ge \eps^3$.
    Denote $Q:=q^d$.
    Then $\cael \subseteq \inbrak{\F_Q}^N$ is an $\F_q$-linear LDPC code that is $\inbrak{\frac{L}{L+1}(1-R-\eps), L}$ average-radius list-decodable, with rate $\rael \ge R-\eps$, where $d$ satisfies
    \[
        d \le O(2^{3L}/\eps^5).
    \]
    Thus, $Q = \exp(2^{O(L)}/\eps^5)$.
    Moreover, $\cael$ is constructible in time $\poly(N)$.
\end{corollary}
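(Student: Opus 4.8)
The statement is an instantiation of \cref{thm:list-dec-ael} with the concrete parameter choices forced by taking $\rout = 1-\eps$ and $\dout \ge \eps^3$, so the work is purely bookkeeping along four lines: (i) realize the three components $\cin,\cout,G$ under these constraints and check the hypothesis of \cref{cor:good-rim-subs} for $\cin$; (ii) bound the degree $d$, and hence $Q = q^d$; (iii) compute $\rael$; and (iv) verify explicitness. Once these are settled, \cref{thm:list-dec-ael} immediately delivers that $\cael$ is $\F_q$-linear and $\inbrak{\frac{L}{L+1}(1-R-\eps), L}$ average-radius list-decodable, which is the only nonroutine input.

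For the parameter chase I would keep $\eta = \eps/2^{L+3}$ exactly as in the setup preceding \cref{thm:list-dec-ael}, and, since $\dout \ge \eps^3$, choose $\zeta = \eps^3/2^{L+2}$, which is strictly below $\dout/2^{L+1}$ as required there. \cref{clm:kmrs} then produces an explicit bipartite expander $G$ with $N$ vertices per side and degree $d = O(1/(\zeta\eta^2)) = O(2^{3L}/\eps^5)$; for $\eps$ sufficiently small this also exceeds $1/L$, so the inner block length $d$ satisfies $d > 1/L$ and \cref{cor:good-rim-subs}, applied with rate $R$ and alphabet $\F_q$ where $q = 2^{10L/\eps}$, guarantees a suitable $\F_q$-linear code $\cin \subseteq \F_q^d$. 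Substituting the values of $q$ and $d$ into $Q = q^d$ then yields the claimed bound $Q = \exp(2^{O(L)}/\eps^{5})$.

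For the rate I would count codewords directly: $\inabs{\cael} = \inabs{\cout \circ \cin} = \inabs{\cout} = \inabs{\sigout}^{\rout N}$, and by \cref{eq:concat-eq} we have $\inabs{\sigout} = q^{\rin d}$, so $\inabs{\cael} = q^{\rin\rout d N}$; since $\cael$ has block length $N$ over $\F_Q$ with $Q = q^d$, its rate is $\rael = \rin\rout = R(1-\eps) \ge R-\eps$, using $R \le 1$.

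Finally, for explicitness: the inner code $\cin$ is found by brute force over the finitely many $\F_q$-linear subspaces of $\F_q^d$ of dimension $\ceil{Rd}$ (a number depending only on $L$ and $\eps$), testing the finite condition of \cref{cor:good-rim-subs} — an $O(1)$-time search guaranteed to succeed by that corollary; the graph $G$ is explicit in $\poly(N)$ time by \cref{clm:kmrs} since $d = O(1)$; and for $\cout$ I would invoke a standard explicit family of $\F_q$-linear codes over the constant alphabet $\F_q^{\rin d}$ with rate $1-\eps$ and relative distance at least $\eps^3$, encodable in $\poly(N)$ time via an $\F_q$-linear map $\phi$ (e.g., expander-based codes of the type used in the AEL constructions of \cite{KRZSW23,JMST25}). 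Given all three components, \cref{obs:ael-explicit} yields that $\cael$ is constructible in $\poly(N)$ time, completing the proof. There is no genuine obstacle here; the only points needing care are that $\cin$, $\cout$ and $\phi$ must all be linear over the \emph{same} field $\F_q$ and respect the alphabet identification in \cref{eq:concat-eq}, and that it is precisely the requirement $\dout \ge \eps^3$ that pins down $\zeta = \Theta(\eps^3/2^L)$ and hence the degree bound $d = O(2^{3L}/\eps^5)$ and the final alphabet size.
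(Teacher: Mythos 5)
Your proof is correct and follows essentially the same route as the paper: invoke \cref{thm:list-dec-ael} for the $\F_q$-linearity and the average-radius list-decodability, compute $\rael = \rin\rout$ from $\inabs{\cael}=\inabs{\cout}$ and $Q=q^d$, plug $\eta = \eps/2^{L+3}$ and $\zeta = \Theta(\eps^3/2^L)$ (forced by $\dout \ge \eps^3$) into $d = O(1/(\zeta\eta^2))$, and establish explicitness via Tanner-type outer codes, \cref{clm:kmrs}, brute-force search for $\cin$, and \cref{obs:ael-explicit}. The only cosmetic difference is that the paper names Corollary 11.4.8 of \cite{GRS23} specifically for $\cout$, while you cite the same family more loosely.
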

\begin{proof}
    The $\F_q$-linearity of $\cael$ its list-decodability parameters are proven in \cref{thm:list-dec-ael}.
    The rate is given by
    \[
        \rael=\frac{\log_Q(\inabs{\cael})}{N}.
    \]
    Indeed, it is easy to see that $\inabs{\cael}=\inabs{\cout}=q^{\rin\rout dN}$.
    Because $Q=q^d$, a simple calculation gives $\rael=\rin\cdot \rout=R\cdot (1-\eps)> R-\eps$.
    The value for $d$ is obtained by recalling that $d=\max(O(1/\zeta \eta^2), 1/L)$, $\eta=\eps/(2^{L+3})$, $\zeta=\dout/2^{(L+1)}$, and plugging in the value for $\dout$ in $\zeta$.
    The value for $Q$ is obtained by recalling the fact that $q=2^{10L/\eps}$ from \cref{cor:good-rim-subs}.

    We note that explicit constructions of $\F_q$-linear codes having rate $1-\eps$ and distance $\eps^3$ that are constructible in time $\poly (N)$ can be obtained by using Tanner codes (see Corollary 11.4.8 in \cite{GRS23}).
    We note that Tanner codes are LDPC codes.
    Moreover, the graph $G$ can be constructed in time $\poly(N)$, by \cref{clm:kmrs}.
    By \cref{cor:good-rim-subs}, $\cin$ exists and has a block length independent of $N$, therefore it can be found through brute force in constant time.
    Upon invoking \cref{obs:ael-explicit}, we see that $\cael(\cin, \cout, G)$ is an LDPC code and can be constructed in time $\poly(N\dd)$.
\end{proof}

\section{Constructions for Local Properties}\label{sec:ael-loc-prop}
A \emph{code property} $\cP$ can informally be defined as a family of codes sharing some common characteristics.
We focus on code properties that are
\begin{enumerate}[(i)]
    \item local, and
    \item monotone-increasing.
\end{enumerate}
A local code property, informally speaking, is defined by the inclusion of bad sets of vectors.
A \emph{monotone-increasing} code property is one for which the following is true: if $\cC$ is in $\cP$, then every $\cC'$ for which $\cC' \supseteq \cC$ holds, also lies in $\cP$.
An example of local, monotone-increasing code property is the complement of $(\rho, L)$-list-decodability. 

Before we give a formal definition of local properties studied in this paper, we discuss aspects of similar definitions in previous works.
Local properties were first defined in \cite{MRRSW20} in order to prove threshold type results for random linear codes, and to prove that LDPC codes achieve the same parameters as random linear codes.
Their work focused on proving results for the small alphabet regime, and given their definition of local properties, it was not possible to extend the results to the large alphabet regime.
This was accomplished in \cite{LMS25}, where the authors provided a new definition suitable for the large alphabet regime.

Since our work studies codes in the latter regime, we choose to adapt the definition in \cite{LMS25}, and give a brief overview of the same before discussing our modifications.
For a locality parameter $L \in \N$ and block length $n$, a \emph{$L$-local coordinate wise linear} ($L$-LCL) property $\cP$ is defined as a collection of local profiles.
A local profile is an ordered tuple of subspaces $\cV= (\cV_1, \ldots, \cV_n)$, where $\cV_i \in \cL (\F_q^L)$ for each $i \in [n]$.
A matrix $A \in \F_q^{n \times L}$ is said to be \emph{contained in} $\cV$ if the $i$th row of A belongs to $\cV_i$, for all $i$.
We say that a code satisfies property $\cP$ if there exists a matrix $A \in \F_q^{n \times L}$ such that the columns of $A$ are pairwise distinct codewords in $\cC$, and $A$ is contained in some local profile belonging to the collection of local profiles associated with $\cP$.

We now state our modifications, and the justifications for introducing them.
In a nutshell, our modifications are concerned with reconciling the (seemingly) different definitions of LCL properties for codes having differing field sizes and block lengths.
The modifications are necessary in order to talk about the LCL properties being satisfied by the constant-sized inner code, while also being satisfied by the infinite code family produced by the AEL construction.
Recall that in addition to having differing block lengths, these two codes also have different alphabet (field) sizes.

Our first modification is to define local profiles as tuples of matrices, instead of subspaces.
We then require that in order for a matrix $A$ to be contained in a local profile, each row of $A$ should lie in the kernel of the matrix corresponding to the row index.
This is necessary in order to address the problem of differing field sizes.
Recall that the alphabet of $\cael$ is $\sigin^d$, where $\sigin$ is the alphabet of the inner code.
Thus, if our inner code is over a field $\F_q$, one can view the alphabet of $\cael$ as being equal to $\F_q^d$. Note that one can naturally view the vectors in $\F_q^d$ as elements in the extension field $\F_Q$, where $Q=q^d$.
This fact ensures that the rows of matrices in $\F_q^{L \times L}$, when viewed as linear constraints,
will be applicable to vectors in $\F_q^L$ and $\F_Q^L$ simultaneously, as $\F_Q$ is an extension field of $\F_q$.

Our second modification is to construct local profiles using a list of fractions, each corresponding to a matrix in $\mathbb{F}_q^{L \times L}$ and denoting the fraction of coordinates on which that matrix appears.  
This representation enables us to describe local properties in a manner that is independent of the block length of the codes.  
We remark that this modification is similar in spirit to the definition of local properties in \cite{MRRSW20}, where the forbidden matrices were described by specifying the frequency of each vector from $\mathbb{F}_q^L$ in such matrices.
In the sequel, we refer to this representation as a \emph{local profile description}, and note that each such description defines a collection of local profiles rather than a single one.

\subsection{Preliminaries}
Fix a locality parameter $L \in \N$.
Additionally, consider finite fields $\Fqz, \Fq$ where $q_0, q$ are powers of the same prime that satisfy $q_0 \le q$.
It follows that $\Fq$ is an extension field of $\Fqz$.
Throughout the paper, we fix a $q_0$ and define LCL properties using elements from $\Fqz$, and work with $\F_q$-linear codes over the field $\Fq$.
The precise value of $q$ will be fixed later.
  
Accordingly, the term ``linear'' will henceforth always refer to $\Fq$-linear.
\begin{definition}[Local Profile Description]
    An \emph{$L$-local profile description} is an unordered tuple of tuples of the form
    \[
        \mathcal{V} = \bigl( (f_1, \mM_1), \ldots, (f_T, \mM_T) \bigr),
    \]
    where for each $t \in [T]$, we have $f_t \in [0,1]$, $\sum_{t \in [T]} f_t = 1$, and $\mM_t \in \Fqz^{L \times L}$.
    The matrices $\mM_t$ are not required to be pairwise distinct.
\end{definition}

\begin{definition}[Local Profile]
    Fix a block length $n \in \N$, and a $L$-local profile description $\cV$ such that every $f_t$ in $\cV$ is a multiple of $1/n$.
    We define an \emph{$L$-local profile $M_n(\cV)$ created according to $\cV$} as an ordered tuple of matrices 
    \[
        M_n(\cV) = (M_1, \ldots, M_n),
    \]
    where $M_i \in \Fqz^{L \times L}$ for each $i \in [n]$.
    Moreover, as prescribed by $\cV$, for each $t \in [T]$, the matrix $\mM_t$ appears in exactly $f_t\cdot n$ coordinates in $M_n(\cV)$.
\end{definition}

We emphasize that $M_n(\cV)$ is not unique for a local profile description $\cV$. In fact, it is easily seen that all permutations of the entries of $M_n(\cV)$ are valid local profiles that can be created using $\cV$. We will denote the set of all local profiles that can be created from $\cV$ (for a block length $n$) by $\cV_n$.

For the rest of this subsection, fix a block length $n$, and an $L$-local profile description $\cV=((f_1, \mM_1), \ldots, (f_t, \mM_{T}))$, where every fraction $f_t$ is a multiple of $1/n$.

\begin{definition}[Satisfying Local Profile Descriptions]\label{def:sat-loc-prof-desc}
    For a matrix $A \in \F_q^{n \times L}$, if there exists a local profile $M_n(\cV) = (M_1, \ldots, M_n) \in \cV_n$ such that $A[i][] \in \ker M_i$ for all $i$, then we say that $A$ \emph{satisfies} $\cV$, and that $M_n(\cV)$ is a \emph{witness} for $A$ satisfying $\cV$.
\end{definition}

\begin{definition}[Containing Matrices]\label{def:cont-matr}
We say that a matrix $A$ is \emph{contained} in a code $\cC \subseteq \F_q^n$ if the columns of $A$ are codewords of $\cC$.
Equivalently, we will use the shorthand $A \subseteq \cC$.
\end{definition}

\begin{definition}[Containing Local Profiles]\label{def:con-loc-prof}
A code $\cC \subseteq \F_q^n$ is said to \emph{contain} $\cV$ if there exists a matrix $A \in \F_q^{n \times L}$ such that
\begin{enumerate}
    \item $A \subseteq \cC$,
    \item $A$ satisfies $\cV$, and
    \item $A$ has pairwise distinct columns.
\end{enumerate}
\end{definition}

\begin{definition}[Local Coordinate wise Linear (LCL) Property]\label{def:lcl-prop}
    We define a \emph{$L$-local coordinate wise linear} ($L$-LCL) property $\cP$ to be a set of $L$-local profile descriptions $\cV$.
\end{definition}
By abuse of notation, we will use the term $\cP$ to refer to both the property itself, as well as the set of local profile descriptions that specify it.

\begin{definition}[Satisfying LCL Properties]\label{def:sat-lcl-prop}
    We say that a code $\cC \subseteq \F_q^n$ \emph{satisfies} $\cP$ if there is a $\cV \in \cP$ such that $\cC$ contains $\cV$.
\end{definition}

\begin{definition}[Code contains $(\cV, U)$]
    For a subspace $U \in \cL(\F_q^L)$, we say that a code $\cC \subseteq \F_q^n$ \emph{contains $(\cV, U)$} if there is a matrix $A \in \F_q^{n \times L}$ such that
    \begin{enumerate}
        \item $A \subseteq \cC$,
        \item $A$ satisfies $\cV$, and
        \item  the row span of $A$ is equal to $U$.
    \end{enumerate}
    Furthermore, if $M_n(\cV) \in \cV_n$ is a witness for $A$ satisfying $\cV$, then we say that $M_n(\cV)$ is a witness for $\cC$ containing $(\cV, U)$.
\end{definition}

\begin{observation}\label{obs:code-contains-cv-eq}
    A code $\cC \in \F_q^n$ contains $\cV$ if and only if $\cC$ contains $(\cV, U)$ for some $U \in \cL_\dist(\F_q^L)$.
\end{observation}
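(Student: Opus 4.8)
The plan is to reduce the statement to a single linear-algebraic equivalence: for any matrix $A \in \F_q^{n \times L}$, the columns of $A$ are pairwise distinct if and only if $\rspn(A) \in \cL_\dist(\F_q^L)$. Once this dictionary is in place, both directions of the observation follow immediately, since the only difference between ``$\cC$ contains $\cV$'' and ``$\cC$ contains $(\cV, U)$ for some $U \in \cL_\dist(\F_q^L)$'' is that the former asks for a witness matrix $A$ with pairwise distinct columns while the latter asks for a witness matrix whose row span is a prescribed member of $\cL_\dist(\F_q^L)$; the conditions $A \subseteq \cC$ and ``$A$ satisfies $\cV$'' are common to both notions, so the same $A$ can be reused.

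To prove the equivalence, first I would show that distinct columns force the row span into $\cL_\dist(\F_q^L)$: fix $1 \le i < j \le L$ with columns $i$ and $j$ of $A$ distinct; then some row $A[k][]$ satisfies $A[k][i] \neq A[k][j]$, and since $A[k][] \in \rspn(A)$ this row is precisely the witness vector required by membership in $\cL_\dist(\F_q^L)$. For the converse, suppose $\rspn(A) \in \cL_\dist(\F_q^L)$ and fix $i < j$; pick $u \in \rspn(A)$ with $u[i] \neq u[j]$ and write $u = \sum_k \lambda_k A[k][]$ as a linear combination of the rows of $A$. Then $\sum_k \lambda_k (A[k][i] - A[k][j]) = u[i] - u[j] \neq 0$, so some $k$ has $A[k][i] \neq A[k][j]$, which means columns $i$ and $j$ differ in coordinate $k$. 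As $i,j$ were arbitrary, all columns of $A$ are pairwise distinct.

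With the equivalence in hand, the forward direction of the observation goes as follows: if $\cC$ contains $\cV$, take the witness matrix $A$ (so $A \subseteq \cC$, $A$ satisfies $\cV$, and $A$ has pairwise distinct columns), set $U := \rspn(A)$, note $U \in \cL_\dist(\F_q^L)$ by the equivalence, and observe that $A$ now witnesses ``$\cC$ contains $(\cV, U)$''. Conversely, if $\cC$ contains $(\cV, U)$ with $U \in \cL_\dist(\F_q^L)$, take the witness matrix $A$ (so $A \subseteq \cC$, $A$ satisfies $\cV$, and $\rspn(A) = U$); the equivalence gives that $A$ has pairwise distinct columns, so $A$ witnesses ``$\cC$ contains $\cV$''. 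There is no serious obstacle here — the entire content is the little row-span/column-distinctness dictionary above — but the point worth stating carefully is that membership in $\cL_\dist(\F_q^L)$ is exactly the basis-independent reformulation of ``distinct columns'', which is what lets the witness matrix pass unchanged between the two formulations.
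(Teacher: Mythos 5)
Your proof is correct. The paper states this observation without an explicit proof — it is left as an immediate consequence of the definitions — and your reduction to the dictionary ``$A$ has pairwise distinct columns $\iff \rspn(A) \in \cL_\dist(\F_q^L)$'' is exactly the natural argument the paper leaves implicit.
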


We now state a simple fact regarding the inclusion of matrices in a random linear code.
\begin{fact}
    For a matrix $A \in \F_q^{n \times L}$, the probability that $A$ is contained in rate $R$ RLC $\cC \subseteq \F_q^n$ is equal to
    \[
        \Pr_\cC [A \subseteq \cC] = q^{-(1-R)n \rank A}.
    \]
\end{fact}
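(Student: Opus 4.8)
The plan is to unwind the definition of a random linear code given in the excerpt. By definition, a rate $R$ RLC $\cC \subseteq \F_q^n$ is $\ker H$ for a uniformly random matrix $H \in \F_q^{(1-R)n \times n}$. The first step is to observe that $A \subseteq \cC$ — that is, every column $A[][j]$ is a codeword of $\cC$ — is equivalent to the single matrix identity $HA = 0$ over $\F_q$, since $A[][j] \in \ker H$ for all $j \in [L]$ is the same as $HA[][j] = 0$ for all $j$. So I would rewrite
\[
    \Pr_\cC[A \subseteq \cC] = \Pr_H[HA = 0].
\]

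The key step is then to decompose over the rows of $H$. Write $H$ as a stack of rows $h_1, \ldots, h_{(1-R)n}$, which are mutually independent and each uniformly distributed over $\F_q^n$ (this is exactly what "uniformly random matrix over a finite field" means). The condition $HA = 0$ holds if and only if $h_i A = 0$ for every $i$, i.e.\ if and only if each $h_i$ lies in the left kernel $\inset{v \in \F_q^n \mid vA = 0}$ of $A$. Applying rank--nullity to the linear map $v \mapsto vA$ shows this left kernel is a subspace of $\F_q^n$ of dimension $n - \rank A$. Hence for each fixed $i$,
\[
    \Pr_{h_i}[h_i A = 0] = \frac{q^{\,n - \rank A}}{q^{\,n}} = q^{-\rank A},
\]
and by independence of the rows,
\[
    \Pr_H[HA = 0] = \prod_{i=1}^{(1-R)n} \Pr_{h_i}[h_i A = 0] = \bigl(q^{-\rank A}\bigr)^{(1-R)n} = q^{-(1-R)n\,\rank A},
\]
which is the claimed formula.

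There is essentially no serious obstacle here; the statement is a short computation once the definition of RLC is unwound. The only points that warrant a sentence of care are: (i) that the rows of a uniform random matrix over $\F_q$ really are i.i.d.\ uniform over $\F_q^n$, so that the per-row probabilities multiply; and (ii) that $\rank A$ (row rank = column rank over the field $\F_q$) is what controls the dimension $n - \rank A$ of the left kernel, so the count is independent of whether one thinks of $A$'s columns or its rows. Neither of these requires more than a remark.
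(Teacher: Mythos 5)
Your proof is correct and follows essentially the same route as the paper's: rewrite $A \subseteq \cC$ as $HA = 0$ for a uniformly random parity-check matrix $H$, factor over the independent rows of $H$, and evaluate each row's probability as $q^{-\rank A}$. The paper presents this as a one-line chain of equalities, while you spell out the rank--nullity argument behind the per-row probability; otherwise the two arguments are identical.
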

\begin{proof}
    Since $\cC$ by our definition is the kernel of a uniformly random matrix $K \in \F_q^{(1-R)n \times n}$, we can write
   \[
        \Pr_\cC [A \subseteq \cC] = \Pr_{K \sim \F_q^{(1-R)n \times n}}[K \cdot A=\0 ]= \prod_{i \in [(1-R)n]} \Pr_{K \sim \F_q^{(1-R)n \times n}}[K[i][] \cdot A=\0 ] = q^{-(1-R)n \rank A}.
    \]
\end{proof}

\begin{definition}[Potential]
    For $R \in [0, 1]$, and a subspace $U \in \cL(\F_q^L)$, define the potential $\pdeg(\cV, U, R)$ as
    \[
        \pdeg(\cV, U, R) := \sum_{t \in [T]}f_t\cdot\dim(\ker(\mM_t) \cap U) - (1-R) \dim U.
    \]
    Here, $\ker(\mM_t)$ is a subspace in $\Fq^L$.
\end{definition}

The following lemma provides some motivation as to why we require the definition.
\begin{lemma}\label{lem:prob-cont-vu}
    For an $L$-local profile $M_n(\cV)$ created according to $\cV$ and a subspace $U \in \F_q^L$, the probability that an RLC $\cC \subseteq \F_q^n$ of rate $R$ contains $(\cV, U)$ with $M_n(\cV)$ as a witness is at most $q^{\pdeg(\cV, U, R)n}$.
\end{lemma}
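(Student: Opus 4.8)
The plan is a one-shot union bound over candidate matrices, fed into the Fact established just above (that $\Pr_\cC[A \subseteq \cC] = q^{-(1-R)n\,\rank A}$ for a rate-$R$ RLC). First I would fix a witness $M_n(\cV) = (M_1,\ldots,M_n) \in \cV_n$ and let $\mathcal{A}$ be the set of matrices $A \in \F_q^{n\times L}$ whose row span equals $U$ and which satisfy $A[i][] \in \ker M_i$ for every $i\in[n]$. By the definitions of ``contains $(\cV,U)$'' and ``witness'', the event that $\cC$ contains $(\cV,U)$ with $M_n(\cV)$ as a witness is exactly $\bigcup_{A\in\mathcal A}\{A\subseteq\cC\}$, so a union bound gives $\Pr_\cC[\,\cdot\,] \le \sum_{A\in\mathcal A}\Pr_\cC[A\subseteq\cC]$.

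Next I would invoke the Fact: for each $A\in\mathcal A$ we have $\Pr_\cC[A\subseteq\cC] = q^{-(1-R)n\,\rank A}$, and since the row span of $A$ is exactly $U$ we get $\rank A = \dim U$. Hence $\Pr_\cC[\,\cdot\,] \le |\mathcal A|\cdot q^{-(1-R)n\dim U}$. It is important here to keep the ``row span exactly $U$'' requirement in the probability step, so that the exponent is $\dim U$ rather than something smaller.

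It then remains to bound $|\mathcal A|$. Every $A\in\mathcal A$ has each row lying in its row span $U$ and also in $\ker M_i$, hence $A[i][]\in\ker M_i\cap U$ for all $i$; dropping the ``row span exactly $U$'' constraint only enlarges the count, so
\[
    |\mathcal A| \;\le\; \prod_{i\in[n]} \bigl|\ker M_i \cap U\bigr| \;=\; q^{\sum_{i\in[n]}\dim(\ker M_i\cap U)}.
\]
Now I use the defining property of $M_n(\cV)$, namely that each matrix $\mM_t$ occurs in exactly $f_t\cdot n$ of the coordinates, to rewrite the exponent as $\sum_{i\in[n]}\dim(\ker M_i\cap U) = n\sum_{t\in[T]} f_t\,\dim(\ker\mM_t\cap U)$. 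Combining this with the previous bound and recalling the definition of $\pdeg(\cV,U,R)$ yields $\Pr_\cC[\,\cdot\,] \le q^{\left(\sum_{t\in[T]} f_t\dim(\ker\mM_t\cap U) - (1-R)\dim U\right)n} = q^{\pdeg(\cV,U,R)n}$, as claimed.

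I do not expect a genuine obstacle: the argument is a textbook first-moment/union bound, and the only two places requiring care are (i) phrasing the target event as a union over matrices with row span \emph{exactly} $U$ so that the probability factor carries the exponent $\dim U$, and (ii) the row-by-row count, which relies precisely on the finitely many multiplicities $f_t n$ from the local profile description to make the exponent block-length-linear. The summation over the (at most $\exp(O(L^2))$-many) witnesses and over subspaces $U$ is deferred to the subsequent lemmas that build on this one.
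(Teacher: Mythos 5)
Your proof is correct and follows essentially the same route as the paper: both perform a union bound over the set of matrices $A$ with row span exactly $U$ whose rows respect the witness $M_n(\cV)$, use the rank-$\dim U$ fact for each such $A$, and then upper-bound the count by relaxing to row span contained in $U$ (the paper names this relaxed set $\cM^*_{(M_n(\cV), U)}$) so that its size factors row-by-row as $\prod_i |\ker M_i \cap U|$, giving the potential $\pdeg(\cV,U,R)$.
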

\begin{proof}
    The proof is given in section 4 of \cite{LMS25}, but we include it here for completeness. Define the set
    \[
        \cM_{(M_n(\cV), U)} := \inset{A \in \F_q^{n \times L} \mid \forall i \in [n], A[i][] \in \ker(M_i) \land \rspn(A)=U}.
    \]
    This is the set of all matrices that satisfy $\cV$ by ``complying'' with the constraints specified by $M_n(\cV)$, while also having $\rspn(A)=U$.
    We now define a set similar to the one above, except we now require $\rspn(A) \subseteq U$.
    \[
        \cM^*_{(M_n(\cV), U)} := \inset{A \in \F_q^{n \times L} \mid \forall i \in [n], A[i][] \in \ker(M_i) \land \rspn(A) \subseteq U}.
    \]
    It is easy to see that $\cM_{(M_n(\cV), U)} \subseteq \cM^*_{(M_n(\cV), U)}$.

    By a union bound, the probability that a rate $R$ RLC $\cC \subseteq \F_q^n$ contains $(\cV, U)$ with $M_n(\cV)$ as a witness is at most:
    \begin{equation}\label{eq:prob-rlc-contains-a}
        \sum_{A \in \cM_{(M_n(\cV), U)}} q^{-(1-R)n \cdot \rank(A)} = \inabs{\cM_{(M_n(\cV), U)}} \cdot q^{-(1-R)n \cdot \rank(A)} \le \inabs{\cM^*_{(M_n(\cV), U)}} \cdot q^{-(1-R)n \cdot \dim U}.
    \end{equation}
    We now proceed to estimate $\inabs{\cM^*_{(M_n(\cV), U)}}$. Upon observing that this set is a linear subspace of $\F_q^{n \times L}$, we see that
    \[
        \log_q \inbrak{\inabs{\cM^*_{(M_n(\cV), U)}}} = \sum_{i \in [n]} \dim (\ker M_i \cap U) = \sum_{t \in [T]}f_t \cdot n \cdot\dim(\ker(\mM_t) \cap U).
    \]
    By plugging the value of $|\cM^*_{(M_n(\cV), U)}|$ in \cref{eq:prob-rlc-contains-a} we see that the probability of $\cC$ containing $(\cV, U)$ with $M_n(\cV)$ as a witness is at most $q^{\pdeg(\cV, U, R)n}$.
\end{proof}

\begin{definition}\label{def:thresh-rate}
    We define $R_{\cV, U}$ to be 
    \[
        R_{\cV, U} = \min \inset{R \in [0, 1] \mid \pdeg(\cV, U, R) \ge \pdeg(\cV, W, R) \text{ for every linear subspace } W \subseteq U}.
    \]
\end{definition}
Note that the minimum exists, as $\pdeg(\cV, U, 0) \ge \pdeg(\cV, W, 0)$ for every subspace $W \subseteq U$, by the definition of the potential $\pdeg$.

Let us provide some motivation for the definition.
We first state Proposition 4.3 from \cite{LMS25}.
\begin{proposition}[RLC thresholds for local profiles (Proposition 4.3, \cite{LMS25})]\label{prop:rlc-thresh}
    Let $\cC \subseteq \F_q^n$ be a RLC of rate $R \in [0, 1]$.
    Fix some $M_n(\cV) \in \cV_n$, and a $U \in \cL(\F_q^L) \setminus \inset{\inset{0}}$. Let
    \[
        \gamma := \min_{\substack{W \in \cL(\F_q^L) \\ W \subsetneq U }} \inset{ \pdeg(\cV, U, R) - \pdeg(\cV, W, R)}.
    \]
    The following then holds.
    \begin{enumerate}
        \item If $\gamma < 0$, then $\Pr_\cC[\cC \text{ contains } (\cV, U)\text{ with }M_n(\cV)\text{ as a witness}] \le q^{\gamma n}$.
        \item If $\gamma > 0$, then $\Pr_\cC[\cC \text{ contains } (\cV, U)\text{ with }M_n(\cV)\text{ as a witness}] \ge 1-q^{-{\gamma n}+L^2}$.
    \end{enumerate}
\end{proposition}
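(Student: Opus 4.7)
The plan is to analyze the random variable
$$X_U \;=\; |\{A \in \F_q^{n \times L} : A \subseteq \cC,\ A[i][\,] \in \ker M_i \ \forall i,\ \rspn(A) = U\}|,$$
so that the event ``$\cC$ contains $(\cV, U)$ with witness $M_n(\cV)$'' is precisely $\{X_U \ge 1\}$. The first moment is
$$\E[X_U] \;=\; |\cM_{M_n(\cV), U}| \cdot q^{-(1-R)nk}, \qquad k := \dim U,$$
since every admissible $A$ has rank $k$. Writing $g(W) := q^{\sum_t f_t n\,\dim(\ker \mM_t \cap W)}$ for the count of admissible matrices with $\rspn \subseteq W$, Möbius inversion on the subspace lattice of $U$ gives
$$|\cM_{M_n(\cV), U}| \;=\; \sum_{W \subseteq U} \mu(W, U)\, g(W), \qquad \mu(W, U) = (-1)^{k-\dim W}\, q^{\binom{k-\dim W}{2}}.$$

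For Part 1 ($\gamma < 0$), I would apply Markov's inequality $\Pr[X_U \ge 1] \le \E[X_U]$ and carefully estimate the alternating Möbius sum. After dividing by $q^{(1-R)nk}$, the $W$-summand has exponent $\pdeg(\cV, W, R)n - (1-R)n(k - \dim W) + \binom{k - \dim W}{2}$. The sharpened bound $q^{\gamma n}$, rather than the naive $q^{\pdeg(\cV, U, R)n}$, arises from the cancellations enforced by the alternating signs: matrices whose row-span is properly contained in $U$ account for most of $g(U)$, and subtracting their contribution shrinks $|\cM_{M_n(\cV), U}|$ by exactly the factor that upgrades $\pdeg(\cV, U, R)$ to $\gamma$ in the exponent, up to lower-order terms.

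For Part 2 ($\gamma > 0$), I would use Chebyshev in the form $\Pr[X_U = 0] \le \Var(X_U)/\E[X_U]^2$. When $\gamma > 0$ the Möbius expansion for $|\cM_{M_n(\cV), U}|$ is dominated by the $W = U$ term, yielding $\E[X_U] = q^{\pdeg(\cV, U, R)n}(1 - o(1))$. For the second moment I would expand
$$\E[X_U^2] \;=\; \sum_{A, A'} \mathbf{1}[A, A'\text{ valid}] \cdot q^{-(1-R)n\,\dim(\mathrm{colspan}(A) + \mathrm{colspan}(A'))},$$
and partition ordered pairs by the dimension of $\rspn A \cap \rspn A'$. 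The ``independent'' contribution (trivial intersection) equals $\E[X_U]^2$ up to a correction of at most $q^{L^2}$ capturing the finitely many subspace intersection patterns inside $\F_q^L$, while the ``correlated'' contribution is negligible by the stability $\gamma > 0$ (each smaller joint row-span forces a loss of at least $\gamma n$ in the exponent relative to $\E[X_U]^2$). Chebyshev then delivers the stated $1 - q^{-\gamma n + L^2}$ lower bound.

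The main obstacle is the tight $q^{\gamma n}$ bound in Part 1: the direct Markov bound on $X_U$ only gives $q^{\pdeg(\cV, U, R)n}$, which is strictly weaker whenever $\gamma < 0$ and is outright vacuous when $\pdeg(\cV, U, R) \ge 0$. Making the improvement rigorous requires faithfully tracking the alternating Möbius cancellations in $|\cM_{M_n(\cV), U}|$, which is precisely how the ``deficit'' between the potential at $U$ and the maximum potential over proper subspaces is converted into an exponential probability bound.
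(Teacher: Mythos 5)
Your Part 1 does not work, and the failure is structural rather than a matter of missing details. You propose to bound $\Pr[X_U \ge 1]$ by Markov, and to recover the exponent $\gamma n$ (rather than $\pdeg(\cV,U,R)n$) from Möbius cancellations in $|\cM_{(M_n(\cV),U)}|$. But the Möbius correction only accounts for matrices whose row span is a \emph{proper} subspace of $U$, and the number of those is $\sum_{W}$-bounded by $q^{L}\max_{W\subsetneq U} q^{\sum_t f_t n\dim(\ker \mM_t\cap W)}$; generically this is exponentially smaller than $g(U)$, so $|\cM_{(M_n(\cV),U)}|=(1-o(1))\,g(U)$ and $\E[X_U]=(1\pm o(1))\,q^{\pdeg(\cV,U,R)n}$. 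There is no mechanism by which the alternating signs convert $\pdeg(\cV,U,R)$ into $\gamma$: the difference $\pdeg(\cV,U,R)-\pdeg(\cV,W,R)$ contains the term $-(1-R)(\dim U-\dim W)$, which comes from the containment \emph{probabilities}, not from the matrix \emph{count}. A concrete counterexample: take $\dim U=2$, let $3/4$ of the coordinates have $\ker M_i\cap U=W_1$ (a fixed line) and $1/4$ have $\ker M_i\cap U=U$, and take $R=1/2$. Then $\pdeg(\cV,U,R)=1/4>0$ while $\gamma=R-3/4=-1/4$, and one checks $|\cM_{(M_n(\cV),U)}|=(1-o(1))g(U)$, so $\E[X_U]\approx q^{n/4}\gg 1$ and Markov on $X_U$ is vacuous, yet the proposition asserts the probability is at most $q^{-n/4}$. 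The correct mechanism is not a better count but the observation that the event implies rarer events: if $\cC$ contains $(\cV,U)$ with witness $M_n(\cV)$ and $W\subsetneq U$, then applying a linear map $\psi$ with $\ker\psi=W$ to the rows of the witness matrix produces (by linearity of $\cC$) a matrix with row span exactly $\psi(U)\ne\{0\}$ satisfying the implied profile $M_n(\cV^\psi)$, and the first-moment bound of \cref{lem:prob-cont-vu} applied to \emph{that} configuration gives exponent $\pdeg(\cV^\psi,\psi(U),R)=\pdeg(\cV,U,R)-\pdeg(\cV,W,R)$ (this identity is exactly the computation in \cref{lem:pdeg-rob-implied-ub} with $\D=0$, via \cref{clm:subsp-int}). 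Minimizing over $W\subsetneq U$, including $W=\{0\}$, yields $q^{\gamma n}$. This "implied profile" reduction is the idea your proposal is missing, and it is precisely the machinery the paper itself develops in \cref{obs:imp-local-prof-desc} and Section 5.2; the paper does not reprove the proposition, importing it from \cite{LMS25}, where the argument has this form.

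Your Part 2 is essentially the right approach (second moment plus Chebyshev, with the correlated pairs controlled by the positivity of $\gamma$), and your observation that $\E[X_U]=(1-o(1))q^{\pdeg(\cV,U,R)n}$ when $\gamma>0$ is correct. One point of sloppiness: partitioning pairs "by the dimension of $\rspn A\cap\rspn A'$" is meaningless here, since both row spans equal $U$ by definition of the event; the relevant parameter is the row span of the concatenated matrix $[A\,|\,A']$ in $\F_q^{2L}$ (equivalently the intersection of the column spans), and the loss of at least $\gamma n$ per degenerate intersection pattern has to be extracted from the potential of that joint subspace. As a sketch this is acceptable, but Part 1 as proposed cannot be repaired without replacing Markov-on-$X_U$ by the quotient/implied-configuration argument.
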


Observe that the value of $\gamma$ depends on the rate $R$, the local profile description $\cV$, and subspace $U$, but is independent of the local profile $M_n(\cV)$.
From \cref{prop:rlc-thresh}, we see that upon union bounding over all local profiles $M_n(\cV) \in \cV_n$, we get:
\begin{corollary}\label{cor:rlc-thresh}
    Let $\cC \subseteq \F_q^n$ be a RLC of rate $R \in [0, 1]$.
    For a $U \in \cL(\F_q^L) \setminus \inset{\inset{0}}$, let $\gamma$ be as defined in \cref{prop:rlc-thresh}.
    Then the following holds.
    \begin{enumerate}
        \item If $\gamma < 0$, then $\Pr_\cC[\cC \text{ contains } (\cV, U)] \le \inabs{\cV_n} \cdot q^{\gamma n}$.
        \item If $\gamma > 0$, then $\Pr_\cC[\cC \text{ contains } (\cV, U)] \ge 1-q^{-{\gamma n}+L^2}$.
    \end{enumerate}    
\end{corollary}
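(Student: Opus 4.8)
The plan is to deduce the corollary from \cref{prop:rlc-thresh} by decomposing the event over the choice of witness. Recall from \cref{def:sat-loc-prof-desc} and the definition of ``$\cC$ contains $(\cV,U)$'' that $\cC$ contains $(\cV,U)$ precisely when there is a matrix $A \in \F_q^{n\times L}$ with $A\subseteq\cC$, $\rspn(A)=U$, and $A$ satisfying $\cV$; and $A$ satisfying $\cV$ means that some $M_n(\cV)\in\cV_n$ is a witness for it. Hence
\[
    \bigl\{\cC\text{ contains }(\cV,U)\bigr\} \;=\; \bigcup_{M_n(\cV)\in\cV_n}\bigl\{\cC\text{ contains }(\cV,U)\text{ with }M_n(\cV)\text{ as a witness}\bigr\}.
\]

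For the first claim, where $\gamma<0$, I would apply the union bound to this decomposition and bound each of the $\inabs{\cV_n}$ terms using item~1 of \cref{prop:rlc-thresh}. The key point, already noted in the text preceding the corollary, is that $\gamma$ depends only on $R$, $\cV$, and $U$, and not on the particular witness $M_n(\cV)$; so the bound $q^{\gamma n}$ applies uniformly to every term, giving total probability at most $\inabs{\cV_n}\cdot q^{\gamma n}$.

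For the second claim, where $\gamma>0$, I would instead fix any single $M_n(\cV)\in\cV_n$ (such a profile exists since every fraction $f_t$ is a multiple of $1/n$, so $\cV_n\neq\emptyset$). The event in the decomposition corresponding to this fixed witness is contained in the event ``$\cC$ contains $(\cV,U)$'', so monotonicity of probability together with item~2 of \cref{prop:rlc-thresh} yields $\Pr_\cC[\cC\text{ contains }(\cV,U)] \ge 1 - q^{-\gamma n + L^2}$. There is no substantive obstacle here; the only points needing a moment's care are the witness-independence of $\gamma$ (for the uniform bound in the union bound step) and the nonemptiness of $\cV_n$ (for the lower-bound direction).
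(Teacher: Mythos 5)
Your proposal is correct and matches the paper's (implicit) argument: decompose the containment event over the finitely many witnesses in $\cV_n$, apply a union bound with the witness-independent estimate from item~1 of \cref{prop:rlc-thresh} for the upper bound, and use monotonicity with a single witness from item~2 for the lower bound. The two side remarks you flag---that $\gamma$ does not depend on the particular $M_n(\cV)$, and that $\cV_n\neq\emptyset$---are exactly the points the paper's one-line justification relies on.
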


Thus, \cref{cor:rlc-thresh} implies that for $R=R_{\cV, U}-\eps$, where $0<\eps<R_{\cV, U}$ is a constant, a rate $R$ RLC contains $(\cV, U)$ with exponentially low probability (provided that $\inabs{\cV_n}$ is sufficiently small), while for $R=R_{\cV, U}+\eps$, a rate $R$ RLC contains $(\cV, U)$ with probability exponentially close to $1$.

We now define a threshold rate with respect to an RLC containing $\cV$. Invoking \cref{obs:code-contains-cv-eq}, we define the threshold rate corresponding to $\cV$ as
\[
    R_\cV := \min_{U \in \cL_\dist(\F_q^L)} R_{\cV, U}.
\]

For an $L$-LCL property $\cP$, recall that a code $\cC$ satisfies $\cP$ if $\cC$ contains some $\cV \in \cP$.
We thus define:
\begin{definition}\label{def:rate-thresh-prop}
    The threshold rate of an $L$-LCL property $\cP$ is defined to be
\begin{equation}\label{eq:rate-thresh-prop}
    R_\cP := \min_{\cV \in \cP} R_\cV = \min_{\substack{ \cV \in \cP \\ U \in \cL_\dist(\F_q^L)}} R_{\cV, U}.
\end{equation}
\end{definition}

Theorem 4.4 from \cite{LMS25} proves that $R_\cP$ as defined above is indeed a threshold rate.
\begin{theorem}[Theorem 4.4, \cite{LMS25}]\label{thm:rp-is-tight}
    Let $\cP$ be an $L$-LCL property.
    Let $\cC \subseteq \F_q^n$ be an RLC of rate $R$, and let $R_\cP$ be as defined in \cref{def:rate-thresh-prop}.
    Let $\eps>0$ be a sufficiently small constant.
    The following now holds
    \begin{enumerate}
        \item If $R \ge R_\cP + \eps$, then $\Pr_\cC [\cC \text{ satisfies }\cP] \ge 1-q^{-\eps n + L^2}$.
        \item If $R \le R_\cP - \eps$, then $\Pr_\cC [\cC \text{ satisfies }\cP] \le \sum_{\cV \in \cP} \inabs{\cV_n} \cdot q^{-\eps n + L^2}$.
    \end{enumerate}
\end{theorem}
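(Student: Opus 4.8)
The plan is to derive both items by combining two ingredients already in the excerpt: \cref{cor:rlc-thresh}, which for a fixed local profile description $\cV$ and a fixed nonzero subspace $U$ controls $\Pr_\cC[\cC \text{ contains }(\cV,U)]$ in terms of
\[
  \gamma_{\cV,U}(R) \ := \ \min_{W \subsetneq U}\bigl\{\pdeg(\cV,U,R) - \pdeg(\cV,W,R)\bigr\},
\]
and \cref{obs:code-contains-cv-eq}, which says that $\cC$ satisfies $\cP$ if and only if $\cC$ contains $(\cV,U)$ for some $\cV \in \cP$ and some $U \in \cL_\dist(\F_q^L)$. The only input not packaged in the excerpt is a quantitative description of how $\gamma_{\cV,U}(R)$ moves with $R$, so that a rate that is $\eps$ above (resp.\ below) the threshold $R_{\cV,U}$ forces $\gamma_{\cV,U}(R)\ge\eps$ (resp.\ $\le-\eps$). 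Once that is in hand, both parts follow by elementary manipulations plus a union bound.

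\textbf{The key step}, and the main (though not deep) obstacle, is the following observation. Inspecting the definition of $\pdeg$, the map $R \mapsto \pdeg(\cV,U,R) - \pdeg(\cV,W,R)$ is affine in $R$ with slope exactly $\dim U - \dim W$, which is $\ge 1$ whenever $W \subsetneq U$. Hence $\gamma_{\cV,U}(\cdot)$ is a minimum of finitely many affine functions of slope $\ge 1$; it is therefore continuous, strictly increasing, and satisfies $\gamma_{\cV,U}(R') - \gamma_{\cV,U}(R) \ge R' - R$ for all $R' \ge R$. Moreover $\gamma_{\cV,U}(0) \le 0$ (apply the definition with $W=\{0\}$ and use $\dim(\ker\mM_t \cap U)\le\dim U$ together with $\sum_t f_t = 1$), while $\gamma_{\cV,U}(1) \ge 0$ (for $R=1$ each difference $\pdeg(\cV,U,1)-\pdeg(\cV,W,1)$ is nonnegative since $\ker\mM_t\cap W \subseteq \ker\mM_t\cap U$). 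So $\gamma_{\cV,U}$ has a unique zero on $[0,1]$, and by \cref{def:thresh-rate} that zero is precisely $R_{\cV,U}$; in particular $\gamma_{\cV,U}(R_{\cV,U})=0$. Combining the zero with the slope bound gives: $R \ge R_{\cV,U}+\eps \Rightarrow \gamma_{\cV,U}(R)\ge\eps$, and $R \le R_{\cV,U}-\eps \Rightarrow \gamma_{\cV,U}(R)\le-\eps$.

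\textbf{Part 1.} Choose $\cV^\star \in \cP$ and $U^\star \in \cL_\dist(\F_q^L)$ attaining the minimum defining $R_\cP$, so $R_{\cV^\star,U^\star}=R_\cP$. If $R \ge R_\cP+\eps$ then $\gamma_{\cV^\star,U^\star}(R)\ge\eps>0$, so the second part of \cref{cor:rlc-thresh} gives $\Pr_\cC[\cC \text{ contains }(\cV^\star,U^\star)] \ge 1 - q^{-\eps n + L^2}$. Since $U^\star \in \cL_\dist(\F_q^L)$, \cref{obs:code-contains-cv-eq} shows any such $\cC$ contains $\cV^\star$ and hence satisfies $\cP$, so $\Pr_\cC[\cC \text{ satisfies } \cP] \ge 1 - q^{-\eps n + L^2}$.

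\textbf{Part 2.} If $R \le R_\cP-\eps$ then, because $R_{\cV,U}\ge R_\cP$ for every $\cV \in \cP$ and $U \in \cL_\dist(\F_q^L)$, we have $R \le R_{\cV,U}-\eps$ and hence $\gamma_{\cV,U}(R)\le-\eps<0$; the first part of \cref{cor:rlc-thresh} then gives $\Pr_\cC[\cC \text{ contains }(\cV,U)] \le \inabs{\cV_n}\, q^{-\eps n}$. By \cref{obs:code-contains-cv-eq}, $\cC$ satisfies $\cP$ exactly when it contains some such $(\cV,U)$; a union bound over all $\cV \in \cP$ and all $U \in \cL_\dist(\F_q^L)$, using that $\F_q^L$ has at most $q^{L^2}$ subspaces (each the row span of a matrix in $\F_q^{L\times L}$), yields $\Pr_\cC[\cC \text{ satisfies } \cP] \le \sum_{\cV \in \cP} q^{L^2}\inabs{\cV_n}\, q^{-\eps n} = \sum_{\cV \in \cP} \inabs{\cV_n}\, q^{-\eps n + L^2}$. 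The proviso that $\eps$ be sufficiently small is only needed to keep the two hypotheses non-vacuous, \ie $R_\cP+\eps\le 1$ in Part 1 and $\eps\le R_\cP$ in Part 2.
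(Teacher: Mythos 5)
Your proof is correct; note that the paper does not actually prove this theorem but cites it to \cite{LMS25}, so there is no in-paper proof to compare against.

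That said, your argument fills in exactly what the excerpt leaves implicit. The heart of it is your observation that $R \mapsto \pdeg(\cV,U,R)-\pdeg(\cV,W,R)$ is affine with slope $\dim U - \dim W \ge 1$ for $W \subsetneq U$, so $\gamma_{\cV,U}$ is continuous, strictly increasing, and 1-Lipschitz from below; together with $\gamma_{\cV,U}(0)\le 0$, $\gamma_{\cV,U}(1)\ge 0$, this pins down $\gamma_{\cV,U}(R_{\cV,U})=0$ from \cref{def:thresh-rate} and yields the quantitative translation $R\gtrless R_{\cV,U}\pm\eps \Rightarrow \gamma_{\cV,U}(R)\gtrless\pm\eps$. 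One incidental by-product of your $\gamma_{\cV,U}(1)\ge 0$ check: the paper's justification following \cref{def:thresh-rate} that "the minimum exists, as $\pdeg(\cV,U,0)\ge\pdeg(\cV,W,0)$" is a typo — at $R=0$ the inequality goes the other way (as you show), and $R=1$ is the endpoint that guarantees the feasible set is nonempty. The union bound in Part~2, with the crude bound of $q^{L^2}$ on the number of subspaces of $\F_q^L$, matches the exponent in the stated bound exactly, so the constants line up. The only caveat — which is inherited from the definitions rather than introduced by you — is that the minimum in \cref{def:rate-thresh-prop} over an uncountable family $\cP$ is assumed to be attained; your choice of $(\cV^\star,U^\star)$ in Part~1 relies on that.
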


At this point we provide a brief discussion about the LCL properties $\cP$ considered henceforth.
In general, we allow the fractions $f_1,\ldots,f_T$ to range over some fixed sub-intervals of $[0, 1]$, and regard all local profile descriptions with such fractions as belonging to $\cP$.
Consequently, $\cP$ is uncountably infinite.
This, however, does not pose a difficulty, because for a block length $n$, the set of local profile descriptions satisfying $\inabs{\cV_n}>0$ is itself finite, as by definition, each fraction in such local profile descriptions must be an exact multiple of $1/n$.
Furthermore, for any such local profile description, the corresponding set of local profiles $\cV_n$ is also finite.
Specifically, $\inabs{\cV_n}$ is equal to the number of ways we can arrange $n$ objects in a row, where we have $f_t\cdot n$ objects of type $t$, for $t \in T$.
It follows, therefore, that 
\[
    \sum_{\cV \in \cP} \inabs{\cV_n}
\]
is finite.

Since our definition of LCL properties is derived from the notion of local profile descriptions, the properties we consider are coordinate-permutation invariant.
An LCL property~$\cP$ is said to be coordinate-permutation invariant if, for any local profile $M = (M_1, \ldots, M_n)$ associated with~$\cP$, all local profiles obtained by permuting the order of the matrices in~$M$ are also associated with~$\cP$.

At first glance, this might appear to restrict the class of properties we can consider; however, it does not.
Indeed, for a local profile description~$\cV$, the threshold rate~$R_\cV$ depends only on the frequency of the matrices appearing in the description, and not on their order.
Consequently, the threshold rate remains the same whether we consider a single local profile described by~$\cV$ or all of them.
Thus, if we initially intended to include in our LCL property~$\cP$ only a subset of the local profiles described by~$\cV$, we may in fact include all of them without any loss in threshold rate.
This does result in an increase in alphabet size, though it remains constant.

We require Lemma 4.5 from \cite{LMS25}:
\begin{lemma}[Lemma 4.5, part (3), \cite{LMS25}]\label{lem:argmax-not-dist}
    For $R \in [0, 1]$, denote
    \[
        \argmax \inset{\pdeg(\cV, *, R)} := \inset{W \in \cL(\F_q^L) \mid \pdeg(\cV, W, R) = \max_{U \in \cL(\F_q^L)} \pdeg(\cV, U, R)}.
    \]
    Then, $\argmax \pdeg(\cV, *, R_\cV)$ contains at least one element from $\cL(\F_q^L) \setminus \cL_\dist (\F_q^L)$, and at least one element from $\cL_\dist (\F_q^L)$.
\end{lemma}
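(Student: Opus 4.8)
The plan is to regard $\pdeg(\cV,\cdot,R)$ as a function on the lattice of subspaces of $\F_q^L$ and to exploit two structural features: its supermodularity, and the fact that $\cL_\dist(\F_q^L)$ is closed upward under inclusion. Abbreviate $g(U,R):=\pdeg(\cV,U,R)=\sum_{t}f_t\dim(\ker\mM_t\cap U)-(1-R)\dim U$ and $M(R):=\max_{U\in\cL(\F_q^L)}g(U,R)$. Since $\cL(\F_q^L)$ is finite, $M$ and all partial maxima appearing below are finite maxima of affine functions of $R$, hence continuous and piecewise linear, and $\cF(R):=\inset{W\in\cL(\F_q^L)\mid g(W,R)=M(R)}$ is exactly the set written $\argmax\inset{\pdeg(\cV,*,R)}$ in the statement. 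We must show $\cF(R_\cV)$ meets both $\cL_\dist(\F_q^L)$ and $\cL(\F_q^L)\setminus\cL_\dist(\F_q^L)$.

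I would begin by recording the two structural facts. For supermodularity: for each $t$, writing $K=\ker\mM_t$, one has $K\cap(U+V)\supseteq(K\cap U)+(K\cap V)$ and $K\cap(U\cap V)=(K\cap U)\cap(K\cap V)$, so a dimension count gives $\dim(K\cap(U+V))+\dim(K\cap(U\cap V))\ge\dim(K\cap U)+\dim(K\cap V)$; combined with modularity of $\dim$ and $f_t\ge 0$, this yields $g(U+V,R)+g(U\cap V,R)\ge g(U,R)+g(V,R)$ for all $U,V$. Upward closure of $\cL_\dist(\F_q^L)$ is immediate: a vector of $U$ witnessing the distinguishing condition for a coordinate pair still lies in any $V\supseteq U$; in particular $\{0\}\notin\cL_\dist(\F_q^L)$. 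Finally, since $\cL_\dist(\F_q^L)$ is finite and each set $\inset{R\mid g(U,R)\ge g(W,R)\text{ for all }W\subseteq U}$ is closed, the minimum in the definition of $R_\cV$ is attained by some $U^*\in\cL_\dist(\F_q^L)$, and for this $U^*$ we have $g(U^*,R_\cV)\ge g(W,R_\cV)$ for every subspace $W\subseteq U^*$.

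Next, the $\cL_\dist$ half. Suppose toward a contradiction that $\cF(R_\cV)\cap\cL_\dist(\F_q^L)=\emptyset$ and pick any $V\in\cF(R_\cV)$; then $V\notin\cL_\dist(\F_q^L)$, while $U^*+V\supseteq U^*\in\cL_\dist(\F_q^L)$ forces $U^*+V\in\cL_\dist(\F_q^L)$, hence $U^*+V\notin\cF(R_\cV)$, i.e.\ $g(U^*+V,R_\cV)<M(R_\cV)=g(V,R_\cV)$. Supermodularity then gives $g(U^*\cap V,R_\cV)\ge g(U^*,R_\cV)+g(V,R_\cV)-g(U^*+V,R_\cV)>g(U^*,R_\cV)$, contradicting the choice of $U^*$ since $U^*\cap V\subseteq U^*$. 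For the complementary half I would argue by a limiting argument at the threshold: for every $R<R_\cV$ no $U\in\cL_\dist(\F_q^L)$ can have $g(U,R)=M(R)$, since that would give $g(U,R)\ge g(W,R)$ for all $W\subseteq U$ and hence $R\ge R_{\cV,U}\ge R_\cV$. Thus $\max_{U\in\cL(\F_q^L)\setminus\cL_\dist(\F_q^L)}g(U,R)=M(R)$ for all $R<R_\cV$, and both sides being continuous, the identity persists at $R=R_\cV$, so $\cF(R_\cV)$ contains a subspace outside $\cL_\dist(\F_q^L)$. The one remaining case $R_\cV=0$ is handled directly: since $\dim(\ker\mM_t\cap U)\le\dim U$ and $\sum_t f_t=1$, we get $g(U,0)\le 0=g(\{0\},0)$, so $\{0\}\in\cF(0)$ and $\{0\}\notin\cL_\dist(\F_q^L)$.

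I expect the $\cL_\dist$ half to be the crux: the identity $R_\cV=R_{\cV,U^*}$ gives only a \emph{local} optimality of $U^*$, among subspaces of $U^*$, and upgrading it to the statement that a distinguishing subspace is a global maximizer of $\pdeg$ is precisely where supermodularity (equivalently, that $\cF(R)$ is a sublattice) and the upward closure of $\cL_\dist(\F_q^L)$ must be combined. The other half is comparatively soft, the only delicate point being the boundary case $R_\cV=0$, which is disposed of separately above.
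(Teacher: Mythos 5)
The paper states this lemma purely as a citation to~\cite{LMS25} (Lemma~4.5, part~(3) there) and supplies no proof of its own, so there is no internal argument to compare against; I can only assess your reconstruction on its merits, and it appears correct. Your two structural pillars are both genuine: $U\mapsto\pdeg(\cV,U,R)$ is supermodular on $\cL(\F_q^L)$ because $U\mapsto\dim(K\cap U)$ is supermodular for every fixed $K$ (via $\dim(K\cap(U+V))+\dim(K\cap(U\cap V))\ge\dim(K\cap U)+\dim(K\cap V)$) while $U\mapsto\dim U$ is modular, and $\cL_\dist(\F_q^L)$ is upward closed with $\{0\}\notin\cL_\dist(\F_q^L)$. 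The $\cL_\dist$ half is sound: taking the minimizing $U^*\in\cL_\dist(\F_q^L)$ of $R_{\cV,U}$, the supermodular exchange $g(U^*\cap V)+g(U^*+V)\ge g(U^*)+g(V)$ together with $U^*+V\in\cL_\dist(\F_q^L)$ does force $g(U^*\cap V,R_\cV)>g(U^*,R_\cV)$, contradicting the defining property of $R_{\cV,U^*}$. The complementary half is also correct: for $R<R_\cV$ no $U\in\cL_\dist(\F_q^L)$ can be a global maximizer (its local optimality among $W\subseteq U$ would force $R\ge R_{\cV,U}\ge R_\cV$, and the set $\{R':g(U,R')\ge g(W,R')\ \forall W\subseteq U\}$ is indeed an upward interval since $g(U,\cdot)-g(W,\cdot)$ is nondecreasing for $W\subseteq U$), and both $\max_{U\notin\cL_\dist}g(U,\cdot)$ and $M(\cdot)$ are maxima of finitely many affine functions, so the equality passes to the limit $R\to R_\cV^-$; the boundary case $R_\cV=0$ is correctly disposed of via $\{0\}\in\cF(0)$. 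The one caveat worth noting is the standing assumption $L\ge2$, needed so that $\{0\}\notin\cL_\dist(\F_q^L)$ and $\F_q^L\in\cL_\dist(\F_q^L)$; the lemma as stated is vacuously false for $L=1$, but that regime is outside the scope of the paper. Since the source proof is in~\cite{LMS25}, I cannot confirm whether your supermodularity route matches theirs, but it is a clean and valid lattice-theoretic argument for the claim.
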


\begin{claim}\label{clm:canonical-w}
    There is a canonical $W_\cV \in \cL(\F_q^L) \setminus \cL_\dist(\F_q^L)$ such that for all $R \le R_\cV$ and for every $U \in \cL(\F_q^L)$ satisfying $\dim U \ge \dim W_\cV$, the inequality
    \[
        \pdeg(\cV, U, R) \le \pdeg(\cV, W_\cV, R)
    \]
    is true.
\end{claim}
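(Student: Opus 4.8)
The plan is to exploit the fact that, for a fixed local profile description $\cV$ and subspace $U$, the potential $\pdeg(\cV, U, R)$ is an \emph{affine} function of $R$ whose slope is exactly $\dim U$:
\[
  \pdeg(\cV, U, R) = \Bigl(\sum_{t \in [T]} f_t \cdot \dim\bigl(\ker(\mM_t) \cap U\bigr) - \dim U\Bigr) + R \cdot \dim U .
\]
Consequently, for any two subspaces $U, W$ the map $R \mapsto \pdeg(\cV, U, R) - \pdeg(\cV, W, R)$ is affine with slope $\dim U - \dim W$, hence monotone in $R$ according to the sign of that difference. This single observation, combined with information about which subspaces maximize $\pdeg(\cV, \cdot, R_\cV)$, will do essentially all the work.

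First I would use \cref{lem:argmax-not-dist} to locate the witness. That lemma guarantees that $\argmax \pdeg(\cV, *, R_\cV)$ contains at least one subspace from $\cL(\F_q^L) \setminus \cL_\dist(\F_q^L)$, so the set $\cS := \bigl(\argmax \pdeg(\cV, *, R_\cV)\bigr) \setminus \cL_\dist(\F_q^L)$ is nonempty. To make the choice canonical, I would fix once and for all a total order on $\cL(\F_q^L)$ (for instance, identify each subspace with its reduced row–echelon generating matrix and order these lexicographically) and let $W_\cV$ be the least element of $\cS$ under this order. By construction $W_\cV \in \cL(\F_q^L) \setminus \cL_\dist(\F_q^L)$ and $\pdeg(\cV, W_\cV, R_\cV) = \max_{U \in \cL(\F_q^L)} \pdeg(\cV, U, R_\cV)$.

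Then the inequality follows by a monotonicity argument. Fix $R \le R_\cV$ and $U \in \cL(\F_q^L)$ with $\dim U \ge \dim W_\cV$, and set $\delta(R') := \pdeg(\cV, U, R') - \pdeg(\cV, W_\cV, R')$. By the displayed formula $\delta$ is affine with slope $\dim U - \dim W_\cV \ge 0$, hence non-decreasing in $R'$. Since $W_\cV$ is a maximizer of $\pdeg(\cV, \cdot, R_\cV)$ over \emph{all} subspaces, we have $\delta(R_\cV) \le 0$, so monotonicity gives $\delta(R) \le \delta(R_\cV) \le 0$, i.e.\ $\pdeg(\cV, U, R) \le \pdeg(\cV, W_\cV, R)$ for every $R \le R_\cV$, as required.

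I do not expect a genuine obstacle here: the argument amounts to the remark that $\{\pdeg(\cV, U, \cdot)\}_U$ is a finite family of lines graded by dimension, so a maximizer at $R = R_\cV$ of some (small) dimension stays a maximizer, \emph{among subspaces of at least that dimension}, for all smaller $R$. The only points demanding care are (i) reading off correctly that the $R$-slope of $\pdeg(\cV, U, R)$ equals $\dim U$, and (ii) correctly applying \cref{lem:argmax-not-dist} so that the canonical witness can be taken in $\cL(\F_q^L)\setminus\cL_\dist(\F_q^L)$; the word ``canonical'' is handled purely by the fixed tie-breaking rule and carries no mathematical content.
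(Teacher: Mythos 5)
Your proof is correct and takes essentially the same route as the paper: the paper's direct computation that $\pdeg(\cV, U, R) - \pdeg(\cV, W_\cV, R) \le \pdeg(\cV, U, R_\cV) - \pdeg(\cV, W_\cV, R_\cV) \le 0$ is just your observation (stated slightly less explicitly) that the difference is affine in $R$ with nonnegative slope $\dim U - \dim W_\cV$ and hence non-decreasing. Both proofs pick $W_\cV$ via \cref{lem:argmax-not-dist} and an arbitrary fixed total order, so the arguments coincide.
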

\begin{proof}
    By \cref{lem:argmax-not-dist}, the set  
    \[
        \argmax \pdeg(\cV, *, R_\cV)\;\cap\;\bigl(\cL(\F_q^L)\setminus \cL_\dist(\F_q^L)\bigr)
    \]
    is non-empty.  
    Fix an arbitrary total ordering on the subspaces of $\cL(\F_q^L)$, and select the first subspace in this ordering from the above set.
    Denote this subspace by $W_\cV$.

    Fix a subspace $U \in \cL(\F_q^L)$ satisfying $\dim U \ge \dim W_\cV$, and some $R \le R_\cV$. Then,
    \begin{align*}
        \pdeg(\cV, U, R) - \pdeg(\cV, W_\cV, R) &= \sum_{t \in [T]} f_t \cdot (\dim(\ker(\mM_t) \cap U)- \dim(\ker(\mM_t) \cap W_\cV)) \\
        &\quad - (1-R)(\dim U - \dim W_\cV) \\
        &\le \sum_{t \in [T]} f_t \cdot (\dim(\ker(\mM_t) \cap U)- \dim(\ker(\mM_t) \cap W_\cV)) \\
        &\quad - (1-R_\cV)(\dim U - \dim W_\cV) \\
        &= \pdeg(\cV, U, R_\cV) - \pdeg(\cV, W_\cV, R_\cV) \\
        &\le 0,
    \end{align*}
    where the last inequality holds because $W_\cV \in \argmax \pdeg(\cV, *, R_\cV)$, and the first one holds because $R \le R_\cV$ and $\dim U \ge \dim W_\cV$.
\end{proof}

\subsection{Implied Local Profile Descriptions}
Fix an $L$-local profile description $\cV=((f_1, \mM_1), \ldots, (f_{T}, \mM_{T}))$.

\begin{definition}[Implied Local Profile Description]\label{def:imp-loc-prof-desc}
    Let $\psi$ be a linear map $\psi: \F_q^L \rightarrow \F_q^{L-K}$ for some integer $K \le L$.
    Then, the $(L-K)$-\emph{implied local profile description of $\cV$ with respect to $\psi$}, denoted by $\cV^{\psi}$, is an $(L-K)$-local profile description defined as
    \[
        \cV^\psi:=((f_1, \mM^\psi_1), \ldots, (f_{T}, \mM^\psi_{T})).
    \]
    Here, for each $t \in [T]$, $\mM^\psi_t$ is a matrix satisfying $\ker \mM^\psi_t=\psi( \ker \mM_t)$.
\end{definition}

We now give some intuition for \cref{def:imp-loc-prof-desc}.
\begin{observation}\label{obs:imp-local-prof-desc}
    For a linear map $\psi: \F_q^L \rightarrow \F_q^{L-K}$, let $\cV^{\psi}$ be the corresponding $(L-K)$-\emph{implied local profile description} of $\cV$.
    If a linear code $\cC \subseteq \F_q^n$ contains $\cV$, then there is a matrix $B \in \F_q^{n \times (L-K)}$ (with possibly non-distinct columns) such that
    \begin{enumerate}
        \item $B \subseteq \cC$, and
        \item $B$ satisfies $\cV^\psi$.
    \end{enumerate}
\end{observation}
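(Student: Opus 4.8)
The plan is to exploit the $\F_q$-linearity of $\cC$ together with the fact that $\cV^\psi$ is obtained from $\cV$ by pushing every kernel forward through the fixed linear map $\psi$. Since $\cC$ contains $\cV$, \cref{def:con-loc-prof} supplies a matrix $A \in \F_q^{n \times L}$ whose columns are codewords of $\cC$, together with a witness local profile $M_n(\cV) = (M_1, \ldots, M_n) \in \cV_n$ satisfying $A[i][] \in \ker M_i$ for every $i \in [n]$. (We will not use the distinctness of the columns of $A$, which is consistent with the conclusion allowing $B$ to have repeated columns.) Represent $\psi$ by a matrix $\Psi \in \F_q^{(L-K) \times L}$ acting on column vectors, and define $B := A\Psi^\top \in \F_q^{n \times (L-K)}$. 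Note this already has the right number of columns to be a candidate for a matrix contained in the $(L-K)$-local profile description $\cV^\psi$.

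First I would check item~(1), that $B \subseteq \cC$. Each column of $B$ is $B[][j] = \sum_{k \in [L]} \Psi[j][k]\, A[][k]$, an $\F_q$-linear combination of the columns of $A$; since every column of $A$ lies in $\cC$ and $\cC$ is $\F_q$-linear, every column of $B$ lies in $\cC$ as well, so $B \subseteq \cC$ in the sense of \cref{def:cont-matr}.

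Next I would verify item~(2), that $B$ satisfies $\cV^\psi$ in the sense of \cref{def:sat-loc-prof-desc}, by exhibiting an explicit witness. Write the witness $M_n(\cV)$ as $M_i = \mM_{t(i)}$, where $t(\cdot)$ is the assignment prescribing how $\mM_1, \ldots, \mM_T$ are laid out across the $n$ coordinates, and set $M_n(\cV^\psi) := (\mM^\psi_{t(1)}, \ldots, \mM^\psi_{t(n)})$. This is a valid element of $(\cV^\psi)_n$: for each $t \in [T]$ the matrix $\mM_t$ occupies exactly $f_t n$ coordinates of $M_n(\cV)$, hence $\mM^\psi_t$ occupies exactly $f_t n$ coordinates of $M_n(\cV^\psi)$, matching the description $\cV^\psi=((f_1,\mM^\psi_1),\ldots,(f_T,\mM^\psi_T))$. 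It remains to see that $B[i][] \in \ker \mM^\psi_{t(i)}$ for every $i$. Viewing rows as column vectors, $(B[i][])^\top = \Psi (A[i][])^\top = \psi\bigl((A[i][])^\top\bigr)$; since $(A[i][])^\top \in \ker M_i = \ker \mM_{t(i)}$ and, by \cref{def:imp-loc-prof-desc}, $\ker \mM^\psi_{t(i)} = \psi(\ker \mM_{t(i)})$, we conclude $(B[i][])^\top \in \ker \mM^\psi_{t(i)}$, as required. Hence $B$ satisfies $\cV^\psi$ with witness $M_n(\cV^\psi)$, which establishes the observation.

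There is no genuinely hard step here; the only point needing a moment's care is the bookkeeping that the pushed-forward local profile $M_n(\cV^\psi)$ really is a local profile created according to the description $\cV^\psi$, i.e.\ that the coordinate frequencies $f_t$ are preserved. This is immediate because $\psi$ acts coordinate-wise on the entries of the witness and leaves the assignment $t(\cdot)$ untouched (this also uses that each $f_t$ is a multiple of $1/n$, which holds since $\cV_n \neq \emptyset$).
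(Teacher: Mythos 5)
Your proof is correct and takes essentially the same route as the paper: both define $B$ by applying $\psi$ row-by-row to the matrix $A$ witnessing $\cC$ containing $\cV$, then invoke $\F_q$-linearity of $\cC$ for item (1) and push the witness local profile through $\psi$ for item (2). You simply spell out the bookkeeping (the explicit witness $M_n(\cV^\psi)$ and preservation of the frequencies $f_t$) that the paper leaves as ``easy to verify.''
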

\begin{proof}
    By \cref{def:con-loc-prof}, if $\cC$ contains $\cV$, then there is a matrix $A \in \F_q^{n \times L}$ that (i) has pairwise distinct columns, (ii) is contained in $\cC$, and (iii) satisfies $\cV$.
    Now consider the matrix $B \in \F_q^{n \times (L-K)}$ constructed row by row, such that $B[i][] := \psi(A[i][])$.
    By the linearity of $\cC$, $A \subseteq \cC$ implies $B \subseteq \cC$, and moreover, it is easy to verify that $B$ satisfies $\cV^\psi$.
\end{proof}

Let $W_\cV$ be the canonical subspace guaranteed by \cref{clm:canonical-w} for $\cV$.
Recall that for all $R \le R_\cV$ and for every $U \in \cL(\F_q^L)$ satisfying $\dim U \ge \dim W_\cV$, $W_\cV$ satisfies
\[
    \pdeg(\cV, U, R) \le \pdeg(\cV, W_\cV, R).
\]
Fix a full rank linear map $\psi_\cV \colon \F_q^L \rightarrow \F_q^{L-\dim W_\cV}$ such that $\ker \psi_\cV = W_\cV$.
\begin{definition}[Canonical Implied Local Profile Description]
    Let $\psi_\cV$ be as defined above.
    The $(L-\dim W_\cV)$-\emph{canonical implied local profile description} of $\cV$, denoted by $\cV^{\imp}$, is an $(L-\dim W_\cV)$-local profile description defined as
    \[
        \cV^\imp:=((f_1, \mM^{\imp}_1), \ldots, (f_{T}, \mM^{\imp}_T)).
    \]
    Here, $\mM^\imp_t$ is a matrix satisfying $\ker \mM^\imp_t=\psi_\cV( \ker \mM_t)$ for each $t \in [T]$.
\end{definition}

\paragraph{Robust Local Profile Descriptions.} We will now describe a general method to create robust counterparts to local profile descriptions.

\begin{definition}[Robust Local Profile Description]
    Let $\D \in [0, 1]$ be a constant, and $\cV=((f_1, \mM_1), \ldots, (f_{T}, \mM_{T}))$ be an $L$-local profile description. 
    Consider the set of all $L$-local profile descriptions of the form
    \[
        \Bigl( (f_1 - \D_1,\;\mM_1),\; \ldots,\; (f_T - \D_T,\;\mM_T),\;
       \bigl(\textstyle\sum_{t\in[T]}\D_t,\;\0\bigr) \Bigr)
    \]

    where for every $t \in [T]$, we have $0 \le \D_t \le f_t$ and $\sum_{t \in [T]}\D_t \le \D$.
    Recall that $\0$ is the all zeroes matrix.
    We denote this set by $\Rob_\D(\cV)$, and an element from this set is known as a \emph{$\D$-robust local profile description} of $\cV$.
\end{definition}

Informally speaking, any code $\cC$ that avoids containing matrices that satisfy local profile descriptions from $\Rob_\D(\cV)$ will consequently avoid containing matrices that only satisfy a $(1-\D)$ fraction of the constraints set forth by $\cV$.

\begin{observation}\label{obs:robust-loc-prof-desc}
    For every constant $\D \in [0, 1]$, a matrix satisfying $\cV$ also satisfies every local profile description from $\Rob_\D(\cV)$.
\end{observation}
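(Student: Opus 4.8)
The plan is to take any matrix $A$ satisfying $\cV$, together with a witness for this fact, and convert it into a witness that $A$ satisfies an arbitrary $\D$-robust local profile description $\cV' \in \Rob_\D(\cV)$, simply by relaxing some of the constraints appearing in the witness to the trivial constraint $\0$ (whose kernel is everything).

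First I would let $n$ be the block length of $A$ and pick a local profile $M_n(\cV) = (M_1, \ldots, M_n) \in \cV_n$ witnessing that $A$ satisfies $\cV$, so that $A[i][] \in \ker M_i$ for every $i \in [n]$; by the definition of a local profile, each $\mM_t$ occurs in exactly $f_t n$ of these coordinates. Write the chosen robust description as
\[
    \cV' = \Bigl( (f_1 - \D_1,\, \mM_1),\, \ldots,\, (f_T - \D_T,\, \mM_T),\, \bigl(\textstyle\sum_{t\in[T]} \D_t,\, \0\bigr) \Bigr),
\]
with $0 \le \D_t \le f_t$ for each $t$ and $\sum_{t\in[T]} \D_t \le \D$ (and, since we are asserting that $A$ \emph{satisfies} $\cV'$, each resulting fraction a multiple of $1/n$, i.e.\ each $\D_t n \in \Z$). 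For each $t \in [T]$, among the $f_t n$ coordinates $i$ with $M_i = \mM_t$ I would choose an arbitrary subset $S_t$ of size $\D_t n$, set $S := \bigcup_{t\in[T]} S_t$, and define $M_n(\cV') = (M'_1, \ldots, M'_n)$ by $M'_i := \0$ for $i \in S$ and $M'_i := M_i$ otherwise.

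Next I would check that $M_n(\cV') \in \cV'_n$: the matrix $\mM_t$ now appears in exactly $f_t n - \D_t n = (f_t - \D_t) n$ coordinates, and $\0$ appears in $|S| = \sum_{t\in[T]} \D_t n$ coordinates, matching the fractions prescribed by $\cV'$. Finally I would verify compliance: for $i \notin S$ we have $M'_i = M_i$, hence $A[i][] \in \ker M'_i$ as before, while for $i \in S$ we have $M'_i = \0$, whose kernel is all of $\F_q^L$, so $A[i][] \in \ker M'_i$ trivially. Thus $A[i][] \in \ker M'_i$ for all $i \in [n]$, so $M_n(\cV')$ is a witness that $A$ satisfies $\cV'$ in the sense of \cref{def:sat-loc-prof-desc}. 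As $\cV' \in \Rob_\D(\cV)$ was arbitrary, the observation follows.

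There is essentially no obstacle here: the only point needing a word of care is the integrality bookkeeping (the $\D_t n$ must be integers for the sets $S_t$ to exist and for $\cV'_n$ to be nonempty, which is precisely the convention under which ``$A$ satisfies $\cV'$'' is meaningful), together with the elementary observation that $\ker \0 = \F_q^L$, which is exactly what allows any individual constraint to be relaxed to $\0$ without disturbing compliance.
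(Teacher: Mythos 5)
Your proof is correct, and it spells out the argument the paper leaves implicit (the observation is stated without proof, i.e.\ treated as self-evident): given a witness $M_n(\cV)$, relax the constraint to $\0$ on a set $S$ of coordinates sized to match the $\D_t$'s, and compliance is preserved because $\ker \0 = \F_q^L$. Your remark about integrality ($\D_t n \in \Z$ being required for $\cV'$ to admit any local profile at block length $n$, and hence for ``$A$ satisfies $\cV'$'' to be meaningful) is the right reading of the statement's scope and is implicitly assumed wherever the observation is invoked.
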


We now state and prove a lemma asserting that if the rate $R$ is bounded away from the threshold rate $R_\cV$ by $\eps$, then the potential of the robust counterparts of $\cV^\imp$ is bounded above by $-\eps$, up to a small additive factor.

\begin{lemma}\label{lem:pdeg-rob-implied-ub}
    For $\eps>0$, let $R \le R_\cV-\eps$.
    Let $\cV^\imp_\D \in \Rob_\D(\cV^\imp)$ be some $\D$-robust local profile description of $\cV^\imp$.
    Then for every $U' \in \cL(\F_q^{(L-\dim W_\cV)}), U' \neq \inset{0}$, we have
    \[
        \pdeg(\cV_\D^\imp, U', R) \le -\eps + \D \cdot L.
    \]
\end{lemma}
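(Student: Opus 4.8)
The plan is to peel off the robustness perturbation first, reducing to the case $\D=0$, then to compute $\pdeg(\cV^\imp, U', R)$ exactly by pulling $U'$ back through $\psi_\cV$, and finally to invoke \cref{clm:canonical-w} together with a monotonicity-in-$R$ argument.

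First I would expand $\pdeg(\cV^\imp_\D, U', R)$ straight from the definitions. Writing $\cV^\imp_\D=((f_1-\D_1,\mM^\imp_1),\ldots,(f_T-\D_T,\mM^\imp_T),(\sum_t\D_t,\0))$ and noting that $\ker\0$ is all of $\F_q^{L-\dim W_\cV}$, so the $\0$-block contributes $(\sum_t\D_t)\dim U'$, one gets
\[
    \pdeg(\cV^\imp_\D, U', R) = \sum_{t\in[T]}(f_t-\D_t)\dim(\ker\mM^\imp_t\cap U') + \Bigl(\sum_{t\in[T]}\D_t\Bigr)\dim U' - (1-R)\dim U'.
\]
Since $0\le\D_t\le f_t$ and dimensions are nonnegative, $(f_t-\D_t)\dim(\ker\mM^\imp_t\cap U')\le f_t\dim(\ker\mM^\imp_t\cap U')$, hence
\[
    \pdeg(\cV^\imp_\D, U', R) \le \pdeg(\cV^\imp, U', R) + \Bigl(\sum_{t\in[T]}\D_t\Bigr)\dim U' \le \pdeg(\cV^\imp, U', R) + \D\cdot L,
\]
using $\sum_t\D_t\le\D$ and $\dim U'\le L-\dim W_\cV\le L$. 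So it remains to show $\pdeg(\cV^\imp, U', R)\le-\eps$ whenever $R\le R_\cV-\eps$ and $U'\neq\inset{0}$.

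For that I would pull $U'$ back: set $U:=\psi_\cV^{-1}(U')\in\cL(\F_q^L)$. Since $\psi_\cV$ is full rank with $\ker\psi_\cV=W_\cV$, we have $W_\cV\subseteq U$, $\psi_\cV(U)=U'$, and $\dim U=\dim U'+\dim W_\cV$. The key identity to check is $\psi_\cV(\ker\mM_t\cap U)=\ker\mM^\imp_t\cap U'$ for each $t$: ``$\subseteq$'' is immediate from $\psi_\cV(\ker\mM_t)=\ker\mM^\imp_t$ and $\psi_\cV(U)=U'$, and ``$\supseteq$'' holds since any $v\in\ker\mM^\imp_t\cap U'$ is $\psi_\cV(w)$ for some $w\in\ker\mM_t$, and $v\in U'$ forces $w\in\psi_\cV^{-1}(U')=U$. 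Taking dimensions and using $\ker\psi_\cV=W_\cV\subseteq U$ gives $\dim(\ker\mM^\imp_t\cap U')=\dim(\ker\mM_t\cap U)-\dim(\ker\mM_t\cap W_\cV)$. Substituting this into the definition of $\pdeg$ and collecting terms (via $\dim U'=\dim U-\dim W_\cV$) yields the clean identity
\[
    \pdeg(\cV^\imp, U', R) = \pdeg(\cV, U, R) - \pdeg(\cV, W_\cV, R).
\]
Finally I would use monotonicity in $R$: the right-hand side is affine in $R$ with slope $\dim U-\dim W_\cV=\dim U'\ge 1$ (as $U'\neq\inset{0}$), hence nondecreasing, so for $R\le R_\cV-\eps$ it is at most its value at $R_\cV-\eps$, i.e. $\bigl(\pdeg(\cV, U, R_\cV)-\pdeg(\cV, W_\cV, R_\cV)\bigr)-\eps\dim U'$. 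By \cref{clm:canonical-w} (applicable since $\dim U\ge\dim W_\cV$), the bracketed term is $\le 0$, giving $\pdeg(\cV^\imp, U', R)\le-\eps\dim U'\le-\eps$; combined with the first step this is the claim.

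The one place needing genuine care is the dimension identity in the pull-back step — specifically using $W_\cV\subseteq U$ to identify $(\ker\mM_t\cap U)\cap\ker\psi_\cV$ with $\ker\mM_t\cap W_\cV$ — and, in the robustness step, making sure the perturbation is charged against $\dim U'\le L$ so that the loss is exactly $\D L$ and not more. The monotonicity argument is what upgrades the ``$\le 0$ at $R_\cV$'' guarantee of \cref{clm:canonical-w} into a strict ``$\le -\eps$'' at $R_\cV-\eps$; everything else is a routine unwinding of the definition of $\pdeg$.
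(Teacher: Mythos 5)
Your proof is correct and uses the same key ideas as the paper: the pull-back identity $\psi_\cV(\ker\mM_t\cap U)=\ker\mM^\imp_t\cap U'$ (the paper's \cref{clm:subsp-int}), the rank-nullity dimension count via $W_\cV\subseteq U$, the canonical-$W_\cV$ bound from \cref{clm:canonical-w}, and the affine-in-$R$ monotonicity that converts the "$\le 0$ at $R_\cV$" bound into "$\le-\eps$" below $R_\cV-\eps$. The only difference is presentational: you peel off the $\D$-perturbation up front to isolate the exact identity $\pdeg(\cV^\imp,U',R)=\pdeg(\cV,U,R)-\pdeg(\cV,W_\cV,R)$, whereas the paper carries the $\D_t$ terms through the whole computation as a single chain of (in)equalities; both end with the same $\D L$ slack and the same $-\eps$ contribution.
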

\begin{proof}

    By the guarantee for $W_\cV$ (see \cref{clm:canonical-w}), we have for all subspaces $U \in \cL(\F_q^L)$ satisfying $\dim U \ge \dim W_\cV$,
    \begin{align} 
        \pdeg(\cV, U, R) - \pdeg(\cV, W_\cV, R) &= \pdeg(\cV, U, R_\cV) - \pdeg(\cV, W_\cV, R_\cV) - \eps(\dim U - \dim W_\cV) \nonumber \\ 
        &\leq -\eps (\dim U - \dim W_\cV). \label{eq:wv-pdeg-max}
    \end{align}
    Let $U' \in \cL(\F_q^{(L-\dim W_\cV)})$ be such that $U' \neq \inset{0}$, and let $U \in \cL(\F_q^L)$ be such that $W_\cV \subset U$ and $\psi_\cV(U)=U'$, that is, $U=\psi_\cV^{-1}(U')=\inset{u \in \F_q^L \mid \psi_\cV(u) \in U'}$.
    By the rank-nullity theorem, $\dim U' = \dim U - \dim W_\cV$.

    Consequently,
    \begin{align}
        \pdeg(\cV_\D^\imp, U', R) &= \bigl(\sum_{t \in [T]} (f_t-\D_t)\cdot \dim(\ker (\mM^\imp_t) \cap U')\bigr) \nonumber\\
                            & \quad + \sum_{t \in [T]}\D_t \cdot\dim(\ker \0^\imp) - (1-R)\dim U' \nonumber\\
                           &= \bigl(\sum_{t \in [T]} (f_t-\D_t) \cdot \dim(\psi_\cV(\ker (\mM_t) \cap U))\bigr) \label{eq:subsp-int} \\
                           & \quad + \sum_{t \in [T]}\D_t \cdot \dim(\psi_\cV(\ker \0)) - (1-R)(\dim U-\dim W_\cV)  \nonumber\\
                           &= \bigl(\sum_{t \in [T]} (f_t-\D_t) \cdot \inbrak{\dim(\ker(\mM_t) \cap U)- \dim(\ker(\mM_t) \cap W_\cV)}\bigr)  \label{eq:app-rank-nullity}\\
                           & \quad + \sum_{t \in [T]}\D_t \cdot \dim(\psi_\cV(\ker \0)) - (1-R)(\dim U-\dim W_\cV)  \nonumber\\
                           &\le \pdeg(\cV, U, R) - \pdeg(\cV, W_\cV, R) + \D \cdot L \nonumber\\
                           &\le -\eps + \D \cdot L, \nonumber
    \end{align}
    where \cref{eq:subsp-int} follows from \cref{clm:subsp-int}, stated and proved below.
    Moreover, \cref{eq:app-rank-nullity} follows from an application of the rank-nullity theorem.
    The final inequality follows from \cref{eq:wv-pdeg-max}, owing to the fact that $U \supset W_\cV$, given that $U'$ is a non-trivial subspace.
\end{proof}

\begin{claim}\label{clm:subsp-int}
    Let $W_\cV, \psi_\cV, U'$ and $U$ be as defined in \cref{lem:pdeg-rob-implied-ub}.
    Then, for every subspace K in $\F_q^L$, we have
    \[
        \psi_\cV(K) \cap U' = \psi_\cV(K) \cap \psi_\cV(U) = \psi_\cV(K \cap U).
    \]
\end{claim}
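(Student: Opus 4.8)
The plan is to split the claim into its two equalities and dispatch the genuinely substantive one via a preimage chase.

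First I would record two basic facts about the setup. Since $\psi_\cV$ has full rank it is surjective, and since $U = \psi_\cV^{-1}(U')$ we get $\psi_\cV(U) = U'$: the inclusion $\psi_\cV(U) \subseteq U'$ is immediate from the definition of $U$, while for the reverse inclusion every $u' \in U'$ has a preimage under the surjection $\psi_\cV$, which by definition of $U$ lies in $U$. This already yields the first equality $\psi_\cV(K) \cap U' = \psi_\cV(K) \cap \psi_\cV(U)$. The second fact I would record is the containment $W_\cV \subseteq U$: since $W_\cV = \ker \psi_\cV$ and $0 \in U'$, we have $\psi_\cV(W_\cV) = \inset{0} \subseteq U'$, so $W_\cV \subseteq \psi_\cV^{-1}(U') = U$.

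For the second equality, the inclusion $\psi_\cV(K \cap U) \subseteq \psi_\cV(K) \cap \psi_\cV(U)$ is trivial, as $K \cap U$ is contained in both $K$ and $U$. For the reverse inclusion, take $w \in \psi_\cV(K) \cap \psi_\cV(U)$ and write $w = \psi_\cV(k)$ with $k \in K$ and $w = \psi_\cV(u)$ with $u \in U$. Then $\psi_\cV(k - u) = 0$, so $k - u \in \ker \psi_\cV = W_\cV \subseteq U$; hence $k = u + (k - u)$ is a sum of two vectors of $U$, so $k \in U$. Therefore $k \in K \cap U$ and $w = \psi_\cV(k) \in \psi_\cV(K \cap U)$, which completes the argument.

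There is essentially no obstacle here: the only point worth isolating is the containment $W_\cV \subseteq U$, which is precisely what lets the step $k - u \in \ker \psi_\cV$ close the loop — without a kernel-containing $U$ the identity $\psi(K) \cap \psi(U) = \psi(K \cap U)$ fails in general. I would present it as the standard fact that the image of an intersection equals the intersection of the images whenever one of the two subspaces contains the kernel of the map.
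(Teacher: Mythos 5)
Your proof is correct and follows essentially the same approach as the paper's: the trivial inclusion one way, and for the reverse inclusion the key observation that $k - u \in \ker\psi_\cV = W_\cV \subseteq U$, which closes the loop. You are slightly more explicit than the paper in verifying that $\psi_\cV(U) = U'$ and $W_\cV \subseteq U$ from the setup, but the argument is identical in substance.
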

\begin{proof}
    First, $\psi_\cV(K \cap U) \subseteq \psi_\cV(K) \cap \psi_\cV(U)$ is immediate: if $x \in \psi_\cV(K \cap U)$, then there is a $x' \in K \cap U$ such that $\psi_\cV(x')=x$.
    Then, $x' \in K$ implies $x=\psi_\cV(x') \in \psi_\cV(K)$, and $x' \in U$ implies $x=\psi_\cV(x') \in \psi_\cV(U)$.

    For the other inclusion, take $y \in \psi_\cV(K) \cap \psi_\cV(U)$.
    Then, there exist $k \in K$ and $u \in U$ such that $\psi_\cV(u)=y=\psi_\cV(k)$.
    Hence
    \[
        k-u \in \ker \psi_\cV = W_\cV.
    \]
    But recall that $W_\cV \subset U$, and thus every element of $\ker \psi_\cV$ lies in $U$.
    Therefore, $k=u+(k-u) \in U$.
    Since $k \in K$ as well, $k \in K \cap U$, and so $y = \psi_\cV(k) \in \psi_\cV(K \cap U)$.
    Thus, $\psi_\cV(K) \cap \psi_\cV(U) \subseteq \psi_\cV(K \cap U)$.
    
\end{proof}

\begin{lemma}\label{lem:imp-prof-ub}
    For $0 \le \D \le 1$, consider some $\cV^\imp_\D \in \Rob_\D(\cV^\imp)$.
    For a block length $n$, the number of local profiles associated with $\cV_\D^\imp$ is at most $2^n \inabs{\cV_n}$.
\end{lemma}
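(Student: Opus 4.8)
The plan is to reduce the claim to an elementary estimate on multinomial coefficients. First I would observe that $\cV^\imp$ carries exactly the same tuple of fractions $(f_1,\dots,f_T)$ as $\cV$, so by the combinatorial description of $\inabs{\cdot_n}$ — the multinomial coefficient $n!/\prod_{t\in[T]}(f_t n)!$ — we have $\inabs{(\cV^\imp)_n}=\inabs{\cV_n}$. It therefore suffices to prove $\inabs{(\cV^\imp_\D)_n}\le 2^n\,\inabs{(\cV^\imp)_n}$. If $\inabs{(\cV^\imp_\D)_n}=0$ there is nothing to prove, so I may assume it is positive; then all of its fractions, namely $f_t-\D_t$ for $t\in[T]$ and $\sum_{t}\D_t$, are multiples of $1/n$, and since the $f_t$ are multiples of $1/n$, each $\D_t$ — as a difference of two such multiples — is itself a multiple of $1/n$. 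Writing $\D_t=d_t/n$ with $d_t\in\Z_{\geq 0}$, $d_t\le f_t n$, and $D:=\sum_{t\in[T]}d_t\le \D n\le n$, the two counts read
\[
    \inabs{(\cV^\imp)_n}=\frac{n!}{\prod_{t\in[T]}(f_t n)!},\qquad
    \inabs{(\cV^\imp_\D)_n}=\frac{n!}{D!\cdot\prod_{t\in[T]}(f_t n-d_t)!}.
\]

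The key computation is then to bound the ratio of these two. Using $(f_t n)!/(f_t n-d_t)!=\binom{f_t n}{d_t}\,d_t!$ together with $\prod_{t}d_t!\le D!$ (because $\binom{D}{d_1,\dots,d_T}$ is a positive integer, as $\sum_t d_t=D$), I obtain
\[
    \frac{\inabs{(\cV^\imp_\D)_n}}{\inabs{(\cV^\imp)_n}}
    =\frac{\prod_{t\in[T]}d_t!}{D!}\cdot\prod_{t\in[T]}\binom{f_t n}{d_t}
    \le\prod_{t\in[T]}\binom{f_t n}{d_t}
    \le\prod_{t\in[T]}2^{f_t n}
    =2^{\left(\sum_{t\in[T]}f_t\right)n}=2^n,
\]
where the penultimate inequality uses $\binom{m}{k}\le 2^m$ and the final equality uses $\sum_{t\in[T]}f_t=1$. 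Combining this with $\inabs{(\cV^\imp)_n}=\inabs{\cV_n}$ yields $\inabs{(\cV^\imp_\D)_n}\le 2^n\,\inabs{\cV_n}$, which is the claim.

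The argument is almost entirely mechanical, so no real obstacle is anticipated; the one place that needs a moment of care is the reduction step — checking that once $\cV^\imp_\D$ admits at least one associated local profile, the slacks $\D_t$ are forced to be integer multiples of $1/n$, so that $\inabs{(\cV^\imp_\D)_n}$ and $\inabs{(\cV^\imp)_n}$ are genuine multinomial coefficients that can be compared term by term rather than one of them being spuriously zero.
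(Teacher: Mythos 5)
Your proof is correct, but it takes a route that differs from the paper's.

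The paper argues by describing a surjective \emph{recipe}: every local profile of $\cV^\imp_\D$ is obtained by first taking some local profile $M_n(\cV)\in\cV_n$, replacing each matrix $\mM_t$ by $\mM_t^\imp$ (which does not change the count), and then choosing at most $\D n$ coordinates whose associated matrices are overwritten by $\0$; the number of such choices is bounded by $\binom{n}{\le\D n}\le 2^n$. This gives $\inabs{(\cV^\imp_\D)_n}\le 2^n\inabs{\cV_n}$ by surjectivity.

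You instead compute the multinomial ratio directly: you write both $\inabs{\cV_n}=n!/\prod_t(f_tn)!$ and $\inabs{(\cV^\imp_\D)_n}=n!/(D!\prod_t(f_tn-d_t)!)$ as explicit multinomial coefficients, and bound their quotient using $\prod_t d_t!\le D!$ (a positive multinomial coefficient) and $\binom{m}{k}\le 2^m$ with $\sum_t f_t=1$. This is a purely algebraic verification; it is essentially a formula-level repackaging of the same combinatorics (Vandermonde's identity makes the paper's $\prod_t\binom{f_tn}{d_t}\le\binom{n}{D}\le 2^n$ equivalent to yours), but it avoids any surjectivity reasoning and it explicitly handles the degenerate case where $(\cV^\imp_\D)_n$ is empty because the $\D_t$ are not all multiples of $1/n$ — a small edge case the paper leaves implicit. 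Either argument suffices; the paper's is more operational and fits the constructive tone of the section, while yours is a cleaner standalone estimate.
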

\begin{proof}
Let $\cV^\imp_\D$ be of the form
\[
    \cV^\imp_\D = \Bigl( (f_1 - \D_1,\;\mM_1), \ldots,(f_T - \D_T,\;\mM_T),
       \bigl(\textstyle\sum_{t\in[T]}\D_t,\;\0\bigr) \Bigr),
\]
where for every $t \in [T]$, we have $0 \le \D_t \le f_t$ and $\sum_{t \in [T]}\D_t \le \D$.
Each local profile associated with $\cV_\D^\imp$ can be created by following this recipe: fix some local profile $M_n(\cV)$ created according to $\cV$. Then,
\begin{enumerate}[(i)]
    \item For each coordinate, change the associated matrix from $\mM_t$ to $\mM_t^\imp$.
    \item for each $t \in [T]$, select some $\D_t \cdot n$ coordinates whose associated matrix is $\mM_t$ and change the associated matrix to $\0$.
\end{enumerate}
The first step does not cause an increase in the number of local profiles.
The number of ways to perform the second step is upper bounded by $\binom{n}{ \le \D \cdot n} \le 2^n$.
\end{proof}

We now state and prove a lemma that establishes an upper bound on the probability with which an RLC of rate below the rate threshold $R_\cP$ contains non-zero matrices satisfying a robust local profile description of $\cV^\imp$.
\begin{lemma}\label{lem:ub-bad-prob-lcl}
    Let $\cP$ be an $L$-LCL property.
    Let $R = R_\cP-\eps$ for a constant $\eps \in [0, R_\cP)$, and let $\D$ be a constant in $[0, 1]$.
    Then the probability that an RLC $\cC \subseteq \F_q^n$ of rate $R$ contains a non-zero matrix satisfying a local profile description from the set
    \[
        \bigcup_{\cV \in \cP} \Rob_\D(\cV^\imp)
    \]
    is at most $\sum_{\cV \in \cP} |\cV_n|\cdot 2^{n} \cdot q^{L^2-\eps n + \D \cdot L n}$.
\end{lemma}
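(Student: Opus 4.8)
The plan is to decompose the event by a union bound and bound each piece using \cref{lem:prob-cont-vu} (the per‑profile inclusion probability of an RLC), \cref{lem:pdeg-rob-implied-ub} (the potential of a robust implied profile is at most $-\eps+\D L$ below the threshold), and a count of the relevant local profiles coming from the proof of \cref{lem:imp-prof-ub}, together with a crude bound on the number of subspaces.

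First I would reduce the statement to a finite union. Although $\cP$ and each $\Rob_\D(\cV^\imp)$ are in general uncountable, the event only ever involves finitely many objects: if $\cC$ contains a non‑zero matrix $B \in \F_q^{n \times (L-\dim W_\cV)}$ satisfying some $\cV^\imp_\D \in \Rob_\D(\cV^\imp)$, then $B$ has some witness $M_n(\cV^\imp_\D) \in (\cV^\imp_\D)_n$, so $(\cV^\imp_\D)_n \neq \emptyset$ and all fractions of $\cV^\imp_\D$ are multiples of $1/n$; moreover $U' := \rspn(B)$ is a non‑zero subspace of $\F_q^{L-\dim W_\cV}$, and by definition $\cC$ contains $(\cV^\imp_\D, U')$ with $M_n(\cV^\imp_\D)$ as a witness. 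Hence the event in question is contained in the union, over $\cV \in \cP$, over those $\cV^\imp_\D \in \Rob_\D(\cV^\imp)$ with fractions that are multiples of $1/n$, over $M_n(\cV^\imp_\D) \in (\cV^\imp_\D)_n$, and over non‑zero subspaces $U' \subseteq \F_q^{L-\dim W_\cV}$, of the events ``$\cC$ contains $(\cV^\imp_\D, U')$ with $M_n(\cV^\imp_\D)$ as a witness.'' A union bound over these events is the whole argument.

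For a single event, \cref{lem:prob-cont-vu} applied with $\cV^\imp_\D$ in the role of $\cV$ bounds its probability by $q^{\pdeg(\cV^\imp_\D, U', R)n}$. Since $R = R_\cP - \eps$ and $R_\cP = \min_{\cV \in \cP} R_\cV$ by \cref{def:rate-thresh-prop}, we have $R \le R_\cV - \eps$ for every $\cV \in \cP$, so \cref{lem:pdeg-rob-implied-ub} gives $\pdeg(\cV^\imp_\D, U', R) \le -\eps + \D L$ for every non‑zero $U'$ (the boundary case $\eps = 0$ is identical, the needed inequality then being exactly \cref{clm:canonical-w}). For the counting: the number of non‑zero subspaces of $\F_q^{L-\dim W_\cV}$ is at most $q^{L^2}$, since any subspace of $\F_q^m$ with $m \le L$ is spanned by $\le m$ of its vectors, giving at most $(q^m)^m \le q^{L^2}$ of them; and the proof of \cref{lem:imp-prof-ub} shows that all pairs $(\cV^\imp_\D, M_n(\cV^\imp_\D))$ with $\cV^\imp_\D \in \Rob_\D(\cV^\imp)$ and $M_n(\cV^\imp_\D) \in (\cV^\imp_\D)_n$ are produced by the recipe ``pick $M_n(\cV) \in \cV_n$, relabel each coordinate matrix $\mM_t \mapsto \mM^\imp_t$, then zero out an arbitrary set of at most $\D n$ coordinates,'' of which there are at most $|\cV_n| \cdot \binom{n}{\le \D n} \le |\cV_n| \cdot 2^n$. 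Multiplying the per‑event bound $q^{-\eps n + \D L n}$ by the $q^{L^2}$ choices of $U'$ and the $|\cV_n| \cdot 2^n$ choices of $(\cV^\imp_\D, M_n(\cV^\imp_\D))$, and summing over $\cV \in \cP$, gives exactly $\sum_{\cV \in \cP} |\cV_n| \cdot 2^n \cdot q^{L^2 - \eps n + \D L n}$.

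The one step requiring care is this last count: one must not union‑bound first over the robust descriptions $\cV^\imp_\D$ (of which there can be polynomially many in $n$) and then again over the local profiles of each, since that overcounts; instead one uses that the recipe behind \cref{lem:imp-prof-ub} enumerates all pairs $(\cV^\imp_\D, M_n(\cV^\imp_\D))$ simultaneously, with a single budget $\binom{n}{\le \D n} \le 2^n$ on top of the $|\cV_n|$ base profiles. Everything else is a mechanical assembly of the lemmas already established.
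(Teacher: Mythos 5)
Your proof is correct and follows the paper's own argument essentially verbatim: reduce to a finite union via containment of $(\cV^\imp_\D, U')$ with a witness local profile, bound each term by $q^{\pdeg(\cV^\imp_\D,U',R)n} \le q^{(-\eps+\D L)n}$ using \cref{lem:prob-cont-vu} and \cref{lem:pdeg-rob-implied-ub}, and count the witnesses via \cref{lem:imp-prof-ub} and $q^{L^2}$ subspaces. Your added care on two points — that the union is genuinely finite because only descriptions with $1/n$-multiple fractions can contribute, and that one must enumerate pairs $(\cV^\imp_\D, M_n(\cV^\imp_\D))$ jointly rather than union-bounding twice — is correct and fills in detail the paper leaves implicit, but it does not change the route.
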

\begin{proof}
    By \cref{lem:imp-prof-ub}, we see that for an $L$-LCL property $\cP$, the total number of local profiles that can be created according to local profile descriptions from the set $\bigcup_{\cV \in \cP} \Rob_\D(\cV^\imp)$ is at most
    \[
        \sum_{\cV \in \cP} |\cV_n| \cdot 2^{n}.
    \]
    For some $\cV \in \cP$, consider a local profile $M_n(\cV_\Dp^\imp)$ created according to $\cV_\Dp^\imp \in \Rob_\D(\cV^\imp)$.
    Recall that $\cV_\Dp^\imp$ is a $(L- \dim W_\cV)$-canonical implied local profile description.
    Fix a non-zero subspace $U \in \F_q^{(L- \dim W_\cV)}$.
    From \cref{lem:prob-cont-vu}, and the fact that
    \[
        R \le R_\cP-\eps \le R_\cV-\eps \le R_{\cV, U}-\eps,
    \]
    which follows from \cref{eq:rate-thresh-prop}, we see that the probability that there is a non-zero matrix $A \in \F_q^{n \times (L-\dim W_\cV)}$ such that $A \subseteq \cC$, and $A$ satisfies $(\cV_\Dp^\imp, U)$ with $M_n(\cV_\Dp^\imp)$ as a witness is at most
    \[
        q^{\pdeg(\cV_\Dp^\imp, U, R)n} \le q^{(-\eps+ \D \cdot L)n},
    \]
    where the inequality follows from \cref{lem:pdeg-rob-implied-ub}.
    The result follows by union bounding over all non-zero subspaces $U$, and all local profiles created according to local profile descriptions from the set $\bigcup_{\cV \in \cP} \Rob_\D(\cV^\imp)$. 
\end{proof}

We now turn to analyze the upper bound on the probability in \cref{lem:ub-bad-prob-lcl}.

\begin{definition}\label{def:kappa}
    For a prime power $q$ and all sufficiently large $n$, we define $\kappa_{q}(\cP) \in \mathbb{R}$ as
    \[
        \kappa_{q}(\cP) := \inf_\kappa \inset{ \kappa : \sum_{\cV \in \cP} |\cV_n| \le q^{\kappa n} }.
    \]
\end{definition}
The probability upper bound is non-trivial if and only if $\sum_{\cV \in \cP} |\cV_n|\cdot 2^{n} \cdot q^{L^2-\eps n}\cdot q^{\D \cdot L n} < 1$.
This is achieved when
\[
    \kappa_{q}(\cP) + \log_q 2+\frac{L^2}{n}-\eps + \D \cdot L < 0,
\]
Upon restricting $\Dp$ to be at most $\eps/2L$ and $n$ to be at least $4L^2/\eps$, it suffices to have
\begin{equation}\label{eq:cond-for-reas}
    \kappa_{q}(\cP) + \log_q 2-\frac{\eps}{4} < 0.
\end{equation}
Here, we are interested in the smallest prime power $q$ that satisfies the above inequality.
Such a $q$ exists.
Indeed, this is seen because the total number of local profiles associated with $\cP$---equal to $q^{\kappa_q(\cP)n}$---is at most $q_0^{L^2n}$.
Recall that $\Fqz$ is the field over which constraints of $\cP$ were defined.
Thus, $\kappa_q(\cP) \le \log_q (q_0) \cdot L^2$, which implies
\[
    \lim_{q \rightarrow \infty} (\kappa_q(\cP)+\log_q 2)=0.
\]
We choose the smallest $q$ so that $\kappa_{q}(\cP) + \log_q 2$ is lesser than $\eps/4$, and thus \cref{eq:cond-for-reas} is satisfied.

We summarize the above discussion as the following fact:
\begin{fact}\label{fct:good-in-code}
    For an $L$-LCL property $\cP$ and $R \le R_\cP-\eps$, there exists a minimum prime power $q$ that is solely a function of $\eps, L,q_0,$ and $\cP$, such that the following holds:
    There exists a linear code $\cC \subseteq \F_q^n$ for every block length $n \ge 4L^2/\eps$ that contains no non-zero matrices satisfying local profile descriptions from the set $\bigcup_{\cV \in \cP} \Rob_\D(\cV^\imp)$.
\end{fact}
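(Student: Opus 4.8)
The plan is to obtain the code by the probabilistic method, with \cref{lem:ub-bad-prob-lcl} doing the heavy lifting. Fix the rate $R = R_\cP - \eps$, let $\cC \subseteq \F_q^n$ be a rate-$R$ RLC, and restrict $\D \le \eps/(2L)$. By \cref{lem:ub-bad-prob-lcl}, the probability that $\cC$ contains a non-zero matrix satisfying some local profile description in $\bigcup_{\cV \in \cP}\Rob_\D(\cV^\imp)$ is at most $\sum_{\cV \in \cP}|\cV_n| \cdot 2^n \cdot q^{L^2 - \eps n + \D L n}$. The entire task is to choose $q$ large enough that this quantity is strictly below $1$, whereupon some code in the support of the RLC distribution witnesses the claim.

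Next I would feed in the two structural inputs. Since $\cP$ is \emph{reasonable}, $\sum_{\cV\in\cP}|\cV_n| \le q^{\kappa_q(\cP) n}$, so the failure probability is at most $q^{(\kappa_q(\cP) + \log_q 2 + L^2/n - \eps + \D L)\,n}$; taking $\D \le \eps/(2L)$ and $n \ge 4L^2/\eps$ forces $\D L \le \eps/2$ and $L^2/n \le \eps/4$, so it suffices to have $\kappa_q(\cP) + \log_q 2 - \eps/4 < 0$, which is exactly \cref{eq:cond-for-reas}. Because $\kappa_q(\cP) \to 0$ as $q \to \infty$ (this is the definition of \emph{reasonable}) and $\log_q 2 \to 0$ as well, the left-hand side tends to $-\eps/4 < 0$, so the set of prime powers $q$ satisfying \cref{eq:cond-for-reas} is non-empty; let $q$ be its least element, which by construction depends only on $\eps$, $L$, and the decay rate of $\kappa_q(\cP)$. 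For this $q$ and every $n \ge 4L^2/\eps$, the failure probability is $<1$, so there exists a fixed linear code $\cC \subseteq \F_q^n$ that contains no non-zero matrix satisfying any local profile description in $\bigcup_{\cV\in\cP}\Rob_\D(\cV^\imp)$.

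I do not expect a genuine obstacle here: \cref{lem:pdeg-rob-implied-ub}, \cref{lem:imp-prof-ub}, and \cref{lem:ub-bad-prob-lcl} already carry all the analytic content, and this fact merely repackages their conclusion as an existence statement. The only point deserving a little care is the mildly self-referential way $q$ is selected---one must phrase it as ``the least prime power for which $\kappa_q(\cP) + \log_q 2 < \eps/4$'' and appeal to the limiting behavior of $\kappa_q(\cP)$ to know that such prime powers exist, rather than attempting to solve for $q$ in closed form.
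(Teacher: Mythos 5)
Your proposal is correct and matches the paper's own argument exactly: the paper also instantiates \cref{lem:ub-bad-prob-lcl}, invokes the bound $\sum_{\cV\in\cP}|\cV_n|\le q^{\kappa_q(\cP)n}$ for reasonable properties, restricts $\D\le\eps/(2L)$ and $n\ge 4L^2/\eps$ to reduce the requirement to $\kappa_q(\cP)+\log_q 2<\eps/4$ (this is \cref{eq:cond-for-reas}), and then appeals to $\lim_{q\to\infty}(\kappa_q(\cP)+\log_q 2)=0$ to select the least such prime power. Your side remark about the mildly self-referential choice of $q$ is exactly the point the paper handles with the same limiting argument, so there is nothing further to reconcile.
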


\begin{lemma}\label{lem:good-in-code-brute-force}
    For a block length $n \ge 4L^2/\eps$ and an $L$-LCL property $\cP$, let $\cC$ be a linear code $\cC$ as described in \cref{fct:good-in-code}. Then,
    \begin{enumerate}
        \item the code $\cC$ can be found in time $q^{\poly(n, L)}$, and
        \item $\cC$ does not satisfy $\cP$.
    \end{enumerate}
\end{lemma}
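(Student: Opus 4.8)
The plan is to treat the two assertions separately: assertion~(1) by an explicit brute-force search whose running time we bound, and assertion~(2) by contradiction, pushing a hypothetical witness for $\cP$ through the canonical implied local profile description of \cref{def:imp-loc-prof-desc}.

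\textbf{Assertion (1).} Fix once and for all the minimum prime power $q$ furnished by \cref{fct:good-in-code}, and search exhaustively over all $\F_q$-linear codes $\cC\subseteq\F_q^n$; there are at most $q^{n^2}$ of them (one row span per matrix in $\F_q^{n\times n}$). For each candidate $\cC$ I would test the defining property from \cref{fct:good-in-code}, namely that $\cC$ contains no non-zero matrix satisfying a local profile description in $\bigcup_{\cV\in\cP}\Rob_\D(\cV^\imp)$. The key observation is that this test is a finite collection of linear-algebraic feasibility checks: a matrix can satisfy a local profile description $\cW$ only when $\cW_n\neq\emptyset$, and $\cW_n\neq\emptyset$ forces every fraction appearing in $\cW$ — in particular each $\D_t$ and $\sum_t\D_t$ — to be an integer multiple of $1/n$, so for each of the finitely many $\cV\in\cP$ with $|\cV_n|>0$ only finitely many robust descriptions $\cV^\imp_\D\in\Rob_\D(\cV^\imp)$ and associated local profiles $M_n(\cV^\imp_\D)$ are relevant, their total number being at most $\sum_{\cV\in\cP}|\cV_n|\cdot 2^n\le q^{(\kappa_{q}(\cP)+\log_q 2)n}=q^{O(n)}$ by \cref{lem:imp-prof-ub} and reasonableness of $\cP$. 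For a fixed local profile $M_n=(M_1,\dots,M_n)$, the set $\{A\in\F_q^{n\times L}:A\subseteq\cC,\ A[i][]\in\ker M_i\ \text{for all }i\in[n]\}$ is an intersection of linear conditions on $A$ (membership of each column in the subspace $\cC$, together with membership of each row in a fixed subspace), hence itself a linear subspace, so one decides in time $\poly(n,L,\log q)$ whether it is trivial; the candidate $\cC$ survives iff this subspace is $\{\0\}$ for every relevant $M_n$. Computing $W_\cV$ and $\psi_\cV$, and therefore $\cV^\imp$ and the set $\Rob_\D(\cV^\imp)$, is a finite computation over the at most $q^{L^2}$ subspaces of $\F_q^L$, using that $\pdeg(\cV,\cdot,R)$ is affine in $R$ to locate $R_\cV$ and its maximizer. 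By \cref{fct:good-in-code} at least one candidate passes every check, so the search returns such a code in total time $q^{n^2}\cdot q^{O(n)}\cdot\poly(n,L,\log q)=q^{\poly(n,L)}$. (Here I assume, as holds for all the properties of interest, that $\cP$ is presented so that $\{\cV\in\cP:|\cV_n|>0\}$ can be enumerated.)

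\textbf{Assertion (2).} Suppose toward a contradiction that $\cC$ satisfies $\cP$. By \cref{def:sat-lcl-prop} and \cref{def:con-loc-prof} there are a $\cV\in\cP$ and a matrix $A\in\F_q^{n\times L}$ with pairwise distinct columns such that $A\subseteq\cC$ and $A$ satisfies $\cV$. Pairwise distinctness of the columns means that for every $1\le i<j\le L$ some row of $A$ separates coordinates $i$ and $j$, i.e.\ $\rspn(A)\in\cL_\dist(\F_q^L)$. Let $W_\cV$ and $\psi_\cV\colon\F_q^L\to\F_q^{L-\dim W_\cV}$ be the canonical subspace and full-rank map fixed for $\cV$, with $\ker\psi_\cV=W_\cV$, so that $\cV^{\psi_\cV}$ coincides with $\cV^\imp$. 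The construction in the proof of \cref{obs:imp-local-prof-desc}, applied with $\psi=\psi_\cV$ to this $A$, produces $B\in\F_q^{n\times(L-\dim W_\cV)}$ with $B[i][]=\psi_\cV(A[i][])$ for all $i$, such that $B\subseteq\cC$ and $B$ satisfies $\cV^\imp$; in particular $\rspn(B)=\psi_\cV(\rspn(A))$. Now $\cL_\dist(\F_q^L)$ is upward closed under inclusion (vectors witnessing distinctness of two coordinates remain in any larger subspace), while $W_\cV\notin\cL_\dist(\F_q^L)$ by \cref{clm:canonical-w}; hence $\rspn(A)\not\subseteq W_\cV=\ker\psi_\cV$, so $\rspn(B)\neq\{0\}$ and thus $B\neq\0$. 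Finally, by \cref{obs:robust-loc-prof-desc}, since $B$ satisfies $\cV^\imp$ it satisfies every local profile description in the (nonempty) set $\Rob_\D(\cV^\imp)\subseteq\bigcup_{\cV'\in\cP}\Rob_\D((\cV')^\imp)$. Thus $B$ is a non-zero matrix contained in $\cC$ satisfying a local profile description from $\bigcup_{\cV'\in\cP}\Rob_\D((\cV')^\imp)$, which — as $n\ge 4L^2/\eps$ — contradicts the property of $\cC$ from \cref{fct:good-in-code}. Therefore $\cC$ does not satisfy $\cP$.

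\textbf{Main obstacle.} The only genuinely nontrivial step is verifying $B\neq\0$ in assertion~(2): a priori, applying $\psi_\cV$ could collapse the witness $A$ to the zero matrix. This is exactly where we use that $W_\cV$ was chosen to be a \emph{non}-distinguishing subspace (\cref{clm:canonical-w}), together with the elementary upward-closedness of $\cL_\dist(\F_q^L)$. Everything else is bookkeeping: the finiteness claims behind assertion~(1) come from the multiple-of-$1/n$ constraint on the fractions and from \cref{lem:imp-prof-ub}, and the stated running time is a routine accounting.
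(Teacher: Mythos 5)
Your proof is correct and follows essentially the same two-part structure as the paper's: a brute-force enumeration of candidate codes for assertion~(1), and for assertion~(2) pushing the hypothetical witness $A$ through the canonical map $\psi_\cV$, using $W_\cV\notin\cL_\dist(\F_q^L)$ (via your upward-closedness observation for $\cL_\dist$, which is the same fact the paper uses more concretely by picking a pair $i\neq j$ on which vectors in $W_\cV$ agree) to conclude $B\neq\0$, then invoking \cref{obs:robust-loc-prof-desc} and \cref{fct:good-in-code} for the contradiction. The only cosmetic difference is in assertion~(1), where you enumerate the relevant local profiles and test triviality of the resulting subspace of $\F_q^{n\times L}$ by linear algebra, whereas the paper enumerates all $q^{nL}$ candidate matrices directly; both give $q^{\poly(n,L)}$.
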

\begin{proof}
    We first prove that $\cC$ can be found in time $q^{\poly(n, L)}$.
    By our definition of random linear codes, a non-zero probability is placed on at most $q^{(1-R)n}$ linear subspaces of $\F_q^n$.
    For each such code, we perform a check of the following form: determine whether it contains a non-zero matrix satisfying a local profile description from the set $\bigcup_{\cV \in \cP} \Rob_\D(\cV^\imp)$.
    We return the first code $\cC$ that does not contain any matrices of the described form.
    Such a code is guaranteed to exist by \cref{fct:good-in-code}.
    Let us analyze the runtime of this algorithm.
    
    The number of non-zero matrices satisfying local profile descriptions from the set $\bigcup_{\cV \in \cP} \Rob_\D(\cV^\imp)$ is at most the total number of matrices of dimensions $n \times L$, which is equal to $q_0^{nL} \le q^{nL}$.
    Given the parity check matrix of a linear code $\cC \subseteq \F_q^n$, we can check whether every column of a matrix $A \in \F_q^{n \times L}$ lies in $\cC$ in time at most $n^2L$.
    Therefore, for each linear code, we can perform the aforementioned check in time at most $n^2L \cdot q^{nL}$.
    Consequently, the total runtime is at most
    \[
        q^{(1-R)n}\cdot n^2L \cdot q^{nL} \le q^{\poly (n, L)}.
    \]

    We now prove that $\cC$ does not satisfy $\cP$.
    Assume for contradiction that it does.
    Then, there is a $\cV \in \cP$ such that $\cC$ contains $\cV$.
    By \cref{obs:imp-local-prof-desc}, $\cC$ contains a matrix $B$ that satisfies $\cV^\imp$.
    The matrix $B$ was obtained from some matrix $A \subseteq \cC$ having pairwise distinct columns, by applying a map $\psi$ to every row of $A$.
    By the definition of $\cV^\imp$, there exist $i, j \in [L]$, where $i \neq j$, such that the kernel of $\psi$ consists of a subspace whose vectors agree at coordinates $i, j$.
    This implies that $B$ is non-zero.
    Furthermore, by \cref{obs:robust-loc-prof-desc}, $B$ also satisfies every $\cV^\imp_\D \in \Rob_\D(\cV)$.
    This contradicts the fact that $\cC$ is of the form as specified in \cref{fct:good-in-code}.
\end{proof}

Recall that the code produced by the AEL construction, $\cael$, has alphabet $\sigin^{d}$, where $\sigin$ is the alphabet of the inner code.
We take the inner code to be over $\F_{q}$, and therefore interpret $\cael$ as a code over the extension field $\F_{Q}$, where $Q = q^{d}$.
Furthermore, by inspection, the definitions of containment and satisfiability (\cref{def:sat-loc-prof-desc}, \cref{def:cont-matr}, \cref{def:con-loc-prof}, \cref{def:sat-lcl-prop}) also apply to matrices and codes over $\F_Q$. In particular, they apply to $\cael \subseteq \F_Q^{N\dd}$.

\subsection{Main Result}
We now state and prove the central theorem of this section.
Fix an $L$-LCL property $\cP$, and $\rin \le R_\cP-\eps$ for some $\eps>0$.
Let $\Dp = \eps/2L$ be as defined above.
Let $T_\cP$ denote the maximum number of distinct matrices appearing in any local profile description $\cV \in \cP$.
Let $G$ be the bipartite graph obtained from \cref{clm:kmrs} by setting 
\[
    \eta=\frac{\Dp}{4T_\cP},
\]
and
\[
    \zeta=\frac{\dout}{2T_\cP},
\]
having degree
\[
    d=O\inbrak{\frac{1}{\zeta\eta^2}} = O\inbrak{\frac{T_\cP^3}{\Dp^2\cdot \dout}}.
\]
For any integer $N$ satisfying $N > T_\cP/\dout$, take $G$ to have $N\dd$ vertices on both sides.
Let $\cin \subseteq \F_q^{d}$ be an $\F_q$-linear code of rate $\rin$, where $q$ is the minimum prime power guaranteed by \cref{fct:good-in-code}, and let $\cout \subseteq (\F_q^{\rin d\dd})^N$ be an $\F_q$-linear code having distance $\dout$.
By definition, $\phi$ is now an $\F_q$-linear map of the form $\phi \colon \F_q^{\rin d \dd} \rightarrow \cin$.

\begin{theorem}\label{thm:the-big-one}
Let $\cin, \cout$, and $G$ be as defined above.
Furthermore, assume that $\cin$ is a linear code whose existence is guaranteed by \cref{fct:good-in-code}. 
Then, $\cael(\cin, \cout, G) \subseteq \F_Q^{N\dd}$ is an $\F_q$-linear code that \textbf{does not} satisfy $\cP$.
\end{theorem}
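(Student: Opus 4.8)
The plan is to argue by contradiction, following the blueprint of the list-decoding warm-up (\cref{thm:list-dec-ael}), but now tracking \emph{constraint types} of the rows of the defining matrix rather than hyperedges. Suppose $\cael$ satisfies $\cP$. Then by \cref{def:sat-lcl-prop} there is some $\cV \in \cP$ and a matrix $A \in \F_Q^{N\dd \times L}$ with pairwise distinct columns (codewords of $\cael$), satisfying $\cV$ with some witness local profile $M_{N\dd}(\cV) = (M_1,\ldots,M_{N\dd})$. Applying the canonical map $\psi_\cV$ row-wise to $A$ produces (as in \cref{obs:imp-local-prof-desc}) a matrix $B \in \F_Q^{N\dd \times (L - \dim W_\cV)}$ contained in $\cael$ and satisfying $\cV^\imp$, with witness profile $M^\imp_{N\dd}(\cV)$ whose $i$-th entry is $\mM^\imp_{t(i)}$ where $\mM_{t(i)}$ is the matrix at coordinate $i$ in the original profile. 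Crucially, since the columns of $A$ are pairwise distinct, for each pair $i<j \in [L]$ there is a column of $A$ differing at rows $i,j$; but $\ker \psi_\cV = W_\cV \notin \cL_\dist(\F_q^L)$, so the argument only guarantees $B$ is \emph{non-zero}, not that it has distinct columns — which is exactly why we passed to the implied profile in the first place.

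Next I would invoke the sampling property of $G$ (\cref{clm:kmrs}) to port the type statistics of $B$'s rows over to the left vertices, via a weighting of the flattened-projection framework (\cref{def:proj}, \cref{def:flatten}). Let $B^\proj \in \F_Q^{N\dd \times (L-\dim W_\cV)}$ be the row-projection of $B$ (equivalently, project each column, which is a codeword of $\cael$, so $B^\proj$'s columns lie in $\cout \circ \cin$). For each $t \in [T_\cP]$, let $S_t \subseteq V_R$ be the set of right vertices $r$ whose associated inner-code block of rows of $B$ carries matrix $\mM^\imp_t$ — here I use that a coordinate $i \in [N\dd]$ corresponds to a pair $(r,j)$, $r \in [N]$, $j \in [d]$, and define $S_t$ at the granularity of right vertices by the \emph{density} of type-$t$ coordinates within each block, restricting (as in the warm-up) to types that are $\eta$-dense to control the union bound over the at most $T_\cP$ types. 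Then for all but $\le \inabs{\{t\}}\,\zeta N < \dout N$ left vertices $\ell$, the fraction of incoming edges of $\ell$ from each dense type $S_t$ is at least its global density minus $\eta$; call this good set $L^* \subseteq V_L$, $\inabs{L^*} > (1-\dout)N$. For $\ell \in L^*$, the rows of $B^\proj(\ell) \in \F_Q^{d \times (L-\dim W_\cV)}$ carry matrix $\mM^\imp_t$ on at least $(\mu(S_t)-\eta)d$ coordinates; hence the deficit total across all types is at most $T_\cP \cdot \eta \cdot d \le (\Dp/4) d$, so $B^\proj(\ell)$ satisfies a $\Dp$-robust local profile description $\cV^\imp_\Dp \in \Rob_\Dp(\cV^\imp)$ (filling the deficit coordinates with the all-zeroes matrix $\0$, whose kernel is everything).

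Now I would close the contradiction. Since the columns of $B$ are codewords of $\cael$, the columns of $B^\proj$ lie in $\cout \circ \cin$; in particular $B^\proj(\ell)$, as a matrix over $\F_q^d$, has its columns lying in $\cin$ — i.e.\ $B^\proj(\ell) \subseteq \cin$. On the other hand, because $B \neq \0$ and $\cout$ has distance $\dout$, the number of left vertices $\ell$ with $B^\proj(\ell) = \0$ is at most $(1-\dout)N$ (a zero row-projection at $\ell$ means the corresponding $\cout$-symbols of all columns are the encoding $\phi(\0)$, and this can happen on at most $(1-\dout)N$ coordinates unless all columns of $B$ are zero). Here I need to be a little careful — strictly, I should fix a column of $B$ that is non-zero and track where its $\cout$-preimage is non-zero, using $\F_q$-linearity of $\phi$; this is the one spot where the non-triviality of $B$ must be converted into non-triviality of a \emph{local} projection, paralleling the $S \cap L^* \neq \emptyset$ step in the proof of \cref{thm:list-dec-ael}. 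Since $(1-\dout)N + (1-\dout)N$ can exceed $N$... actually the two bad sets together have size $< \dout N + (1-\dout)N = N$ only if I am more careful: I should instead note $\inabs{L^*} > (1-\dout)N$ and $\inabs{\{\ell : B^\proj(\ell) = \0\}} \le (1-\dout)N$, and these are not obviously disjoint. The fix is the one used in the warm-up: set $\zeta$ small enough that $T_\cP\,\zeta N$ is a \emph{tiny} fraction, so $\inabs{L^*} > (1-\dout/2)N$ say, while the zero-projection set is $\le (1-\dout)N$; then $L^*$ meets the complement of the zero set, giving some $\ell^\star \in L^*$ with $B^\proj(\ell^\star) \neq \0$, $B^\proj(\ell^\star) \subseteq \cin$, and $B^\proj(\ell^\star)$ satisfying $\cV^\imp_\Dp \in \Rob_\Dp(\cV^\imp)$. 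This contradicts the defining property of $\cin$ from \cref{fct:good-in-code} (with block length $d \ge 4L^2/\eps$, which holds by choice of $d$), namely that $\cin$ contains no non-zero matrix satisfying any local profile description in $\bigcup_{\cV\in\cP}\Rob_\Dp(\cV^\imp)$.

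\textbf{Main obstacle.} The crux is the bookkeeping that turns ``$B$ is globally non-zero'' into ``some good left vertex $\ell^\star$ has $B^\proj(\ell^\star)$ non-zero'' simultaneously with the type-porting: the sampling lemma controls the $\dout$-budget, and it must be spent \emph{twice} — once for the $T_\cP$ union bound over dense types, once to dodge the zero-projection coordinates of $\cout$ — so the constants $\eta = \Dp/(4T_\cP)$ and $\zeta = \dout/(2T_\cP)$ are chosen precisely to keep both expenditures below $\dout N$. The rest (the robust profile satisfaction, the $\F_q$-linearity propagating through $\phi$ and the projection, and the reasonableness bound ensuring $\cin$ exists) is routine given the lemmas already proved.
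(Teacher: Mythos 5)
Your proposal is correct and follows the paper's argument essentially step for step: contradict the assumption via a witness matrix $A$, pass to the canonical implied profile via $\psi_\cV$ (precisely to trade ``pairwise distinct columns'' for ``non-zero matrix''), port type statistics to the left via the sampler of \cref{clm:kmrs} to place each local projection inside $\Rob_\Dp(\cV^\imp)$, and use a non-zero column together with $\cout$'s distance to locate a good left vertex with a non-zero local matrix, contradicting the choice of $\cin$ from \cref{fct:good-in-code}. Two minor slips worth flagging: the type of a coordinate is attached to the right vertex $r$ and is therefore constant across all $d$ flattened positions $(r,j)$ (there is no intra-block ``density'' to average), and your disjointness worry near the end is unfounded --- once $|L^*| > (1-\dout)N$ and $|\{\ell : A^\proj_\psi(\ell)\neq\0\}|\ge\dout N$, inclusion--exclusion already gives $L^* \cap S \neq \emptyset$, so the extra slack from a smaller $\zeta$ (which the paper's parameters do happen to provide) is not actually needed for this step.
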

\begin{proof}
    Since both $\cout$ and the map $\phi$ are $\F_q$-linear, it follows that $\cael$ is itself an $\F_q$-linear code.

    Assume for contradiction that $\cael$ satisfies $\cP$.
    Then, there is a $\cV=(f_1, \mM_1),\ldots,(f_{T},\mM_{T})) \in \cP$ such that $\cael$ contains $\cV$, which implies the existence of a matrix $A \in \F_Q^{N\dd \times L}$, such that
    \begin{enumerate}
        \item $A \subseteq \cael$,
        \item $A$ satisfies $\cV$, and
        \item $A$ has pairwise distinct columns.
    \end{enumerate}

    Let the columns of $A$ be pairwise distinct codewords $c_1,\ldots,c_L \in \cael \subseteq \F_Q^{N\dd}$.
    Note that we can reinterpret these codewords as vectors in $(\F_q^d)^{N\dd}$, and we choose to do so.
    We now perform three operations.
    Firstly, we create $A_\fl \in \F_q^{([N] \times [d]) \times L}$ from $A$ by \emph{flattening} every column of $A$ (see \cref{def:flatten}) and then collecting them in a matrix $A_\fl$.
    Therefore, for every $i \in [L]$, $A_\fl[][i]=(A[][i])_\fl$.
    Secondly, we create $A^\proj_\fl \in \F_q^{([N] \times [d]) \times L}$ by performing the \emph{projection} operation on $A_\fl$ (see \cref{def:proj}).

    We now instantiate the objects required to perform the third operation.
    As $R \le R_\cP-\eps \le R_\cV-\eps$, by \cref{clm:canonical-w} there exists a (canonical) subspace $W_\cV$ for $\cV$.
    In particular, because
    \[
        W_\cV \in (\cL(\F_q^L) \setminus \cL_\dist(\F_q^L)),
    \]
    $\dim W_\cV \neq L$ holds, as the only subspace of dimension $L$, $\F_q^L$, lies in $\cL_\dist(\F_q^L)$.
    Define a linear map $\psi \colon \F_q^L \rightarrow\F_q^{(L-\dim W_\cV)}$ satisfying $\ker \psi = W_\cV$.
    Note that because $\dim W_\cV < L$, this map is well-defined.
    The third operation consists of applying $\psi$ on each row of $A^\proj_\fl$.
    We denote the resultant matrix by $A^\proj_\psi \in \F_q^{([N] \times [d]) \times (L-\dim W_\cV)}$.
    That is, $A^\proj_\psi$ satisfies
    \[
        A^\proj_\psi[(\ell, i)][]=\psi(A^\proj_\fl[(\ell, i)][])
    \]
    for all $\ell \in [N]$ and $i \in [d]$.
    
    We recall \cref{def:proj-on-vertices}, and note that $A^\proj_\psi(\ell)$ denotes the submatrix of $A^\proj_\psi$ consisting of rows indexed by vertex-edge pairs corresponding to the vertex $\ell \in [N]$.
    This submatrix lives in $\F_q^{[d] \times (L-\dim W_\cV)}$.

    We now prove a claim stating that the submatrices $A^\proj_\psi(\ell)$ corresponding to most left vertices in fact satisfy $\cV_\Dp^\imp$, for some $\cV^\imp_\D \in \Rob_\D(\cV^\imp)$.

    \begin{claim}\label{clm:many-parts-sat-cvrob}
        There exists a set of indices $L^* \subseteq [N]$ satisfying $\inabs{L^*} > (1-\dout)N$ such that for every $\ell \in L^*$, there exists a $\cV^\imp_\D \in \Rob_\D(\cV^\imp)$ such that the matrix $A^\proj_\psi(\ell)$ satisfies $\cV^\imp_\D$.
    \end{claim}

    We state the rest of the proof assuming \cref{clm:many-parts-sat-cvrob}, whose proof is given below.

    We now prove that because the columns of $A$ are pairwise distinct, $A^\proj_\psi$ contains at least one non-zero row.
    This is seen as follows: $A \in (\F_q^d)^{N\dd \times L}$ has pairwise distinct columns, which implies that $A_\fl$ has pairwise distinct columns as well.
    Because $A^\proj_\fl$ is obtained by applying a permutation on the rows of $A_\fl$, the same applies for $A^\proj_\fl$.
    Therefore, for every $i_1,i_2 \in [L]$, there is at least one row in $A^\proj_\fl$ that has differing entries on $i_1$ and $i_2$.
    In particular, this is true for a pair of indices of $[L]$ on which every vector in $W_\cV$ has identical entries, and such a pair of indices exists by virtue of $W_\cV \in (\cL(\F_q^L) \setminus \cL_\dist(\F_q^L))$.
    This implies that there exists a $\ell \in [N]$ and a $j \in [d]$ such that $A^\proj_\fl[(\ell, j)][] \not\in W_\cV$.
    Thus, it is seen that $A^\proj_\psi[(\ell, j)][] \neq \0^{L-\dim W_\cV}$.

    This immediately implies that one of the columns of $A^\proj_\psi$ is non-zero.
    Denote the index of this column by $e \in [L-\dim W_\cV]$.
    Observe that the columns of $A^\proj_\psi$ are codewords in $\cout \circ \cin$.
    Indeed, the columns of $A_\fl$ are (flattened) codewords of $\cael$, and so the columns of $A^\proj_\fl$ are codewords in $\cout \circ \cin$.
    It follows directly from the $\F_q$-linearity of $\cout$ and $\cin$ that $\cout \circ \cin$ is also $\F_q$-linear. Moreover, since $A^\proj_\psi$ is obtained by applying an $\F_q$-linear map to the rows of $A^\proj_\fl$, the columns of $A^\proj_\psi$ are $\F_q$-linear combinations of columns of $A^\proj_\fl$, and therefore are also codewords of $\cout \circ \cin$.
    In particular, this holds for the non-zero column $A^\proj_\psi[][e]$, and therefore is of the following form
    \[
        (\phi(c[1]),\ldots,\phi(c[N]))^\top
    \]
    where $c$ is a codeword in $\cout$.
    This in turn implies that at least $\dout N$ submatrices of the form $A^\proj_\psi(\ell)$ are non-zero, because the  column indexed by $e$ is non-zero for all such submatrices. Denote this set of indices by $S \subseteq [N]$.

    By \cref{clm:many-parts-sat-cvrob}, and the fact that $|S| \ge \dout N$, there is at least one index $\ell^* \in (S \cap L^*)$.
    Since the columns of $A^\proj_\psi$ are codewords of $\cout \circ \cin$, it follows that every column of the submatrix $A^\proj_\psi(\ell^*)$ is a codeword of $\cin$.
    Therefore, $A^\proj_\psi(\ell^*)$ has the following properties:
    \begin{enumerate}
        \item $A^\proj_\psi(\ell^*) \subseteq \cin$,
        \item there exists a $\cV^\imp_\D \in \Rob_\D(\cV^\imp)$ such that $A^\proj_\psi(\ell^*)$ satisfies $\cV_\Dp^\imp$ (by \cref{clm:many-parts-sat-cvrob} and the fact that $\ell^* \in L^*$), and
        \item $A^\proj_\psi(\ell^*)$ is non-zero (as $\ell^* \in S$).
    \end{enumerate}

    This contradicts the fact that $\cin$ is a code satisfying the guarantee mentioned in \cref{fct:good-in-code}.

\end{proof}

\begin{proof}[Proof of \cref{clm:many-parts-sat-cvrob}]
    Since $A$ satisfies $\cV=(f_1, \mM_1),\ldots,(f_{T},\mM_{T}))$, there is a \emph{witness} $M_{N\dd}(\cV) \in \cV_{N\dd}$ such that
    \begin{equation}\label{eq:a-row-sat}
        \forall r \in [N\dd], A[r][] \in \ker M_{r}.
    \end{equation}
    For every $t \in [T]$, define 
    \[
        S_t := \inset{r \in [N\dd] \mid M_{r} = \mM_t}.
    \]
    This is the set of right vertices whose corresponding matrix in $M_{N\dd}(\cV)$ is equal to $\mM_t$. By definition of $\cV$, $\inabs{S_t}=f_t N\dd$ for every $t \in [T]$.
    Denote
    \[
        T_\beta := \inset{t \in [T] \mid f_t > \beta}.
    \]    
    We now set $\beta := \D/(2T_\cP)$, and restrict our attention to elements in $T_\beta$.
    This is possible because the $f_t$ corresponding to $t \not\in T_\beta$ contribute only a small amount of mass. Precisely,
    \begin{equation} \label{eq:not-dense-contrib}
        \sum_{t \in [T] \setminus T_\beta} f_t \le \inabs{[T] \setminus T_\beta}\cdot \frac{\D}{2T_\cP} \le \frac{\D}{2},
    \end{equation}
    where the first inequality holds because $\inabs{[T] \setminus T_\beta} \le T$, and $T \le T_\cP$ by definition.
    
    Fix some $t \in T_\beta$.
    By the guarantee on graph $G$ from \cref{clm:kmrs}, we see that
    \[
        \inabset{\ell \in V_L \mid \inabs{\Gamma(\ell) \cap S_t} < (f_t - \eta)d} \le \zeta N\dd = \frac{\dout}{2T_\cP}\cdot N < \frac{\dout}{T}\cdot N.
    \]
    By applying a union bound argument over all $t \in T_\beta$,
    \begin{equation}\label{eq:good-parts-prop}
        \inabset{\ell \in V_L \mid \forall t \in T_\beta, \inabs{\Gamma(\ell) \cap S_t} \ge \inbrak{f_t-\eta}d} > (1-\dout)N.
    \end{equation}
    Denote this set by $L^* \subseteq [N]$, and note that $\inabs{L^*} > (1-\dout)N$.
    
    We now argue that for every $\ell \in L^*$, the matrix $A_\psi^\proj(\ell)$ satisfies some $\cV_\Dp^\imp \in \Rob_\D(\cV^\imp)$.
    Informally, this can be seen as follows: fix some part $\ell \in L^*$.
    Then, the proportion of incoming edges that are adjacent to some right vertex set $S_t$ is roughly equal to the fractional size of $S_t$, which is equal to $f_t$.
    This implies that the proportion of rows in $A^\proj(\ell)$ that lie in $\ker \mM_t$ is also roughly $f_t$.
    Since $(A^\proj(\ell))[j][] \in \ker \mM_t \implies (A_\psi^\proj(\ell))[j][] \in \ker \mM_t^\imp$,
    we see that for all $t$, roughly $f_t$ fraction of rows of $A_\psi^\proj(\ell)$ lie in $\ker \mM_t^\imp$.
    Because we need to account for the fact that these fractions may not be exact, we see that the matrix satisfies a robust local profile description of $\cV^\imp$.
    
    We now formalize this argument.
    Firstly, note that \cref{eq:a-row-sat} implies the following
    \begin{equation}\label{eq:a-fl-row-sat}
        \forall r \in [N\dd], \forall j \in [d], A_\fl[(r, j)][] \in \ker M_r.
    \end{equation}
    This can be seen by applying the following simple fact to each row of $A$.
    \begin{fact}
        If for vectors $u \in \F_Q^L$ and $v \in \F_q^L$, $\langle u, v \rangle=0$ holds, then $u_\fl \cdot v^\top = \0^d$ holds as well. 
    \end{fact}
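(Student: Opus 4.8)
The plan is to reduce the claim to the single observation that the identification $\F_Q \cong \F_q^d$ implicit in the flattening operation is $\F_q$-linear, combined with the fact that the entries of $v$ lie in the base field $\F_q$ (not merely in $\F_Q$). Fix an $\F_q$-basis $e_1,\ldots,e_d$ of $\F_Q$ over $\F_q$; this is exactly the basis with respect to which a symbol of $\F_Q$ is viewed as an element of $\F_q^d$, so that for $a \in \F_Q$ we have $a = \sum_{j\in[d]} a_j e_j$ with $(a_1,\ldots,a_d) = a_\fl \in \F_q^d$. The resulting map $\theta\colon \F_Q \to \F_q^d$, $\theta(a) = a_\fl$, is an $\F_q$-linear bijection.

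First I would rewrite the hypothesis. Since $v[i] \in \F_q$ for every $i \in [L]$, the inner product $\langle u, v\rangle = \sum_{i\in[L]} v[i]\,u[i]$ is an $\F_q$-linear combination of $u[1],\ldots,u[L] \in \F_Q$. Applying the $\F_q$-linear map $\theta$ and using that $\theta(0) = \0^d$ gives $\0^d = \theta(\langle u,v\rangle) = \sum_{i\in[L]} v[i]\cdot \theta(u[i]) = \sum_{i\in[L]} v[i]\cdot (u[i])_\fl$.

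Next I would match this with the matrix product in the statement. By \cref{def:flatten}, $u_\fl$ (viewed as a $d\times L$ matrix, so that $u_\fl \cdot v^\top$ is a length-$d$ column vector) has its $i$-th column equal to $(u[i])_\fl$; hence $u_\fl\cdot v^\top = \sum_{i\in[L]} v[i]\cdot(u[i])_\fl$, which by the previous display equals $\0^d$. That completes the argument.

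The reasoning is purely mechanical once this viewpoint is adopted; the only subtle point, and the one I would flag explicitly, is that the scalars $v[i]$ must belong to $\F_q$ for the multiplications appearing in $\langle u,v\rangle$ to be transported through $\theta$ coordinate-wise — if $v$ were allowed entries in $\F_Q$ the statement would be false. The same argument in fact yields the equivalence $\langle u,v\rangle = 0 \iff u_\fl\cdot v^\top = \0^d$, although only the stated implication is needed in the sequel.
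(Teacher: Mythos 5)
Your proof is correct. The paper states this fact without proof (it is introduced inline as a "simple fact" inside the proof of \cref{clm:many-parts-sat-cvrob} and no argument is supplied), so there is nothing to compare against; your argument fills in the intended justification. The two ingredients you isolate are exactly the right ones: the identification $\theta\colon \F_Q \to \F_q^d$, $a \mapsto a_\fl$, coming from the choice of $\F_q$-basis of $\F_Q$ is $\F_q$-linear, and the scalars $v[i]$ lie in the base field $\F_q$, so that $\theta(\langle u,v\rangle) = \sum_{i} v[i]\,\theta(u[i])$. Your remark that the statement would fail if $v$ were allowed entries in $\F_Q$ is also on point, since $\theta$ is only $\F_q$-linear, not $\F_Q$-linear. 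One small presentational note: the paper's \cref{def:flatten} indexes $u_\fl$ by pairs $(i,j)\in[L]\times[d]$ rather than explicitly fixing a $d\times L$ matrix shape, and your reading of $u_\fl$ as the $d\times L$ matrix whose $i$-th column is $(u[i])_\fl$ is the one that makes the product $u_\fl\cdot v^\top\in\F_q^d$ type-check and match the use of the fact in the surrounding proof (where for each $j\in[d]$ the row $A_\fl[(r,j)][]$ is checked against $\ker M_r$); flagging this explicitly, as you did, is the right call.
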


    Observe that \cref{eq:a-fl-row-sat}, along with the definition of $\varphi_G$ now implies
    \begin{equation}\label{eq:the-big-one}
        \forall \ell, r \in [N\dd], \forall i, j \in [d], (\varphi_G(\ell, i)=(r,j) \land A_\fl[(r, j)][] \in \ker M_r) \iff A^\proj[(\ell, i)][] \in \ker M_r.
    \end{equation}
    Fix an $\ell \in L^*$.
    From \cref{eq:good-parts-prop,eq:the-big-one}, we see that
    \[
        \forall t \in T_\beta, \inabset{i \in [d] \mid (A^\proj(\ell))[i][] \in \ker \mM_t} \ge \inbrak{f_t - \eta}d.
    \]
    From the definition of $A_\psi^\proj$,
    \begin{equation}\label{eq:roughly-ft-rows}
        \forall t \in T_\beta, \inabset{i \in [d] \mid (A_\psi^\proj(\ell))[i][] \in \ker \mM^\imp_t} \ge \inbrak{f_t - \eta}d.
    \end{equation}
    Furthermore, we will denote the set
    \[
        Z_t := \inset{i \in [d] \mid A_\psi^\proj(\ell)[i][] \in \ker \mM^\imp_t}.
    \]
    Observe that $Z_t \subseteq [d]$, and moreover,
    \begin{equation}\label{eq:size-of-zt}
        \inabs{Z_t} \ge \left\lfloor \inbrak{f_t - \eta}d \right\rfloor.
    \end{equation}
    Because $f_t>\D/(2T_\cP)$ and $d \ge 4T_\cP/\D$, we see that \cref{eq:roughly-ft-rows} implies $\inabs{Z_t} \ge 1$ for all $t \in T_\beta$.

    Now consider
    \[
        \cV^\imp_\Dp := \Bigl((f_t-(\D/(4T_\cP)), \mM_t^\imp)_{t \in T_\beta}, (0, \mM_t^\imp)_{t \not\in T_\beta}, (\inabs{T_\beta}\cdot \D)/4T_\cP + \textstyle{\sum_{t \not\in T_\beta}} f_t, \0)\Bigr).
    \]
    As we have
    \[
        \frac{\inabs{T_\beta}\cdot \D}{4T_\cP} + \sum_{t \not\in T_\beta} f_t \le \frac{\D}{4}+\frac{\D}{2} \le \D,
    \]
    where the first inequality follows from \cref{eq:not-dense-contrib}, we see that $\cV^\imp_\Dp$ belongs to $\Rob_\D(\cV^\imp)$.
    We now prove that $A_\psi^\proj(\ell)$ satisfies $\cV^\imp_\Dp$.
    We do so by creating a local profile $M_{d}(\cV^\imp_\Dp) \in (\cV^\imp_\Dp)_{d}$ and then showing that $A_\psi^\proj(\ell)$ satisfies $\cV^\imp_\Dp$ with $M_{d}(\cV^\imp_\Dp)$ as a witness.
    The local profile $M_{d}(\cV^\imp_\Dp)$ is created as follows:
    \begin{enumerate}[(i)]
        \item For every $t \in T_\beta$, define $\mM^\imp_t$ to be the corresponding matrix for some 
        \[
            \left\lfloor \inbrak{f_t - \frac{\Dp}{4T_\cP}}d \right\rfloor
        \]
        rows indexed by $i \in Z_t$.
        These rows are guaranteed to exist by \cref{eq:size-of-zt}, and the fact that $\eta = \Dp/4T_\cP$.
        \item For the rest of the at most 
        \begin{align*}
            d - \sum_{t \in T_\beta} \left\lfloor \inbrak{f_t - \frac{\Dp}{4T_\cP}}d \right\rfloor
            &\le d - \sum_{t \in T_\beta}  \inbrak{\inbrak{f_t - \frac{\Dp}{4T_\cP}}d -1} \\
            &= d - \sum_{t \in T_\beta} f_t\cdot d + \frac{\inabs{T_\beta} \cdot \D \cdot d}{4T_\cP} + \inabs{T_\beta} \\
            &\le \Bigl( 1-\sum_{t \in T_\beta} f_t \Bigr) \cdot d + \frac{\D \cdot d}{4} + T_\cP \\
            &\le \sum_{t \not\in T_\beta} f_t \cdot d + \frac{\D \cdot d}{2} \\
            & \le \D \cdot d
        \end{align*}
        rows, let the corresponding matrix  be $\0$.
        Note that the last inequality follows from \cref{eq:not-dense-contrib}.
    \end{enumerate}
    It is now seen that $A_\psi^\proj(V_L^p)$ satisfies $\cV^\imp_\Dp$, with $M(\cV^\imp_\Dp)_{d\dd}$ as a witness.
    Indeed, this is seen to follow from \cref{eq:roughly-ft-rows} for the rows corresponding to item (i).
    For the other rows, note that every row in $A_\psi^\proj(\ell)$ belongs to $\psi(\F_q^L) \subseteq \F_q^{(L-\dim W_\cV)}=\ker \0^{(L-\dim W_\cV)}$.

    Thus, we see that for each $\ell \in L^*$, $A_\psi^\proj(\ell)$ satisfies $\cV^\imp_\Dp$.
\end{proof}

\section{Consequences}\label{sec:conseq}
We state an important corollary of \cref{thm:the-big-one}.
\begin{corollary}\label{cor:the-big-one}
    For an $L$-LCL property $\cP$, let $\rin = R_\cP-\eps$ for some $\eps>0$.
    Let $\cin, \cout$, and $G$ be as defined in \cref{thm:the-big-one}.
    Furthermore, let $\cout \subseteq (\F_q^{\rin d\dd})^N$ be an $\F_q$-linear code with rate $\rout=1-\eps$ and distance $\dout\ge \eps^3$.
    Then, $\cael(\cin, \cout, G) \subseteq \F_Q^{N\dd}$ is an $\F_q$-linear LDPC code that \textbf{does not} satisfy $\cP$, has rate $\rael > R_\cP-2\eps$, and 
    \begin{equation}\label{eq:ub-on-d}
        d=O\left(\frac{ L^2 \cdot T_\cP^3}{\eps^5}\right).
    \end{equation}
    Additionally, $\cael(\cin, \cout, G)$ can be constructed in time $\poly(N\dd)$.
\end{corollary}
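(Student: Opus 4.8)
The plan is to verify that the stated choice of $\cout$ is a legal instantiation of \cref{thm:the-big-one} and then to read off the rate, the bound on $d$, and the explicitness; essentially everything reduces to a substitution into that theorem. First I would check applicability: $\rin = R_\cP - \eps$ satisfies the hypothesis $\rin \le R_\cP - \eps$; since $T_\cP$ and $\dout \ge \eps^3$ are constants independent of $N$, the requirement $N > T_\cP/\dout$ of \cref{thm:the-big-one} holds for all sufficiently large $N$, so \cref{clm:kmrs} supplies the graph $G$ for infinitely many $N$; and an $\F_q$-linear code $\cout$ of rate $\rout = 1-\eps$ with distance $\dout \ge \eps^3$ is one admissible outer code there. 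Hence \cref{thm:the-big-one} already gives that $\cael(\cin,\cout,G) \subseteq \F_Q^{N\dd}$ is $\F_q$-linear and does not satisfy $\cP$.

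For the rate I would repeat the computation in the proof of \cref{cor:list-dec-ael}: the map $h \mapsto h^\proj_\fl$ is a bijection of $\cael$ onto $\cout \circ \cin$, and $\phi$ makes the concatenation encoding a bijection $\cout \to \cout \circ \cin$, so $\inabs{\cael} = \inabs{\cout} = q^{\rin \rout d N}$. Since $Q = q^d$ and the block length of $\cael$ is $N$,
\[
    \rael = \frac{\log_Q \inabs{\cael}}{N} = \rin \cdot \rout = (R_\cP - \eps)(1-\eps) = R_\cP - \eps\,(1 + R_\cP - \eps) > R_\cP - 2\eps,
\]
the last step using $R_\cP \le 1$. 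For the bound on $d$ I would simply substitute the parameters fixed in \cref{thm:the-big-one}, namely $d = O(1/(\zeta\eta^2))$ with $\eta = \Dp/(4T_\cP)$, $\zeta = \dout/(2T_\cP)$, $\Dp = \eps/(2L)$, together with $\dout \ge \eps^3$; this yields $d = O\bigl(T_\cP^3/(\Dp^2\dout)\bigr) = O\bigl(L^2 T_\cP^3/\eps^5\bigr)$, which is \cref{eq:ub-on-d}. (If needed one may enlarge $d$ to the maximum of this quantity and $4L^2/\eps$ so that the block-length hypothesis of \cref{fct:good-in-code} is met, without changing the order of $d$.)

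For explicitness I would invoke \cref{obs:ael-explicit}, which reduces the claim to constructing each of $\cin$, $\cout$, $G$ in time $\poly(N\dd)$. The graph $G$ is constructible in time $\poly(N\dd)$ by \cref{clm:kmrs}; the inner code $\cin$ has block length $d = O(L^2 T_\cP^3/\eps^5)$, a constant independent of $N$, so by \cref{lem:good-in-code-brute-force} it can be found by brute force in time $q^{\poly(d,L)} = O(1)$. For $\cout$ I would take an explicit $\F_q$-linear code over $\F_q$ of block length $N\rin d$, rate $1-\eps$, and relative distance at least $\eps^3$, obtained from Tanner codes (Corollary 11.4.8 in \cite{GRS23}) in time $\poly(N\rin d) = \poly(N\dd)$, and then group its $\F_q$-coordinates into $N$ consecutive blocks of length $\rin d$, producing a code over $\sigout = \F_q^{\rin d}$ of block length $N$; a short counting argument shows the grouped code still has rate $1-\eps$ and relative distance at least $\eps^3$, hence is a legal $\cout$. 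I expect this last alphabet-bookkeeping step — matching the symbol alphabet of $\cout$ to the value $\F_q^{\rin d}$ forced by the concatenation identity \cref{eq:concat-eq} while keeping $\cout$ explicit and $\F_q$-linear — to be the only point requiring care; everything else is immediate from \cref{thm:the-big-one} and \cref{obs:ael-explicit}.
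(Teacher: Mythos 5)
Your proposal is correct and follows essentially the same route as the paper's proof: invoke \cref{thm:the-big-one} for linearity and non-satisfaction of $\cP$, compute $\rael = \rin\rout = (R_\cP-\eps)(1-\eps) > R_\cP - 2\eps$, substitute the parameter choices $\eta,\zeta,\Dp$ to get the bound on $d$, and appeal to \cref{obs:ael-explicit} together with Tanner codes for $\cout$, \cref{clm:kmrs} for $G$, and \cref{lem:good-in-code-brute-force} for $\cin$. Your added remarks about enlarging $d$ to meet the block-length hypothesis of \cref{fct:good-in-code} and about grouping coordinates to match the outer alphabet to $\F_q^{\rin d}$ are reasonable bookkeeping that the paper leaves implicit, but they do not change the argument.
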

\begin{proof}
    The $\F_q$-linearity of $\cael$, and the fact that it does not satisfy $\cP$ are established in \cref{thm:the-big-one}.
    The rate is given by
    \[
        \rael=\frac{\log_Q(\inabs{\cael})}{N\dd}.
    \]
    Indeed, it is easy to see that $\inabs{\cael}=\inabs{\cout}=q^{\rin\rout N d \dd}$.
    Recalling that $Q=q^d$, a simple calculation gives $\rael=\rin\cdot \rout=(R_\cP-\eps)(1-\eps)> R_\cP-2\eps$.
    The value for $d$ is obtained by recalling that $d=O(1/\zeta \eta^2)$, $\eta=\Dp/(4T_\cP)$, $\zeta=\dout/2T_\cP, \D=\eps/2L$, and plugging in the value for $\dout$ in $\zeta$.

    We note that explicit constructions of $\F_q$-linear codes having rate $1-\eps$ and distance $\eps^3$ that are constructible in time $\poly (N\dd)$ can be obtained by using Tanner codes (see Corollary 11.4.8 in \cite{GRS23}).
    We note that Tanner codes are LDPC codes.
    Moreover, the graph $G$ can be constructed in time $\poly(N\dd)$, by \cref{clm:kmrs}.
    By \cref{lem:good-in-code-brute-force}, $\cin$ can be constructed in time $q^{\poly(d, L)} \le \poly(N\dd)$.
    Upon invoking \cref{obs:ael-explicit}, we see that $\cael(\cin, \cout, G)$ is an LDPC code that can be constructed in time $\poly(N\dd)$.
\end{proof}

Our constructions also allow for combining multiple LCL properties.
Given a list of $L$-LCL properties $\cP_1, \cP_2, \ldots$ where the constraints are defined over the same field $\Fqz$, we can take the union of the local profile descriptions to form $\cP$:
\[
    \cP := \bigcup_i \cP_i = \bigcup_i \inset{\cV \mid \cV \in \cP_i}.
\]
Since the set $\cP$ is comprised of $L$-local profile descriptions, $\cP$ is an $L$-LCL property as well.
Thus, we can apply \cref{cor:the-big-one} to $\cP$ and obtain an explicit construction that satisfies the complement of $\cP$.
We record this discussion as the following corollary:

\begin{corollary}[Explicit construction for combinations of LCL properties]\label{cor:combine-lcl-prop}
    For $L$-LCL properties $\cP_1, \cP_2, \ldots$, define $\cP$ as above.
    Then, the explicit construction detailed in \cref{cor:the-big-one} for $\cP$ avoids satisfying $\cP_1, \cP_2, \ldots$ simultaneously.
\end{corollary}

\subsection{List Recovery}
We present a general definition of list-recovery, from which the special cases of interest can be derived.

\begin{definition}[List Recovery with Erasures]\label{def:list-rec-eras}
    For integers $\ell, L$, where $\ell \le L$, and constants $0 \le \sigma \le \rho \le 1$, we say that a code $\cC \subseteq \Sigma^n$ is $(\rho, \sigma, \ell, L)$-\emph{average radius list-recoverable with erasures} if the following holds.
    For every collection of sets $S_1,\ldots,S_n \subseteq (\Sigma \cup \bot)$ satisfying
    \begin{enumerate}
        \item $\forall i \in [n], \inabs{S_i} \le \ell$, and
        \item $\exists J \subseteq [n]$ satisfying $|J| \le \sigma n$, such that $S_j=\inset{\bot}$ for all $j \in J$,
    \end{enumerate}
    we have for all pairwise distinct codewords $c_1,\ldots,c_{L+1} \in \cC$:
    \begin{equation} \label{eq:list-rec-cond}
        \frac{\sum_{k \in [L+1]} \sum_{i \in [n] \setminus J}  \mathbbm{1}[c_k[i] \in S_i]}{L+1} \le (1-\rho-\sigma)n.
    \end{equation}

    We say that $\cC$ is $(\rho, \sigma, \ell, L)$-\emph{list-recoverable with erasures} if
    \[
        \min_{k \in [L+1]} \inabset{i \in ([n] \setminus J) \mid c_k[i] \in S_i} \le (1-\rho-\sigma)n.
    \]

    Clearly, a code that is $(\rho, \sigma, \ell, L)$-average radius list-recoverable with erasures is also $(\rho, \sigma, \ell, L)$-list-recoverable with erasures.
\end{definition}

\begin{definition}[List Recoverability]
    For integers $\ell, L$, where $\ell \le L$, and constant $0 \le \rho \le 1$, we say that $\cC$ is $(\rho, \ell, L)$-\emph{average radius list-recoverable} if $\cC$ is $(\rho, 0, \ell, L)$-average radius list-recoverable with erasures.
\end{definition}

\begin{definition}[List Decodability]
    For an integer $L$ and constants $0 \le \sigma \le \rho \le 1$, we say that a code $\cC \subseteq \Sigma^n$ is $(\rho, \sigma, L)$-\emph{average radius erasure list-decodable} if $\cC$ is $(\rho, \sigma, 1, L)$-average radius list-recoverable with erasures.

    We say that $\cC$ is $(\rho, L)$-\emph{average radius list-decodable} if $\cC$ is $(\rho, 1, L)$-average radius list-recoverable.
\end{definition}

\begin{definition}[Zero Error List Recoverability]
    For integers $\ell, L$, where $\ell \le L$, we say that a code $\cC \subseteq \Sigma^n$ is $(\ell, L)$-\emph{zero error list-recoverable} if $\cC$ is $(0, \ell, L)$-average radius list-recoverable.
\end{definition}

\begin{definition}[List Recovery from Erasures]
    For integers $\ell, L$, where $\ell \le L$, and a constant $0 \le \sigma \le 1$, we say that a code $\cC \subseteq \Sigma^n$ is $(\sigma, \ell, L)$-\emph{erasure list-recoverable} if $\cC$ is $(0, \sigma, \ell, L)$-list recoverable with erasures.
\end{definition}

\begin{definition}[Perfect Hash Matrix]
    For integer $t \ge 2$, a code $\cC \subseteq \Sigma^n$ of rate $R$ is defined to be a $(n, \inabs{\Sigma}^{Rn}, t)$-\emph{perfect hash matrix} if it is $\left(0, t-1, t-1\right)$-list-recoverable.
\end{definition}

For integers $\ell, L$, where $\ell \le L$, and constants $0 \le \sigma \le \rho \le 1$, let $\cP(\rho, \sigma, \ell, L)$ be the property of \textbf{not} being $(\rho, \sigma, \ell, L)$-average radius list-recoverable with erasures.
\begin{claim}\label{clm:list-rec-reas}
    Property $\cP(\rho, \sigma, \ell, L)$ is an $(L+1)$-LCL property.
    Additionally, we have
    \[
        \kappa_{q} (\cP(\rho, \sigma, \ell, L)) \le \log_q (\ell+1)^{(L+1)},
    \]
     and
    \[
        T_{\cP(\rho, \sigma, \ell, L)} \le (\ell+1)^{(L+1)}.
    \]
\end{claim}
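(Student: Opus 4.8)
The plan is to realize $\cP(\rho,\sigma,\ell,L)$ as an $(L+1)$-LCL property whose matrix types are indexed by functions $g : [L+1] \to \{0,1,\dots,\ell\}$. To each such $g$ I associate the subspace $\ker(\mM_g) := \inset{v \in \F_q^{L+1} : v_k = v_{k'} \text{ whenever } g(k) = g(k') \ge 1}$ — fixing any canonical matrix $\mM_g$ with this kernel, and noting that $g \equiv 0$ gives $\mM_g = \0$ — together with the integer $c(g) := \inabs{\inset{k \in [L+1] : g(k) \ge 1}}$. The intuition is that $g$ records, at a given coordinate, which of the $\ell$ admissible input-list symbols each of the $L+1$ candidate codewords equals (with $0$ meaning "none of them"); $\ker(\mM_g)$ encodes exactly the equalities among codeword symbols that consistency with such a list forces, and $c(g)$ counts the codewords consistent with the list there. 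I then declare a local profile description $\cV = ((f_1,g_1),\dots,(f_T,g_T))$, with the $g_t$ pairwise distinct, $f_t \in [0,1]$, and $\sum_t f_t = 1$, to lie in $\cP(\rho,\sigma,\ell,L)$ precisely when $\sum_t f_t\, c(g_t) > (L+1)(1-\rho-\sigma)$. Since every $g_t$ is one of the at most $(\ell+1)^{L+1}$ functions $[L+1]\to\{0,\dots,\ell\}$, any such $\cV$ involves at most $(\ell+1)^{L+1}$ distinct matrices, which already yields $T_{\cP(\rho,\sigma,\ell,L)} \le (\ell+1)^{L+1}$.

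Next I would check that this LCL property is exactly the complement of $(\rho,\sigma,\ell,L)$-average-radius list-recoverability with erasures; this is the heart of the argument. For one direction, given a code $\cC$ together with a bad configuration — input lists $S_i$ (enumerated as $S_i = \{a_{i,1},\dots,a_{i,\ell}\}$, padding with repeats where $\inabs{S_i}<\ell$), an erasure set $J$, and pairwise distinct $c_1,\dots,c_{L+1}\in\cC$ violating \cref{eq:list-rec-cond} — I set $g_i(k) := j$ when $c_k[i] = a_{i,j}$ and $g_i(k) := 0$ otherwise (in particular for $i\in J$), let $A$ have columns $c_1,\dots,c_{L+1}$, and verify that $A\subseteq\cC$, that $A$ has pairwise distinct columns, and that $A$ satisfies the local profile $(\mM_{g_1},\dots,\mM_{g_n})$ (since $g_i(k)=g_i(k')\ge 1$ forces $c_k[i]=a_{i,g_i(k)}=c_{k'}[i]$). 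The frequency vector $(f_g)_g$ of the $g_i$'s then satisfies $\sum_g f_g\, c(g) = \frac1n\sum_i c(g_i) = \frac1n\sum_{i\notin J}\sum_k \mathbbm{1}[c_k[i]\in S_i] > (L+1)(1-\rho-\sigma)$, so $\cC$ contains the corresponding $\cV\in\cP$. Conversely, if $\cC$ contains some $\cV\in\cP$ via a witness $(\mM_{h_1},\dots,\mM_{h_n})$ and a matrix $A\subseteq\cC$ with distinct columns $c_1,\dots,c_{L+1}$, I reconstruct lists by setting $a_{i,j} := A[i][k]$ for an arbitrary $k$ with $h_i(k)=j$ (well-defined because $A[i][]\in\ker\mM_{h_i}$), $S_i := \{a_{i,1},\dots,a_{i,\ell}\}$, and $J := \emptyset$; then $\sum_k \mathbbm{1}[c_k[i]\in S_i] \ge c(h_i)$ for every $i$, so summing over $i$ and dividing by $L+1$ gives $\frac1{L+1}\sum_k\sum_i \mathbbm{1}[c_k[i]\in S_i] > (1-\rho-\sigma)n$, witnessing the failure of list recovery. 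Taking $J=\emptyset$ here sidesteps the erasure-budget bookkeeping; the parameter $\sigma$ enters only by shifting the threshold.

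Finally, the counting bound $\kappa_q(\cP) = \log_q (\ell+1)^{L+1}$ falls out of the same indexing. Every pair $(\cV, M_n(\cV))$ with $M_n(\cV)\in\cV_n$ is recorded injectively by the function from $[n]$ to the set of functions $[L+1]\to\{0,\dots,\ell\}$ that assigns to each coordinate its type; there are at most $\bigl((\ell+1)^{L+1}\bigr)^n = (\ell+1)^{(L+1)n}$ such functions, so $\sum_{\cV\in\cP}\inabs{\cV_n} \le (\ell+1)^{(L+1)n} = q^{\kappa_q(\cP)\cdot n}$ with $\kappa_q(\cP) = (L+1)\log_q(\ell+1)$, which tends to $0$ as $q\to\infty$. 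Combined with the earlier $T_\cP$ bound, this shows $\cP(\rho,\sigma,\ell,L)$ is a reasonable $(L+1)$-LCL property with the stated parameters.

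The step I expect to be the main obstacle is the equivalence in the second paragraph: one has to be careful that $\ker(\mM_g)$ captures all and only the codeword-equalities actually forced by consistency with a size-$\le\ell$ list, that the translation between abstract local profiles and concrete input lists preserves the agreement count with the correct (strict) inequality, and that the $g(k)=0$ codewords are left genuinely unconstrained so that a concrete configuration can always be recovered from a satisfying matrix. Once the dictionary between functions $g$ and constraint matrices is set up correctly, the bounds on $T_\cP$ and $\kappa_q$ are routine bookkeeping on the number of types.
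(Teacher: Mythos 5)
Your proposal is essentially the same argument as the paper's, re-indexed: you parametrize the constraint matrices by functions $g\colon[L+1]\to\{0,\dots,\ell\}$ (with $g(k)=j\ge 1$ recording which list element $c_k$ equals, and $g(k)=0$ meaning "none"), whereas the paper parametrizes them by pairs $(K,P)$ with $K\subseteq[L+1]$ non-empty and $P$ a partition of $K$ into at most $\ell$ parts. These index the same kernels and give the same bound $(\ell+1)^{L+1}$ on the number of matrix types, hence the same $\kappa_q$ and $T_\cP$. The LCL constraint you impose, $\sum_t f_t\, c(g_t) > (L+1)(1-\rho-\sigma)$, is identical to the paper's $\sum_{K,P} |K|\, f_{K,P} > (1-\rho-\sigma)(L+1)$ once you note $c(g)=|K|$.

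The one place you genuinely deviate is in how the erasure budget is treated. The paper additionally imposes $f_\0 \ge \sigma$ in its list of constraints defining $\cP$, and in both directions of the equivalence reasons about an explicit erasure set $J$ with $|J|=f_\0 n$. You observe that coordinates with $S_i=\{\bot\}$ contribute zero to the agreement sum regardless of whether they are declared to be in $J$, so one may simply take $J=\emptyset$ when producing a violating configuration; as a result $\sigma$ enters your characterization only through the threshold $(1-\rho-\sigma)(L+1)$, not through a constraint on $f_\0$. This is cleaner and, as far as I can tell, actually necessary: when $|J|<\sigma n$ and few coordinates have $K=\emptyset$, the paper's asserted bound $f_\0\ge\sigma$ in the reverse direction would not hold, so the paper's extra constraint cannot in general be satisfied by the $\cV$ it constructs. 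Dropping it, as you do, restores the equivalence cleanly without affecting the $T_\cP$ or $\kappa_q$ bounds. Your counting argument at the end is also fine, once one checks (as you implicitly do) that the map from functions $[n]\to\{g\}$ to pairs $(\cV, M_n(\cV))$ is surjective onto all pairs arising from $\cP$, which it is since each such function determines both the frequency vector and the witness tuple.
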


\begin{proof}
    For convenience, we use the shorthand $\cP$ to denote $\cP(\rho, \sigma, \ell, L)$ throughout this proof.
    In order to prove this claim, we need to construct a set $\cP$ of $(L+1)$-local profile descriptions such that a code $\cC$ is not $(\rho, \sigma, \ell, L)$-average radius list-recoverable with erasures if and only if it contains some local profile description from $\cP$.
    Note that the local profile descriptions we create will depend on the characteristic of the field over which the codes exist, and we will denote that characteristic by $p$.

    For every subset $K \subseteq [L+1]$, denote the set of partitions of $K$ consisting of at most $\ell$ parts by $P_K$.
    For every subset $K \in (2^{[L+1]} \setminus \emptyset)$ and partition $P \in P_K$, create the matrix $\mM(K, P)$, whose rows belong to $\F_p^{L+1}$, and are linear constraints that, for any prime power $q$ of $p$, are satisfied by exactly the vectors in the following set:
    \begin{equation}\label{eq:vecs-sat-mkp}
        \inset{v \in \F_q^{L+1} \mid \forall i, j \in K, i, j\text{ belong to the same part of }P \implies v[i]=v[j]}.
    \end{equation}
    Next, create the matrix $\0$, whose kernel is all of $\F_q^{L+1}$, and associate with it a fraction $f_\0$.

    We denote the fraction associated with the matrix $\mM(K, P)$ by $f_{K, P}$.
    Create a local profile description for every set of associated fractions that satisfy
    \begin{equation}\label{eq:list-rec-sat-1}
        \forall K \in \bigl(2^{[L+1]} \setminus \emptyset\bigr), \forall P \in P_K, 0 \le f_{K, P} \le 1,
    \end{equation}
    \begin{equation}\label{eq:list-rec-sat-2}
        f_\0 \ge \sigma,
    \end{equation}
    \begin{equation}\label{eq:list-rec-sat-3}
        \sum_{K \in (2^{[L+1]} \setminus \emptyset)} \sum_{P \in P_K} |K|\cdot f_{K, P} > (1-\rho-\sigma)(L+1),
    \end{equation}
    and
    \begin{equation}\label{eq:list-rec-sat-4}
        f_\0 + \sum_{K \in (2^{[L+1]} \setminus \emptyset)} \sum_{P \in P_K} f_{K, P} = 1.
    \end{equation}
    Lastly, we collect these local profile descriptions in a set $\cP$.

    We first prove that if a code $\cC \subseteq \F_q^n$ contains some
    \[
        \cV=\bigl((f_{K, P_K},\mM(K, P_K))_{K \in (2^{[L+1]} \setminus \emptyset), P \in P_K}, (f_\0, \0)\bigr) \in \cP,
    \]
    then it is \textbf{not} $(\rho, \sigma, \ell, L)$-average radius list-recoverable with erasures.
    The code $\cC$ containing a $\cV \in \cP$ implies the existence of a matrix $A \in \F_q^{n \times (L+1)}$ such that
    \begin{enumerate}
        \item $A \subseteq \cC$,
        \item $A$ satisfies $\cV$, and
        \item $A$ has pairwise distinct columns.
    \end{enumerate}
    By (1) and (3), the columns of $A$ are pairwise distinct codewords in $\cC$.
    By (2), we know that for every non-empty $K$ and partition $P \in P_K$, there are a $f_{K, P_K}$ fraction of rows in $A$ whose entries agree according to the constraints set forth by $\mM(K, P_K)$.
    The constraints say that the number of distinct elements appearing in the entries specified by $K$ is at most $\ell$.
    Thus for each coordinate $i \in [n]$ corresponding to the matrix $\mM(K, P_K)$, we can create a subset $S_i \subseteq \F_q$ such that $\inabs{S_i} \le \ell$, and for all $k \in K$, we have $A[i][k] \in S_i$.
    According to \cref{eq:list-rec-sat-4}, such a subset $S_i$ can be created for $(1-f_\0)n$ coordinates.
    For the remaining $f_\0 n$ coordinates, we create the set $\inset{\bot}$.
    Denote this set of coordinates by $J$.
    According to \cref{eq:list-rec-sat-3},
    \[
        \sum_{i \in [n] \setminus J} \sum_{k \in [L+1]} \mathbbm{1} [A[i][k] \in S_i] \ge \sum_{K \in (2^{[L+1]} \setminus \emptyset)} \sum_{P \in P_K} |K|\cdot f_{K, P} \cdot n > (1-\rho-\sigma)(L+1)n.
    \]
    Therefore, we see that for a set of $L+1$ pairwise distinct codewords in $\cC$, \cref{eq:list-rec-cond} is not satisfied, hence $\cC$ is \textbf{not} $(\rho, \sigma, \ell, L)$-average radius list-recoverable with erasures.

    We now prove the other direction.
    Suppose that the code $\cC \subseteq \F_q^n$ is \textbf{not} $(\rho, \sigma, \ell, L)$-average radius list-recoverable with erasures.
    Then, there exists a collection of sets $S_1, \ldots, S_n \subseteq (\F_q \cup \bot)$ satisfying
    \begin{enumerate}
        \item $\forall i \in [n], \inabs{S_i} \le \ell$, and
        \item $\exists J \subseteq [n]$, satisfying $\inabs{J} \le \sigma n$, such that $S_j = \inset{\bot}$ for all $j \in J$.
    \end{enumerate}
    Furthermore, there exists a matrix $A \in \F_q^{n \times (L+1)}$ satisfying
    \begin{enumerate}
        \item $A \subseteq \cC$,
        \item $A$ has pairwise distinct columns, and
        \item $\sum_{i \in [n] \setminus J} \sum_{k \in [L+1]} \mathbbm{1} [A[i][k] \in S_i] > (1-\rho-\sigma)(L+1)n$.
    \end{enumerate}
    For every coordinate $i \in [n] \setminus J$, we can assign $i$ a type $(K, P)$, where $K$ is a subset of $[L+1]$ and $P$ is some partition in $P_K$.
    The type assigned to $i$ will be $(K=\inset{k \in [L+1] \mid A[i][k] \in S_i}, P)$, where $P$ is a partition of $K$ such that for every $k_1, k_2$ belonging to the same part, $A[i][k_1]=A[i][k_2]$ holds.
    We now create a matrix $\mM(K, P)$ for each type $(K, P)$, whose entries belong to $\F_p$ and whose rows consist of linear constraints that are satisfied by vectors in $\F_q^{L+1}$ that belong to the set described in \cref{eq:vecs-sat-mkp}.
    Clearly, row $A[i][]$ satisfies the constraints of $\mM(K, P)$ if its type is $(K, P)$.
    The fraction of rows having type $(K, P)$ is denoted by $f_{K, P}$.
    Therefore, from (3), we see that 
    \begin{equation} \label{eq:rows-of-a-agree}
        \sum_{K \in (2^{[L+1]} \setminus \emptyset)} \sum_{i \in [n] \setminus J} K \cdot \mathbbm{1}[\inset{k \in [L+1] \mid A[i][k] \in S_i}=K] > (1-\rho-\sigma)(L+1)n.
    \end{equation}
    We also see that if the equality $\inset{k \in [L+1] \mid A[i][k] \in S_i}=K$ is satisfied for some coordinate $i$ and $K$, then there is a matrix $\mM(K, P)$ for some $P \in P_K$ such that row $A[i][]$ satisfies the constraints set by $\mM(K, P)$.
    Thus, we see that \cref{eq:rows-of-a-agree}, upon being divided by the block length $n$, is exactly the same as \cref{eq:list-rec-sat-3}.

    For those coordinates that are present in $J$, and also those coordinates for which $K=\emptyset$, we assign the matrix $\0$, and note that $f_\0$ is at least $\sigma$.
    We then see that \cref{eq:list-rec-sat-1}, \cref{eq:list-rec-sat-2}, and \cref{eq:list-rec-sat-4} are also satisfied, and thus $A$ satisfies a local profile description that belongs to $\cP$. 

    Lastly, we prove the upper bounds on $\kappa_q(\cP)$ and $T_\cP$.
    This is seen by observing that the number of matrices of the form $\mM(K, P)$, as constructed above, is at most $(\ell+1)^{(L+1)}$, which is an upper bound on the number of partitions consisting of at most $\ell$ parts, of every subset of $[L+1]$.
    Thus, $T_\cP \le (\ell+1)^{(L+1)}$, and the number of local profiles associated with this property is at most
    \[
        (\ell+1)^{(L+1)n} = q^{\log_q (\ell+1)^{(L+1)}\cdot n}
    \]
    and we see that $\kappa_q(\cP) \le \log_q (\ell+1)^{(L+1)}$.    
\end{proof}

We then obtain the following corollary.
\begin{corollary}\label{cor:list-rec-params}
    For $\cP(\rho, \sigma, \ell, L)$, let $\cin, \cout, G$ be as defined in \cref{thm:the-big-one}.
    Furthermore, let $\cout \subseteq (\F_q^{\rin d\dd})^N$ be a code with rate $\rout=1-\eps$ and distance $\dout\ge \eps^3$.
    Then, $\cael(\cin, \cout, G) \subseteq \F_Q^{N\dd}$ is a $\F_q$-linear code that is $(\rho, \sigma, \ell, L)$-average radius list-recoverable with erasures, has rate $\rael > R_\cP-2\eps$, and can be constructed in time $\poly(N)$.
    Additionally, we have $q \le (\ell+1)^{(L+1)\cdot \frac{8}{\eps}}$, and $Q=q^d$, where
    \[
        d =  O\inbrak{\frac{L^2 \cdot (\ell+1)^{3(L+1)}}{\eps^5}}.
    \]
\end{corollary}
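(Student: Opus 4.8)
The plan is to instantiate \cref{cor:the-big-one} with the specific LCL property $\cP = \cP(\rho,\sigma,\ell,L)$ and then translate its conclusion back into the language of list recovery. First I would invoke \cref{clm:list-rec-reas}, which says that $\cP(\rho,\sigma,\ell,L)$ is a reasonable $(L+1)$-LCL property with $\kappa_q(\cP) = \log_q(\ell+1)^{(L+1)}$ and $T_\cP \le (\ell+1)^{(L+1)}$. Since the components $\cin$, $\cout$, $G$ are taken exactly as in \cref{thm:the-big-one} (with $\rin = R_\cP - \eps$, $\rout = 1-\eps$, $\dout \ge \eps^3$), \cref{cor:the-big-one} applies verbatim and yields that $\cael(\cin,\cout,G) \subseteq \F_Q^{N\dd}$ is an $\F_q$-linear code that does not satisfy $\cP$, has rate $\rael > R_\cP - 2\eps$, and is constructible in time $\poly(N\dd)$. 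Now, by the very definition of $\cP(\rho,\sigma,\ell,L)$ as the property of \emph{not} being $(\rho,\sigma,\ell,L)$-average radius list-recoverable with erasures, the statement ``$\cael$ does not satisfy $\cP$'' is precisely the statement that $\cael$ \emph{is} $(\rho,\sigma,\ell,L)$-average radius list-recoverable with erasures. Since $d$ is a constant depending only on $L,\ell,\eps$ (and not on $N$), $\poly(N\dd) = \poly(N)$, giving the claimed construction time.

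It then remains to push the bounds on $\kappa_q(\cP)$ and $T_\cP$ through the quantitative parts of \cref{cor:the-big-one} and \cref{fct:good-in-code}. For the alphabet, recall that $q$ is the least prime power satisfying \eqref{eq:cond-for-reas}, namely $\kappa_q(\cP) + \log_q 2 < \eps/4$; substituting $\kappa_q(\cP) = \log_q(\ell+1)^{(L+1)}$ this is equivalent to $q^{\eps/4} > 2(\ell+1)^{(L+1)}$. Using that a power of two (hence a prime power) lies within a factor of two of any real number $>1$, together with $2 \le (\ell+1)^{(L+1)}$ and $\eps < 4$, a short computation shows the least such $q$ satisfies $q \le (\ell+1)^{(L+1)\cdot 8/\eps}$. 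For the degree, applying \eqref{eq:ub-on-d} with locality parameter $L+1$ in place of $L$ and with $T_\cP \le (\ell+1)^{(L+1)}$ gives $d = O\bigl((L+1)^2 (\ell+1)^{3(L+1)}/\eps^5\bigr) = O\bigl(L^2 (\ell+1)^{3(L+1)}/\eps^5\bigr)$, and $Q = q^d$ exactly as in the statement.

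Since the corollary is essentially a repackaging of the general theorem, there is no real conceptual obstacle. The only things needing care are: (i) checking the arithmetic that the minimum prime power $q$ meeting \eqref{eq:cond-for-reas} obeys the stated bound $(\ell+1)^{(L+1)\cdot 8/\eps}$; and (ii) verifying that the hypotheses of \cref{cor:the-big-one} are in force here — in particular $N > T_\cP/\dout$, which is assumed, and the implicit block-length requirement $d \ge 4(L+1)^2/\eps$ coming from \cref{fct:good-in-code}, which is automatic from the displayed value of $d$ once $\eps$ is sufficiently small.
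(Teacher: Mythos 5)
Your proposal is correct and follows essentially the same route as the paper: invoke \cref{clm:list-rec-reas} to establish that $\cP(\rho,\sigma,\ell,L)$ is a reasonable $(L+1)$-LCL property with $\kappa_q(\cP)=\log_q(\ell+1)^{(L+1)}$ and $T_\cP\le(\ell+1)^{(L+1)}$, apply \cref{cor:the-big-one}, translate ``does not satisfy $\cP$'' into list-recoverability, and push the two quantitative bounds through \eqref{eq:cond-for-reas} and \eqref{eq:ub-on-d}. The paper's proof simply plugs $q=(\ell+1)^{(L+1)\cdot 8/\eps}$ into \eqref{eq:cond-for-reas} and checks the inequality directly; you are slightly more careful about the fact that $q$ must actually be a prime power (invoking Bertrand-type density of prime powers), and you explicitly flag the implicit block-length requirement $d\ge 4(L+1)^2/\eps$ from \cref{fct:good-in-code}. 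These are fine details the paper glosses over, but the argument is the same.
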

\begin{proof}
    We set a value for $q$ so that \cref{eq:cond-for-reas} is satisfied.
    For $q=(\ell+1)^{(L+1)\cdot \frac{8}{\eps}}$, we see
    \begin{align*}
        \log_q (\ell+1)^{(L+1)}+ \log_q 2 - \frac{\eps}{4} = \frac{\eps}{8} + \frac{1}{\log_2 q} - \frac{\eps}{4}<0
    \end{align*}
    Therefore, \cref{cor:the-big-one} implies that $\cael$ does not satisfy $\cP(\rho, \sigma, \ell, L)$, and thus, is $(\rho, \sigma, \ell, L)$-average radius list-recoverable with erasures.

    Finally, \cref{eq:ub-on-d} from \cref{cor:the-big-one}, along with the value of $T_{\cP(\rho, \sigma, \ell, L)}$ from \cref{clm:list-rec-reas} implies
    \[
        d=O\inbrak{\frac{L^2 \cdot (\ell+1)^{3(L+1)}}{\eps^5}}.
    \]
    Plugging the value into $Q=q^d$ gives us the final alphabet size.
\end{proof}

We now record several results, corresponding to the special cases defined above.
The proofs for them follow from \cref{cor:list-rec-params}.
The first details parameters for list recovery at capacity.

\begin{corollary}[Explicitly Achieving Capacity for List Recovery]\label{cor:cap-ach-list-rec}
    For a fixed input list size $\ell$, and a fixed radius $\rho>0$, let $L_{\rho, \ell}$ be the smallest output list size such that $R_{\cP(\rho, 0 , \ell, L_{\rho, \ell})} \ge 1-\rho$.
    For $\eps>0$, let $R:=1-\rho-\eps \le R_{\cP(\rho, 0 , \ell, L_{\rho, \ell})}-\eps$.
    The code $\cael$ as in \cref{cor:list-rec-params} is $(1-R-\eps, \ell, L_{\rho, \ell})$-average radius list-recoverable, with rate $\rael > R_{\cP(\rho, 0 , \ell, L_{\rho, \ell})}-2\eps$, and alphabet size at most $\exp \bigl( \inbrak{L_{\rho, \ell}/\eps}^{O(L_{\rho, \ell})} \bigr)$.
\end{corollary}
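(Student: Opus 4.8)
The plan is to obtain this corollary as the erasure-free ($\sigma = 0$) specialization of \cref{cor:list-rec-params}, followed by a routine simplification of the resulting alphabet size. First I would set $L := L_{\rho,\ell}$ and instantiate the property as $\cP := \cP(\rho, 0, \ell, L)$; by \cref{clm:list-rec-reas} this is a reasonable $(L+1)$-LCL property, and by \cref{def:list-rec-eras} specialized to $\sigma = 0$, a code fails to satisfy $\cP$ exactly when it is $(\rho, \ell, L)$-average radius list-recoverable. The defining property of $L_{\rho,\ell}$ gives $R_\cP = R_{\cP(\rho,0,\ell,L_{\rho,\ell})} \ge 1-\rho$, so the choice $R := 1-\rho-\eps$ satisfies $R \le R_\cP - \eps$, which is precisely the hypothesis $\rin \le R_\cP - \eps$ required to apply \cref{cor:list-rec-params}.

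With this in hand, I would invoke \cref{cor:list-rec-params} verbatim, taking $\cin, \cout, G$ as defined in \cref{thm:the-big-one} with $\rout = 1-\eps$ and $\dout \ge \eps^3$. It produces an $\F_q$-linear code $\cael(\cin, \cout, G) \subseteq \F_Q^{N\dd}$ that is $(\rho, 0, \ell, L)$-average radius list-recoverable with erasures --- hence $(\rho, \ell, L)$-average radius list-recoverable --- of rate $\rael > R_\cP - 2\eps$, constructible in time $\poly(N)$. Since $1 - R - \eps = 1 - (1-\rho-\eps) - \eps = \rho$, the code is in particular $(1-R-\eps, \ell, L_{\rho,\ell})$-average radius list-recoverable with rate $\rael > R_{\cP(\rho,0,\ell,L_{\rho,\ell})} - 2\eps$, as claimed.

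For the alphabet size I would combine the two bounds supplied by \cref{cor:list-rec-params}: $q \le (\ell+1)^{(L+1)\cdot 8/\eps}$ and $d = O\inbrak{L^2(\ell+1)^{3(L+1)}/\eps^5}$. Hence $\log Q = d\log q = O\inbrak{L^3 (\ell+1)^{3(L+1)}\log(\ell+1)/\eps^6}$. Using $\ell \le L$ (so $\ell+1 \le 2L$ and $\log(\ell+1) = O(\log L)$), the right-hand side is $2^{O(L\log L)}\cdot\eps^{-O(1)}$, which is bounded above by $(L/\eps)^{O(L)}$; therefore $Q \le \exp\bigl((L_{\rho,\ell}/\eps)^{O(L_{\rho,\ell})}\bigr)$. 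No step here is substantive: the entire mathematical content lives in \cref{thm:the-big-one} and is transported through \cref{cor:the-big-one} and \cref{cor:list-rec-params}. The only points demanding attention are checking that the target rate $R$ sits below the threshold $R_\cP$ (which is exactly what the definition of $L_{\rho,\ell}$ guarantees) and the elementary exponent manipulation converting the $d$ and $q$ bounds into the stated $\exp\bigl((L_{\rho,\ell}/\eps)^{O(L_{\rho,\ell})}\bigr)$ form.
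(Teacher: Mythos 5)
Your proof is correct and takes essentially the same route as the paper, which simply notes that \cref{cor:cap-ach-list-rec} follows from \cref{cor:list-rec-params} without spelling out the details; your write-up supplies exactly the routine specialization ($\sigma = 0$, $L = L_{\rho,\ell}$, $1-R-\eps = \rho$) and the exponent bookkeeping converting the $q$ and $d$ bounds into $\exp\bigl((L_{\rho,\ell}/\eps)^{O(L_{\rho,\ell})}\bigr)$ that the paper leaves implicit.
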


\begin{corollary}[Explicit Zero Error List Recovery]\label{cor:exp-zero-err-list-rec}
    For $\eps>0$ and a fixed input list size $\ell$ and rate $R$, let $L_{R, \ell}$ be the smallest output list size such that $R_{\cP(0, 0, \ell, L_{R, \ell})} -\eps \ge R$.
    Then, the code $\cael$ as in \cref{cor:list-rec-params} is $(\ell, L_{R, \ell})$-zero error list recoverable, with rate $\rael > R_{\cP(0, 0, \ell, L_{R, \ell})} -2\eps$, and alphabet size at most $\exp \bigl( \inbrak{L_{R, \ell}/\eps}^{O(L_{R, \ell})} \bigr)$.
\end{corollary}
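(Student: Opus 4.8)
The plan is to obtain \cref{cor:exp-zero-err-list-rec} as a direct specialization of \cref{cor:list-rec-params} to the zero-error regime $\rho=\sigma=0$. Concretely, I would apply \cref{cor:list-rec-params} with decoding radius $\rho=0$, erasure fraction $\sigma=0$, input list size $\ell$, and output list size $L=L_{R,\ell}$, so that the governing LCL property is $\cP:=\cP(0,0,\ell,L_{R,\ell})$. By the choice of $L_{R,\ell}$ as the smallest output list size with $R_{\cP}-\eps\ge R$, we have $\rin=R_{\cP}-\eps\ge R\ge 0$, so the hypothesis $\rin\le R_{\cP}-\eps$ of \cref{thm:the-big-one} (and hence of \cref{cor:list-rec-params}) is met and $0<\eps\le R_{\cP}$. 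Moreover \cref{clm:list-rec-reas} guarantees that $\cP$ is a reasonable $(L_{R,\ell}+1)$-LCL property, so \cref{cor:list-rec-params} applies and yields an $\F_q$-linear code $\cael(\cin,\cout,G)\subseteq\F_Q^{N\dd}$ that is $(0,0,\ell,L_{R,\ell})$-average radius list-recoverable with erasures, has rate $\rael>R_{\cP}-2\eps$, and is constructible in time $\poly(N)$.

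Next I would unwind the chain of definitions to rename this guarantee as zero-error list recovery. By the definition of list recoverability, being $(0,0,\ell,L_{R,\ell})$-average radius list-recoverable with erasures coincides with being $(0,\ell,L_{R,\ell})$-average radius list-recoverable, which by the definition of zero-error list recoverability is precisely $(\ell,L_{R,\ell})$-zero error list recoverable. This gives the asserted recovery property together with the rate bound $\rael>R_{\cP(0,0,\ell,L_{R,\ell})}-2\eps$ and the claimed construction time.

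Finally, for the alphabet size I would substitute $\rho=\sigma=0$ and $L=L_{R,\ell}$ into the bounds supplied by \cref{cor:list-rec-params}, namely $q\le(\ell+1)^{(L+1)\cdot 8/\eps}$ and $d=O\!\bigl(L^2(\ell+1)^{3(L+1)}/\eps^5\bigr)$, and then use $\ell\le L=L_{R,\ell}$ to collapse the dependence on $\ell$, via $(\ell+1)^{L+1}\le(L_{R,\ell}+1)^{L_{R,\ell}+1}=\exp\!\bigl(O(L_{R,\ell}\log L_{R,\ell})\bigr)$. This yields $\log_2 q=(L_{R,\ell}/\eps)^{O(1)}$ and $d=(L_{R,\ell}/\eps)^{O(L_{R,\ell})}$, hence $\log_2 Q=d\log_2 q=(L_{R,\ell}/\eps)^{O(L_{R,\ell})}$ and $Q\le\exp\!\bigl((L_{R,\ell}/\eps)^{O(L_{R,\ell})}\bigr)$, as required. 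There is no genuine obstacle here: the content is pure bookkeeping. The only points that need care are (i) checking the definitional reduction that turns the radius-zero, erasure-free case of \cref{def:list-rec-eras} into zero-error list recovery, and (ii) absorbing the two parameters $\ell$ and $L_{R,\ell}$ of the alphabet bound of \cref{cor:list-rec-params} into the single expression $\exp((L_{R,\ell}/\eps)^{O(L_{R,\ell})})$ using $\ell\le L_{R,\ell}$. One should also note, as the statement implicitly assumes, that $L_{R,\ell}$ exists — i.e.\ that some finite output list size attains threshold rate at least $R+\eps$ at input list size $\ell$ — which is a capacity-type fact about zero-error list recovery rather than anything proved here.
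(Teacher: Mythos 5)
Your proposal is correct and takes exactly the route the paper indicates: specializing \cref{cor:list-rec-params} to $\rho=\sigma=0$, chasing the definitions back to zero-error list recovery, and absorbing $\ell\le L_{R,\ell}$ into the alphabet bound. The paper itself offers no separate argument for this corollary beyond the line that it ``follows from \cref{cor:list-rec-params},'' so your bookkeeping supplies the details the paper leaves implicit.
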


\begin{corollary}[Explicit List Recovery from Erasures]\label{cor:exp-eras-list-rec}
    For constants $\eps>0$, $0 \le \sigma \le 1$, a fixed input list size $\ell$ and rate $R$, let $L_{\sigma, R, \ell}$ be the smallest output list size such that $R_{\cP(0, \sigma, \ell, L_{\sigma, R, \ell})} -\eps \ge R$.
    Then, the code $\cael$ as in \cref{cor:list-rec-params} is $(\sigma, \ell, L_{\sigma, R, \ell})$-erasure list recoverable, with rate $\rael > R_{\cP(0, \sigma, \ell, L_{\sigma, R, \ell})} -2\eps$, and alphabet size at most $\exp \bigl( \inbrak{L_{\sigma, R, \ell}/\eps}^{O(L_{\sigma, R, \ell})} \bigr)$.
\end{corollary}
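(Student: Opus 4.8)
The plan is to obtain the statement as a direct specialization of \cref{cor:list-rec-params} to the radius $\rho = 0$, followed by unwinding the chain of definitions that connects list recovery with erasures to erasure list recovery. Write $\cP := \cP(0, \sigma, \ell, L_{\sigma, R, \ell})$. By the defining property of $L_{\sigma,R,\ell}$ we have $R_\cP - \eps \ge R$, so the standing hypothesis $\rin = R_\cP - \eps$ of \cref{thm:the-big-one} (and hence of \cref{cor:list-rec-params}) is consistent with the target rate. Applying \cref{cor:list-rec-params} with $\rho = 0$ and output list size $L_{\sigma,R,\ell}$ produces an $\F_q$-linear code $\cael = \cael(\cin, \cout, G) \subseteq \F_Q^{N\dd}$, constructible in time $\poly(N)$, which is $(0, \sigma, \ell, L_{\sigma,R,\ell})$-average radius list-recoverable with erasures and has rate $\rael > R_\cP - 2\eps$. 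This mirrors the proofs of \cref{cor:cap-ach-list-rec} and \cref{cor:exp-zero-err-list-rec}, specialized to the erasure-only regime.

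It then remains only to unwind definitions. By the final assertion of \cref{def:list-rec-eras}, a code that is $(0,\sigma,\ell,L_{\sigma,R,\ell})$-average radius list-recoverable with erasures is in particular $(0,\sigma,\ell,L_{\sigma,R,\ell})$-list-recoverable with erasures; and by the definition of list recovery from erasures (which is exactly the $\rho = 0$ instance of list recovery with erasures) this is precisely what it means for $\cael$ to be $(\sigma, \ell, L_{\sigma,R,\ell})$-erasure list-recoverable. The rate bound $\rael > R_\cP - 2\eps = R_{\cP(0,\sigma,\ell,L_{\sigma,R,\ell})} - 2\eps$ is inherited verbatim from \cref{cor:list-rec-params} (and in particular exceeds $R - \eps$), and $\F_q$-linearity and $\poly(N)$-time constructibility are likewise carried over.

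For the alphabet size, \cref{cor:list-rec-params} gives $Q = q^d$ with $q \le (\ell+1)^{(L+1)\cdot 8/\eps}$ and $d = O\big(L^2 (\ell+1)^{3(L+1)}/\eps^5\big)$, where $L = L_{\sigma,R,\ell}$. Using $\ell \le L$ (required for the notion to make sense), I would bound $\log_2 Q = d \log_2 q = O\big(L^3 (L+1)^{3(L+1)}\log(L+1)/\eps^6\big)$; taking logarithms once more, $\log_2\log_2 Q = O\big(L\log L + L\log(1/\eps)\big) = O\big(L\log(L/\eps)\big)$, hence $Q \le \exp\big((L/\eps)^{O(L)}\big) = \exp\big((L_{\sigma,R,\ell}/\eps)^{O(L_{\sigma,R,\ell})}\big)$, as claimed.

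There is no genuine obstacle here — this is a routine corollary — but two points deserve a line of care. First, one should observe that $L_{\sigma,R,\ell}$ is well-defined: since a code that is $(\sigma,\ell,L')$-erasure list-recoverable is also $(\sigma,\ell,L)$-erasure list-recoverable for every $L \le L'$, the containments $\cP(0,\sigma,\ell,L') \subseteq \cP(0,\sigma,\ell,L)$ hold, so the threshold rates $R_{\cP(0,\sigma,\ell,L)}$ are non-decreasing in $L$ and bounded by $1$, hence convergent; the hypothesis implicitly assumes $R$ lies below this limit, otherwise the statement is vacuous. Second, the only mildly fiddly step is the bookkeeping that collapses the two-level expression $q^d$ into the advertised single-exponential bound, which uses nothing beyond $\ell \le L$ and crude estimates on $(L+1)^{3(L+1)}$.
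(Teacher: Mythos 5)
Your proposal is correct and follows exactly the route the paper intends: the paper states that this corollary "follow[s] from \cref{cor:list-rec-params}," and you specialize that corollary to $\rho = 0$, unwind the definitional chain from average-radius list recovery with erasures down to erasure list recovery, and verify the alphabet-size bound by combining $q \le (\ell+1)^{(L+1)\cdot 8/\eps}$ with $d = O(L^2(\ell+1)^{3(L+1)}/\eps^5)$ using $\ell \le L$. Your additional remarks on the well-definedness of $L_{\sigma,R,\ell}$ via monotonicity of the threshold rate in $L$ are sound and supply detail the paper leaves implicit.
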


\begin{corollary}[Explicit Perfect Hash matrices]\label{cor:exp-per-hash-mat}
    For an integer $t \ge 2$, let $\cael$ be the code as in \cref{cor:list-rec-params} for the property $\cP(0, 0, t-1, t-1)$.
    Then, $\cael$ is a code of rate $R=R_{\cP(0, 0, t-1, t-1)}-2\eps$ that is $(0, t-1, t-1)$-list recoverable.
    The alphabet size is at most $\exp((t/\eps)^{O(t)})$ and moreover, the codewords of $\cael$, when arranged as columns in a $N\dd \times Q^{R N}$ matrix, form a perfect hash matrix.
\end{corollary}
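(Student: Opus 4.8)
The plan is to obtain \cref{cor:exp-per-hash-mat} as a direct specialization of \cref{cor:list-rec-params} to the list-recovery quadruple $(\rho,\sigma,\ell,L)=(0,0,\,t-1,\,t-1)$, followed by reading off the perfect-hash conclusion from the definition of a perfect hash matrix. I would keep the construction and all of its explicitness, rate, and alphabet bookkeeping black-boxed inside the corollaries already proved.

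First I would verify, via \cref{clm:list-rec-reas}, that $\cP(0,0,t-1,t-1)$ is a \emph{reasonable} LCL property. Since $\cP(\rho,\sigma,\ell,L)$ has locality parameter $L+1$, here we are dealing with a $t$-LCL property, with $\kappa_q\bigl(\cP(0,0,t-1,t-1)\bigr)=\log_q t^{\,t}$ and $T_{\cP(0,0,t-1,t-1)}\le t^{\,t}$. Having confirmed reasonableness, I would apply \cref{cor:list-rec-params} to this property, with the components $\cin$ (the brute-force inner code of \cref{fct:good-in-code}), $\cout$ (an explicit rate-$(1-\eps)$, distance-$\eps^3$ code, e.g.\ a Tanner code as in \cref{cor:the-big-one}), and the expander $G$ of \cref{clm:kmrs} as prescribed there. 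This yields an $\F_q$-linear code $\cael\subseteq\F_Q^{N}$ that is $(0,0,t-1,t-1)$-average radius list-recoverable with erasures, has rate $\rael>R_{\cP(0,0,t-1,t-1)}-2\eps$, and is constructible in time $\poly(N)$.

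Next I would unwind the definitions: a code that is $(0,0,t-1,t-1)$-average radius list-recoverable with erasures is, by \cref{def:list-rec-eras} and the definitions that follow it, $(0,t-1,t-1)$-list-recoverable, which is precisely the defining property for $\cael$ to be an $\bigl(N,\,Q^{\rael N},\,t\bigr)$-perfect hash matrix. Concretely, listing the $Q^{\rael N}$ codewords of $\cael$ as the columns of an $N\times Q^{\rael N}$ matrix over $\F_Q$ gives a perfect hash matrix: a set of $t$ distinct columns with no row separating all of them is exactly a set of $t$ distinct codewords whose coordinatewise value-sets all have size at most $t-1$, i.e.\ a failure of $(0,t-1,t-1)$-list recovery, which $\cael$ avoids. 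For the alphabet size I would substitute $\ell=L=t-1$ into the bounds of \cref{cor:list-rec-params}, obtaining $q\le t^{\,8t/\eps}$ and $d=O\!\bigl(t^{\,3t+2}/\eps^{5}\bigr)$, hence $\log_2 Q=d\log_2 q=O\!\bigl(t^{\,3t+3}\log t/\eps^{6}\bigr)$, which is at most $(t/\eps)^{O(t)}$ for a sufficiently generous constant; thus $Q=\exp\!\bigl((t/\eps)^{O(t)}\bigr)$.

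There is essentially no obstacle here, since every nontrivial ingredient is already packaged in \cref{cor:list-rec-params} and \cref{clm:list-rec-reas}. The only points needing care are (i) quoting \cref{clm:list-rec-reas} with the correct locality $t$ (not $t-1$) and the resulting $\kappa_q$ and $T_\cP$ values, and (ii) the routine arithmetic collapsing the $q^{d}$ alphabet bound into the stated $\exp\!\bigl((t/\eps)^{O(t)}\bigr)$ form. (Recovering the cleaner statement of the informal \cref{thm:inf-exp-per-hash-mat}, with rate $\frac{1}{t-1}-\eps$, would additionally require the classical computation of the perfect-hashing threshold rate $R_{\cP(0,0,t-1,t-1)}$ of random linear codes, which is a separate matter orthogonal to this corollary.)
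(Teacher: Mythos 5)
Your proof is correct and takes essentially the same route as the paper's: specialize \cref{cor:list-rec-params} to $(\rho,\sigma,\ell,L)=(0,0,t-1,t-1)$, chain through the definitions to get $(0,t-1,t-1)$-list-recoverability, observe that this is exactly the perfect-hash property by \cref{def:list-rec-eras} and the surrounding definitions, and read the alphabet bound off the $q$ and $d$ formulas. The paper's own proof is terser (it does not restate the $\kappa_q$, $T_\cP$, or $q,d$ arithmetic, leaving those implicit in the citation of \cref{cor:list-rec-params}), but the content is identical; your explicit verification of reasonableness and the $\log_2 Q = O(t^{3t+3}\log t/\eps^6) \le (t/\eps)^{O(t)}$ calculation is correct bookkeeping that the paper elides.
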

\begin{proof}
    The list-recoverability of $\cael$ follows from \cref{cor:list-rec-params}.
    We prove that the set of codewords of a $(0, t-1, t-1)$-list-recoverable code can be arranged as columns in a matrix to form a perfect hash matrix.
    Indeed, $(0, t-1, t-1)$-list recoverability implies that the number of pairwise distinct codewords having at most $t-1$ distinct entries in every row is at most $t-1$. This implies that for any set of $t$ codewords, there exists at least one index on which the $t$ codewords have pairwise distinct entries.
\end{proof}

\subsection{Efficient Decoding Algorithms}

We remark our derandomization of linear codes admit efficient (list) decoding algorithms via the recently developed efficient (list) decoding algorithms for AEL~\cite{JMST25,JS25,ST25}. There are two kinds of such decoders: one based on weak-regularity lemmas~\cite{JS25,ST25}, and another based on the Sum-of-Squares SDP hierarchy~\cite{JMST25}. The former runs in near-linear time in the block length, whereas the latter has a larger polynomial running time. The crucial property needed by these AEL decoders is that the expander graph used in the construction be sufficiently expanding, namely, sufficiently small second largest singular value $\lambda$ of their (bi)adjacency matrix.

For the sake of completeness, we recall the main decoding results of~\cite{JS25} and~\cite{ST25} and show how they can be applied to our derandomization of linear codes via the  AEL procedure to yield near-liner time list decoding and list recovery algorithms. We start with the list-decoding 
case.

\begin{theorem}[Implicit version of Theorem 4.1 from~\cite{JS25}]    
    Let $\cin$ be a $(\delzero, L_{\textnormal{in}})$-list-decodable code with block length $d$ and $C_{out}$ be uniquely decodable from $\delone$ fraction of errors, having block length $N$.
    Then the \text{AEL} code $\cael=C(\cin, \cout, G)$ is $(\rho-\varepsilon, L)$-list-decodable for any constant $0<\varepsilon<\rho$, if $\tfrac{\lambda}{d} < O\left(\tfrac{\varepsilon^2\delone^2}{L_{in}^4}\right)$, where $L$ is an upper bound on the list size of $\cael$.
    Furthermore, if $C_{out}$ is uniquely decodable from $\delone$ fraction of errors in time $\mathcal{T}$, then the decoding can be performed in $\widetilde{\cal{O}}_{\varepsilon,\ell}(N+\mathcal{T})$ time.
\end{theorem}

Let $\cP$ be an $(L+1)$-LCL property that is the complement of being list-decodable at the $\eps$-relaxed Generalized Singleton Bound.
\cref{lem:good-in-code-brute-force} guarantees that a code $\cin$ satisfying the complement of $\cP$ can be found in time that is constant in the block length $N$.
For $C_{out}$, we choose the same code family from~\cite{JS25}, dating back to~\cite{GI05}, that has linear time decoders. As a consequence, we obtain near-linear time list-decodable codes up to the $\varepsilon$-relaxed singled codes.
Moreover, by \cref{cor:combine-lcl-prop}, $\cael$ can also satisfy the complement of any number of additional LCL properties.

We now turn our attention to efficient list-recovery.

\begin{theorem}[Implicit version of Theorem 1.3 from~\cite{ST25}] 
    Let $\cin$ be a $(\rho, \ell, L_{\textnormal{in}})$-list-recoverable code with block length $d$ and $\cout$ be a code with block length $N$ that is uniquely decodable from $\delone$ fraction of errors.
    Then the \text{AEL} code $\cael=C(\cin, \cout, G)$ is $(\rho-\varepsilon, \ell, L)$-list-recoverable for any constant $0<\varepsilon<\rho$, if $\tfrac{\lambda}{d} < O\left(\tfrac{\varepsilon^2\delone^2}{(\ell \cdot L_{\textnormal{in}})^2}\right)$, where $L$ is an upper bound on the output list size of $\cael$. Furthermore, if $\cout$ is uniquely decodable from $\delone$ fraction of errors in time $\mathcal{T}$, then the decoding can be performed in $\widetilde{\cal{O}}_{\varepsilon,\ell,L_{in}}(n+\mathcal{T})$ time.
\end{theorem}

Similar to the above theorem, we can take $\cin$ to be a code guaranteed by \cref{lem:good-in-code-brute-force}, that achieves the same list-size tradeoffs as random linear codes. The choice of $\cout$ is the same as above.
As a consequence, we obtain near-linear time list-recoverable codes up to the optimal~\cite{BCDZ25b} bound.
Moreover, by \cref{cor:combine-lcl-prop}, $\cael$ can also satisfy the complement of any number of additional LCL properties.

\section{Acknowledgements}
We thank Jonathan Mosheiff for providing feedback and comments on a draft of the paper, and also thank Yeyuan Chen for providing comments on an earlier version of the paper.

\bibliographystyle{alpha}
\bibliography{macros,references}


\end{document}